\begin{document}


\title{The Power of Local Information in Social Networks}

\date{}
\author{Christian Borgs \thanks{Microsoft Research New England. Email: borgs@microsoft.com.}
\and  Michael Brautbar\thanks{Department of Computer and Information Science, University of Pennsylvania. Email: brautbar@cis.upenn.edu.} \and  Jennifer Chayes\thanks{Microsoft Research New England. Email: jchayes@microsoft.com.} 
\and  Sanjeev Khanna \thanks{Department of Computer and Information Science, University of Pennsylvania. Email: sanjeev@cis.upenn.edu.}
\and Brendan Lucier\thanks{Microsoft Research New England. Email: brlucier@microsoft.com.}
}

\maketitle


\newcommand{\graph}           {\ensuremath{{\cal G}} }
\newcommand{\othergraph}      {\ensuremath{{\cal H}} }
\newcommand{\util}[1]         {\text{\it utility}(#1)}
\newcommand{\idealized}[2]    {{\text{\it ideal}}_{#1}(#2)}
\newcommand{\totaldeg}[1]     {\text{\it deg}(#1)}
\newcommand{\indeg}[2]        {\text{\it in-deg}_{#2}(#1)}
\newcommand{\outdeg}[1]       {\text{\it out-deg}(#1)}
\newcommand{\inset}[1]        {\ensuremath{\text{\it I}_{#1}}}
\newcommand{\outset}[1]       {\ensuremath{\text{\it O}_{#1}}}
\newcommand{\totaltriang}[1]  {\ensuremath{\Delta(#1)}}
\newcommand{\intriang}[1]     {\ensuremath{\Delta_{I}(#1)}}

\newtheorem{theorem}{Theorem}[section]
\newtheorem{corollary}[theorem]{Corollary}
\newtheorem{lemma}[theorem]{Lemma}
\newtheorem{claim}[theorem]{Claim}
\newtheorem{fact}[theorem]{Fact}
\newtheorem{proposition}[theorem]{Proposition}
\newtheorem{conjecture}{Conjecture}
\newtheorem{property}{Property}
\newtheorem{observation}[theorem]{Observation}
\newtheorem{remark}{Remark}
\newtheorem{definition}{Definition}

\newenvironment{proofof}[1]{\begin{proof}[Proof of #1]}{\end{proof}}

\floatstyle{ruled}
\newfloat{algorithm}{tbp}{loa}
\floatname{algorithm}{Algorithm}

\newif\iftechnicalreport
\technicalreportfalse
\technicalreporttrue

\newcommand{\ignore}[1]{}
\newcommand{\comment}[1]{{\bf\em$<$#1$>$}}
\newcommand{\brendan}[1]{{\color{blue}{BJL: #1}}}
\newcommand{\mickey}[1]{{\color{red}{MB: #1}}}

\newcommand{\be}{\begin{equation}}
\newcommand{\ee}{\end{equation}}
\newcommand{\req}[1]{(\ref{#1})}
\newcommand{\argmin}{\mathop{\rm argmin}}
\newcommand{\vr}[1]{{\mathbf{#1}}}
\newcommand{\bydef}{\stackrel{\bigtriangleup}{=}}
\newcommand{\eps}{\epsilon}
\newcommand{\argmax}{\mathop{\rm argmax}}
\newcommand{\mm}[1]{\mathrm{#1}}
\newcommand{\mc}[1]{\mathcal{#1}}

\def \ONE {{ \mathbf 1 }}
\def \EE   {{\mathbb E}}
\def \E   {{\mathbb E}}
\def \Prob   {{\mathbb P}}
\def \OPT {\mathcal{OPT}}
\def \vf  {\textrm{vf}}
\def \dvf {\varphi}
\def \utility {u}



\begin{abstract}
We study the power of \textit{local information algorithms} for optimization problems on social and technological networks.  We focus on sequential algorithms where the network topology is initially unknown and is revealed only within a local neighborhood of vertices that have been irrevocably added to the output set.
This framework models the behavior 
of an external agent that does not have direct access to the network data,
such as a user interacting with an online social network.

We study a range of problems under this model of algorithms with local information.  
When the underlying graph is a preferential attachment network, we show that 
one can find the root (i.e.\ initial node) in a polylogarithmic number of steps, using 
a local algorithm that repeatedly queries the visible node of maximum degree.  
This addresses an open question of Bollob{\'a}s and Riordan.
This result is motivated by its implications: 
we obtain polylogarithmic approximations to problems such as finding the smallest subgraph that connects a subset of nodes, finding the highest-degree nodes, and finding a subgraph that maximizes vertex coverage per subgraph size.

Motivated by problems faced by recruiters in online networks, 
we also consider network coverage problems on arbitrary graphs.
We demonstrate a sharp threshold on the level of visibility required: at a certain visibility level it is possible to design algorithms that nearly match the best approximation possible even with full access to the graph structure, but with any less information it is impossible to achieve a non-trivial approximation.
We conclude that a network provider's decision of how much structure to make visible to its users can have a significant effect on a user's ability to interact strategically with the network.

\end{abstract}

\section{Introduction}

In the past decade there has been a surge of interest in the nature of complex networks that arise in social and technological contexts; see \cite{EK10} for a recent survey of the topic.
In the computer science community, this attention has been directed largely towards algorithmic issues, such as the extent to which network structure can be leveraged into efficient methods for solving complex tasks.  Common problems include finding influential individuals, detecting communities, constructing subgraphs with desirable connectivity properties, and so on.  

The standard paradigm in these settings is that an algorithm
has full access to the network graph structure.  
More recently there has been growing interest in \emph{local} algorithms, 
in which decisions are based upon local rather than global network structure.  This locality of computation has been motivated by applications to distributed algorithms \cite{NaorS93,GiakkoupisS12}, improved runtime efficiency \cite{FaloutsosMT04,SpielmanTeng}, and property testing \cite{HassidimKNO09,RubSha12}.  In this work we consider a different motivation: 
in some circumstances, an optimization is performed by an external user who has inherently restricted visibility of the network topology.

For such a user, the graph structure is revealed incrementally within a local neighborhood of nodes for which a connection cost has been paid.  The use of local algorithms in this setting is necessitated by constraints on network visibility, rather than being a means toward an end goal of efficiency or parallelizability.

As one motivating example, consider an agent in a social network who wishes to find (and link to) a highly connected individual.  For instance, this agent may be a newcomer to a community 
(such as an online gaming or niche-based community) wanting to interact with influential or popular individuals.
Finding a high-degree node is a straightforward algorithmic problem without information constraints, but many online and real-world social networks reveal graph structure only within one or two hops from a user's existing connections.  

In another motivating example, consider the problem faced by a recruiter attempting to build capital in the form of a network of contacts.  Employment-focused social network applications such as LinkedIn emphasize the importance of having many nodes within few hops of one's neighborhood of connections.  
For instance, a user can browse and perform searchers over individuals that are within network distance $2$ of themselves (i.e.\ friends of friends).  Since it is socially and timely costly to create connections, a recruiter is motivated to befriend only a small number of individuals that assist in covering as much of the network as possible.
Is it possible for an agent to solve such a problem given information only about local graph structure, i.e. what is revealed on an online networking site? This question is relevant not only for individual users, but also to the designer of a social networking service who must decide how much topological information to reveal. 


More generally, we consider graph algorithms in a setting of restricted network visibility.  We focus on optimization problems for which the goal is to return a subset of the nodes in the network; this includes coverage, connectivity, and search problems.  An algorithm in our framework proceeds by incrementally and adaptively building an output set of nodes, corresponding to those vertices of the graph that have been queried (or connected to) so far.  When the algorithm has queried a set $S$ of nodes, the structure of the graph within a small radius of $S$ is revealed, guiding future queries.  The principle challenge in designing such an algorithm is that decisions must be based solely on local information, whereas the problem to be solved may depend on the global structure of the graph.
In addition to these restrictions, we ask for algorithms that run in polynomial time.

For some node coverage problems, such as the recruitment problem described above (i.e.\ finding a minimum dominating set), we show how to design local information algorithms for arbitrary networks whose performances approximately match the best possible even when information about network structure is unrestricted.  We also demonstrate that the amount of local information available is of critical importance: strong positive results are possible at a certain range of visibility (made explicit below), but non-trivial algorithms become impossible when less information is made available.  This observation has implications for the design of online networks, such as the amount of information to provide a user about the local topology: seemingly arbitrary design decisions may have a significant impact on a user's ability to interact with the network. 

For other problems, such as finding and linking to the most highly connected individual in a social network, we derive strong lower bounds on the performance of local algorithms for general graphs.  
We therefore turn to the class of preferential attachment (PA) graphs, which model 
properties of many real-world social and technological networks. 

For PA networks, we prove that local information algorithms do 
surprisingly well at many optimization problems, including finding the $k$ vertices of highest degree and shortest path routing (up to polylogarithmic factors).  The structure of social networks can therefore play an important role in determining the feasibility of solving an optimization problem under local information constraints.

\paragraph{Results and Techniques}

Our first set of results concerns algorithms for preferential attachment (PA) networks.  Such networks are defined by a process by which nodes are added sequentially and form random connections to existing nodes, where the probability of connecting to a node is proportional to its degree.  

We first consider the problem of finding the root (first) node in a PA network.  A random walk would encounter the root in $\tilde{O}(\sqrt{n})$ steps (where $n$ is the number of nodes in the network).  The question of whether a better local information algorithm exists for this problem was posed by Bollob{\'a}s and Riordan \cite{BR-01}, who state that ``an interesting question is whether a short path between two given vertices can be constructed quickly using only `local' information" \cite{BR-01}.  
They conjecture that such short paths can be found locally in $\Theta(\log n)$ steps.  We make the first progress towards this conjecture by showing that polylogarithmic time is sufficient: there is an algorithm that finds the root of a PA network in $O(\log^4(n))$ time, with high probability.  We show how to use this algorithm to obtain polylogarithmic approximations for finding the smallest subgraph that connects a subset of nodes (including shortest path), finding the highest-degree nodes, and finding a subgraph that maximizes vertex coverage per subgraph size.
 
The local information algorithm we propose uses a natural greedy approach: at each step, it queries the visible node with highest degree.  Demonstrating that such an algorithm reaches the root in $O(\log^4(n))$ steps requires a probabilistic analysis of the PA process.  A natural intuition is that the greedy algorithm will find nodes of ever higher degrees over time.  However, such progress is impeded by the presence of high-degree nodes with only low-degree neighbors.  We show that these bottlenecks are infrequent enough that they do not significantly hamper the algorithm's progress.  To this end, we derive a connection between node degree correlations and supercritical branching processes to prove that a path of high-degree vertices leading to the root is always available to the algorithm. 

Motivated by job recruiting in networks, we then explore local information algorithms for dominating set and coverage problems on general graphs.  
%
A dominating set is a set $S$ such that each node in the network is either in $S$ or the neighborhood of $S$.  We design a randomized local information algorithm for the minimum dominating set problem that achieves an approximation ratio that nearly matches the lower bound on polytime algorithms with no information restriction.  As has been noted in \cite{GK-98}, the greedy algorithm that repeatedly selects the visible node that maximizes the size of the dominated set can achieve a very bad approximation factor.  We consider a modification of the greedy algorithm: after each greedy addition of a new node $v$, the algorithm will also add a random neighbor of $v$.  We show that this randomized algorithm obtains an approximation factor that matches the known lower bound of $\Omega(\log \Delta)$ (where $\Delta$ is the maximum degree in the network) up to a constant factor.  We also show that having enough local information to choose the node that maximizes the incremental benefit to the dominating set size is crucial: any algorithm that can see only the degrees of the neighbors of $S$ would achieve a poor approximation factor of $\Omega(n)$ .

Finally, we extend these results to related coverage problems.  For the partial dominating set problem (where the goal is to cover a given constant fraction of the network with as few nodes as possible) we give an impossibility result: no local information algorithm can obtain an approximation better than $O(\sqrt{n})$ on networks with $n$ nodes.  However, a slight modification to the local information algorithm for minimum dominating set yields a a bicriteria result: given $\epsilon > 0$, we compare the performance of an algorithm that covers $\rho n$ nodes with the optimal solution that covers $\rho (1+\epsilon) n$ nodes (assuming $\rho(1+\epsilon) \leq 1$).  Our modified algorithm achieves a $O((\rho\epsilon)^{-1}H(\Delta))$ approximation.

 (in which we compare performance against an adversary who must cover an additional $\epsilon$ fraction of the network).  We also consider the ``neighbor-collecting'' problem, in which the goal is to minimize $c|S|$ plus the number of nodes left undominated by $S$, for a given parameter $c$.  For this problem we show that the minimum dominating set algorithm yields an $O(c\log \Delta)$ approximation (where $\Delta$ is the maximum degree in the network), and that the dependence on $c$ is unavoidable.

\paragraph{Related Work}
Over the last decade there has been a substantial body of work on understanding the 
power of sublinear-time approximations.
In the context of graphs, the goal is to understand 
how well one can approximate graph properties in a sublinear number of queries. 
See~\cite{RubSha12} and~\cite{Goldreich10c} for recent surveys. 
In the context of social networks a recent work has suggested 
the Jump and Crawl model, where algorithms have no direct access to the network but can either sample a node uniformly (Jump) or access a neighbor of a previously discovered node (a Crawl)~\cite{BK-10}.  Local information algorithms can be thought of as generalizing the Jump and Crawl query framework to include an informational dimension. A Crawl query will now return any node in the local neighborhood of nodes seen so far while Jump queries would allow access to unexplored regions of the network.

Motivated by distributed computation, a notion of local computation was formalized by~\cite{RubinfeldTVX11} and further developed in~\cite{AlonRVX12}.
The goal of a local computation algorithm is to compute only certain specified bits of a global solution. 
In contrast, our notion of locality is motivated by informational rather than computational constraints imposed upon a sequential algorithm. As a result, the informational dimension of visibility tends not to play a role in the analysis of local computation algorithms. In contrast, the main issue in designing a local information algorithm is to find a good tradeoff between the amount of local information the algorithm is allowed to see to the number of queries it needs to make in order to solve a network optimization problem. 


Local algorithms motivated by efficient computation, rather than informational constraints, were explored by~\cite{AndersenCL06,SpielmanTeng}. These works explore local approximation of graph partitions to efficiently find a global solution.  
In particular, they explore the ability to find a cluster containing a given vertex by querying only close-by nodes.

Preferential attachment (PA) networks were suggested by~\cite{BA-99} as a model for large social networks.
There has been much work studying the properties of such networks, such as degree distribution \cite{BR2-01} and diameter \cite{BR-01}; see~\cite{Bollo03} for a short survey.
The problem of finding high degree nodes, using only uniform sampling and local neighbor queries, is explored in \cite{BK-10}. The question of whether a polylogarithmic time Jump and Crawl algorithm exists for finding a high degree node in preferential attachment graphs was left open therein. 

The low diameter of PA graphs can be used to implement distributed algorithms in which nodes repeatedly broadcast information to their neighbors
\cite{GiakkoupisS12,DFF-11}. A recent work~\cite{DFF-11} showed that such algorithms can be used for fast rumor spreading. 
Our results on the ability to find short paths in such graphs differs in that our algorithms are sequential, with a small number of queries, rather than applying broadcast techniques.

The ability to quickly find short paths in social networks has been the focus of much study, especially in the context of small-world graphs \cite{Kleinberg00,GiakkoupisS11}.
It is known that local routing using short paths is possible in such models, 
given some awareness of global network structure (such as coordinates in an underlying grid).

In contrast, our shortest-path algorithm for PA graphs
does not require an individual know the graph structure beyond the degrees of his neighbors.
However, our result requires that routing can be done from both endpoints; in other words, both nodes are trying to find each other.

For the minimum dominating set problem, Guha and Khuller \cite{GK-98} designed a local $O(\log \Delta)$ approximation algorithm (where $\Delta$ is the maximum degree in the network).  
As a local information algorithm, their method requires that the network structure is revealed up to distance two from the current dominating set.  By contrast, our local information algorithm 
requires less information to be revealed on each step.
Our focus, and the motivation behind this distinction, is to determine sharp bounds on the amount of local information required to approximate this problem (and others) effectively.

\section{Model and Preliminaries}
\label{section.Model}

\paragraph{Graph Notation}
We write $G = (V,E)$ for an undirected graph with node and edge sets $V$ and $E$, respectively.  We write $n_G$ for the number of nodes in $G$, $d_G(v)$ for the degree of a vertex $v$ in $G$, and $N_G(v)$ for the set of neighbors of $v$.  Given a subset of vertices $S \subseteq V$, $N_G(S)$ is the set of nodes adjacent to at least one node in $S$.  We also write $D_G(S)$ for the set of nodes \emph{dominated} by $S$: $D_G(S) = N_G(S) \cup S$.  We say $S$ is a \emph{dominating set} if $D_G(S) = V$.  Given nodes $u$ and $v$, the distance between $u$ and $v$ is the number of edges in the shortest path between $u$ and $v$.  The distance between vertex sets $S$ and $T$ is the minimum distance between a node in $S$ and a node in $T$.  Given a subset $S$ of nodes of $G$, the subgraph induced by $S$ is the subgraph consisting of $S$ and every edge with both endpoints in $S$.  Finally, $\Delta_G$ is the maximum degree in $G$.  In all of the above notation we often suppress the dependency on $G$ when clear from context.

\paragraph{Algorithmic Framework}
We focus on graph optimization problems in which the goal is to return a minimal-cost\footnote{In most of the problems we consider, the cost of set $S$ will simply be $|S|$.} set of vertices $S$ satisfying a feasibility constraint.  
We will consider a class of algorithms that build $S$ incrementally under local information constraints.  
We begin with a definition of distance to sets and a definition local neighborhoods.

\begin{definition}
The distance of $v$ from a set $S$ of nodes in a graph $G$ is the minimum, over all nodes $u \in S$, of the shortest path length from $v$ to $u$ in $G$.
\end{definition}

\begin{definition}[Local Neighborhood]
Given a set of nodes $S$ in the graph $G$, the \emph{$r$-closed neighborhood around $S$} is 
the induced subgraph of $G$ containing all nodes at distance less than or equal to $r$ from $S$, plus the degree of each node at distance $r$ from $S$. The \emph{$r$-open neighborhood around $S$} is the $r$-closed neighborhood around $S$, after the removal of all edges between nodes at distance exactly $r$ from $S$.
\end{definition} 

\begin{definition}[\textbf{Local Information Algorithm}]
Let $G$ be an undirected graph unknown to the algorithm, where each vertex is assigned a unique identifier.  
For integer $r \ge 1$, a (possibly randomized) algorithm is an \textit{$r$-local algorithm} if: 
\begin{enumerate}
\item The algorithm proceeds sequentially, growing step-by-step a set $S$ of nodes, where $S$ is initialized either to $\emptyset$ or to some seed node.
\item
Given that the algorithm has queried a set $S$ of nodes so far, it can only observe the $r$-open neighborhood around $S$.
\item
On each step, the algorithm can add a node to $S$ either by selecting a specified vertex from the $r$-open neighborhood around $S$ (a \emph{crawl}) or by selecting a vertex chosen uniformly at random from all graph nodes (a \emph{jump}).  
\item In its last step the algorithm returns the set $S$ as its output.
\end{enumerate}
\end{definition}

Similarly, for $r \ge 1$, we call an algorithm a \textit{$r^{+}$-local algorithm} if
its local information (i.e.\ in item $2$) is made from the $r$-closed neighborhood around $S$.

We focus on computationally efficient (i.e.\ polytime) local algorithms.  
Our framework applies most naturally to coverage, search, and connectivity problems, where the family of valid solutions is upward-closed.  

More generally, it is suitable for measuring the complexity, using only local information, for finding a subset of nodes having a desirable property.  In this case the size of $S$ measures the number of queries made by the algorithm; we think of the graph structure revealed to the algorithm as having been paid for by the cost of $S$.

For our lower bound results, we will sometimes compare the performance of an $r$-local algorithm with that of a (possibly randomized) algorithm that is also limited to using Jump and Crawl queries, but may use full knowledge of the network topology to guide its query decisions. The purpose of such comparisons is to emphasize instances where it is the lack of information about the network structure, rather than the necessity of building the output in a local manner, that impedes an algorithm's ability to perform an optimization task.

\section{Preferential Attachment Graphs}
\label{section.PAmain}
 
We now focus our attention on algorithms for graphs generated by the preferential attachment (PA) process, 
conceived by Barab\'{a}si and Albert \cite{BA-99}. Informally, the process is defined sequentially
with nodes added one by one.  When a node is added it sends $m$ links backward to previously created nodes, connecting to  a node with probability proportional to its current degree. 

We will use the following, now standard, formal definition of the process, due to \cite{BR-01}.
Given $m \geq 1$, we inductively define random graphs $G_m^{t}$, $1 \le t \le n$.
The vertex set for $G_m^t$ is $[t]$ where each node is identified by its creation time (index). $G_m^{1}$ is the graph with node $1$ and 
$m$ self-loops. Given $G_m^{(t-1)}$, form $G_m^{t}$
by adding node $t$ and then forming $m$ edges from $t$ to nodes in $[t]$,
say $p_1(t), \dotsc, p_m(t)$.  The nodes $p_k(t)$ are referred to as the \emph{parents} of $t$.
The edges are formed sequentially.  
For each $k$, node $s$ is chosen
with probability $\deg(s)/z$ if $s < t$, 
or $(\deg(s)+1)/z$ if $s = t$, where $z$ is a normalization factor.

Note that $\deg(s)$ denotes degree in $G_m^{t-1}$, counting previously-placed edges.

We first present a $1$-local approximation algorithm for the following simple problem on PA graphs: given an arbitrary node $u$, return a minimal connected subgraph containing nodes $u$ and $1$ (i.e.\ the root of $G_m^{n}$).  

Our algorithm, TraverseToTheRoot, is listed as Algorithm \ref{traverse}.  The algorithm grows a set $S$ of nodes by starting with $S = \{u\}$ and then repeatedly adding the node in $N(S)\backslash S$ with highest degree.  
We will show that, with 
high probability, this algorithm traverses the root node within $O(\log^4(n))$ steps.
 
\begin{algorithm}
\caption{TraverseToTheRoot} \label{traverse}
\begin{algorithmic}[1]
\STATE Initialize a list $L$ to contain an arbitrary node $\{u\}$ in the graph.
\WHILE{$L$ does not contain node $1$}
         \STATE Add a node of maximum degree in $N(L) \backslash L$ to $L$.
\ENDWHILE
\STATE return $L$.
\end{algorithmic}
\end{algorithm} 

\begin{theorem}
\label{thm.PA}
With probability $1-o(1)$ over the preferential attachment process on $n$ nodes,
TraverseToTheRoot returns a set of size $O(\log^4(n))$.
\end{theorem}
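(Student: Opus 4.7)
The plan is to argue that the greedy procedure makes steady progress, measured by the maximum degree of a vertex ever added to $L$. Introduce geometric thresholds $d_i = 2^i$ for $0 \le i \le K$, where $d_K = \Theta(\sqrt{n})$ matches the typical degree of the root so that $K = \Theta(\log n)$, and set $H_i = \{v : \deg_{G_m^n}(v) \ge d_i\}$. I will show that, conditioned on the algorithm having queried some vertex of $H_i$, within $O(\log^3 n)$ additional queries the algorithm queries a vertex of $H_{i+1}$. Summing over the $K = O(\log n)$ levels gives the claimed $O(\log^4 n)$ bound; the initial handful of steps needed to reach $H_0$ from the arbitrary start $u$ is absorbed into this budget, since the parents of $u$ already include (w.h.p.) a node of degree $\Omega(1)$.

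The proof relies on several standard facts about the PA process that I would state and cite: (a) the root has degree $\Theta(\sqrt{n})$ w.h.p.; (b) node $v$ has degree concentrated around $m\sqrt{n/v}$; (c) the number of vertices of degree at least $d$ is $\Theta(n/d^2)$; and (d) the parents $p_1(t),\dots,p_m(t)$ of each new vertex $t$ are drawn proportionally to current degree, so a vertex of high current degree is disproportionately likely to be chosen. Item (d) is the source of positive degree--degree correlation: older, higher-degree nodes tend to be attached to one another, giving the induced subgraph on $H_i$ the connectedness needed for the algorithm to climb.

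The heart of the argument is a \emph{one-level advance lemma}: starting from any $v\in H_i$ already queried, within $O(\log^3 n)$ queries of the greedy rule, some vertex of $H_{i+1}$ is added to $L$. I would prove it by coupling the exploration the algorithm performs from $v$ with a supercritical branching process. Each vertex in $H_i$ has, with constant probability, at least one neighbor whose eventual degree exceeds $d_{i+1}$ (both via its $m$ PA-parents, chosen proportionally to degree, and via the rich population of mid-age descendants that attach to it); moreover these events retain enough independence across generations to be dominated below by a Galton--Watson process with mean offspring number bounded away from $1$. Using the power-law density estimates from (c), the branching process reaches $H_{i+1}$ within $O(\log^2 n)$ generations, each of polylogarithmic breadth. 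Because the greedy rule always queries the visible vertex of largest degree, an easy induction shows that the trajectory of queried vertices dominates (in degree ordering) the BFS trajectory of the branching process, so the algorithm visits an $H_{i+1}$ vertex no later than the BFS.

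The main obstacle I expect is the presence of ``bottleneck'' vertices in $H_i$ whose neighborhoods consist mostly of low-degree vertices, which the paper explicitly flags. The branching-process approach handles this by tolerating a constant fraction of ``dead'' offspring: as long as the expected number of \emph{good} (i.e.\ $H_{i+1}$-producing) children is bounded away from $1$, supercriticality is preserved and survival is guaranteed with constant probability. Repeating the one-level-advance argument with fresh randomness (using the unrevealed portion of the PA process to maintain independence) amplifies this constant probability to $1 - 1/\mathrm{poly}(n)$ within $O(\log^3 n)$ queries, and a union bound over the $K = O(\log n)$ levels yields the final high-probability statement.
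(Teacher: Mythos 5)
Your high-level strategy---show that a path of high-degree vertices to the root exists by coupling with a supercritical branching process, and argue the greedy rule climbs it---is the same idea the paper uses. But the proposal leaves the two hardest pieces unaddressed, and those are where the whole argument actually lives.

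First, the claim that ``an easy induction shows that the trajectory of queried vertices dominates (in degree ordering) the BFS trajectory of the branching process'' is not an easy induction; it is the crux. When the algorithm is at $v \in H_i$, greedy can and will wander off the branching-process path onto high-degree ``bottleneck'' vertices that are not on any good path, and nothing in your argument bounds how many such detours occur before a path vertex is finally queried. The number of vertices of degree $\ge d_i$ is $\Theta(n/d_i^2)$, which is polynomial for small $i$, so a crude counting bound does not save you. The paper handles this (in its ``phase 2'') by a per-step probabilistic argument: once some vertex with a good path has been queried, \emph{every} subsequent queried vertex inherits a degree lower bound, hence a small-index upper bound, hence an $\Omega(1/\log^2 n)$ probability per step of having a low-index parent that forces the next greedy step downward in index. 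That chain of implications is what turns the \emph{existence} of a good path into a bound on the \emph{number of steps}, and it is entirely absent from your sketch. Relatedly, your ``repeat with fresh randomness'' step needs to confront the fact that the branching-process success events at different starting vertices are heavily correlated (they examine overlapping sets of parents); the paper controls this with the sparse-set device and the $16\log n$ budget on independent invocations, and without some such mechanism the independence claim is unjustified.

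Second, your level decomposition stops at $H_K = \{v : \deg(v) \gtrsim \sqrt{n}\}$, i.e.\ at high-degree vertices, but the theorem requires reaching node~$1$ specifically. Getting from a generic vertex of index $\le s_0$ (equivalently, of near-maximal degree) to the actual root is a separate argument: the paper shows the induced subgraph on the first $s_0$ indices is connected, that all its vertices have degree large enough to pin the greedy walk inside a small index range, and that each step there has an $\Omega(1/\log^{3.9} n)$ chance of being adjacent to node~$1$. That final phase is, in fact, the bottleneck that produces the $O(\log^4 n)$ bound; omitting it means the accounting in your sketch does not actually yield the stated exponent even if the one-level-advance lemma were proved. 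You would also need some version of the paper's ``phase 1'' to show that the algorithm reaches a vertex on a good path in the first place, since the constant-probability guarantee from the branching process is only useful once you can afford $O(\log n)$ independent retries.
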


\paragraph{Remark:}
For convenience, we have defined TraverseToTheRoot assuming that the algorithm can determine when it has successfully traversed the root.  This is not necessary in general; our algorithm will instead have the guarantee that, after $O(\log^4(n))$ steps, it has traversed node $1$ with high probability.

Before proving Theorem \ref{thm.PA}, we discuss its algorithmic implications below.

\subsection{Applications of Fast Traversal to the Root}
\label{section.PAapps}
We now describe how to use TraverseToTheRoot to implement local algorithms for other problems on PA networks.  For ease of readability, the proofs of all auxiliary lemmas will be only presented at the end of the chapter.

\paragraph{s-t connectivity.}
In the $s$-$t$ connectivity (or shortest path) 
problem we are given two nodes $s,t$ in an undirected graph and must find a small connected subgraph that contains $s$ and $t$.  

\begin{corollary}
\label{thm.connectivity}
Let $G$ be a PA graph on $n$ nodes.
Then, with probability $1-o(1)$ over the PA process,
Algorithm \ref{stconnect} (listed above), a $1$-local algorithm, returns a connected subgraph of size $O(\log^4(n))$ containing vertices $s$ and $t$. Furthermore, for any fixed $k$, with probability $1-o(1)$ over the PA process, a subset of $k$ nodes can be connected by a local algorithm in $O(k\log^4(n))$ steps, using a subset of size $O(k\log^4(n))$.

\end{corollary}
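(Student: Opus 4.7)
The plan is to reduce the corollary directly to Theorem~\ref{thm.PA} by running TraverseToTheRoot from each of the input vertices and returning the union of the resulting sets. The key observation is that every set produced by TraverseToTheRoot is connected (each greedy step appends a vertex of $N(L)\setminus L$, i.e., a neighbor of the current set) and, on success, contains the root node $1$. Sets with a common vertex union to a connected set, so pasting several such runs together automatically yields a connected subgraph containing all prescribed nodes.

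For the $s$-$t$ case, the algorithm I would analyze is: run TraverseToTheRoot from $u=s$ to obtain a connected set $S_s$, then run it again from $u=t$ to obtain $S_t$, and output $S_s \cup S_t$. By Theorem~\ref{thm.PA} each run terminates with $|S_s|, |S_t| = O(\log^4 n)$ and contains node $1$, with probability $1-o(1)$. A union bound over the two failure events leaves success probability $1-o(1)$, and the output is connected (via node $1$), contains $s$ and $t$, and has size $O(\log^4 n)$. Locality is preserved because each greedy step inside either traversal depends only on degrees of vertices in the $1$-open neighborhood of the currently queried set, which is exactly the information a $1$-local algorithm may use.

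For the $k$-node generalization, repeat the above for each of the $k$ input vertices, so the algorithm performs $k$ traversals to the root and outputs the union of the $k$ connected sets. By Theorem~\ref{thm.PA} applied to each traversal, every individual run contributes $O(\log^4 n)$ vertices and reaches node $1$ with probability $1-o(1)$. Since $k$ is fixed (independent of $n$), a union bound over the $k$ failure events still gives overall success probability $1 - k\cdot o(1) = 1 - o(1)$. The union has size $O(k\log^4 n)$, is connected because all pieces meet at the root, and contains the $k$ prescribed vertices by construction.

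The main (modest) obstacle is a bookkeeping point rather than a technical one: checking that re-seeding at a new vertex after an earlier run still fits the $r$-local framework, since the algorithm is allowed to grow a single set $S$ by both \emph{crawl} and \emph{jump} moves, and seeding a new traversal corresponds to a jump. Once this is spelled out, the corollary follows immediately from Theorem~\ref{thm.PA} and a union bound, with no further probabilistic analysis of the PA process required.
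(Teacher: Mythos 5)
Your proof matches the paper's argument: run TraverseToTheRoot from each terminal, invoke Theorem~\ref{thm.PA} to get $O(\log^4 n)$ connected sets each containing the root, take a union bound over the $k$ (fixed) runs, and glue the pieces at node $1$. The observation about re-seeding via a jump being permitted in the framework is a nice clarification but does not change the substance.
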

\begin{proof}
Theorem~\ref{thm.PA} implies that, with probability $1-o(1)$, algorithm TravesetToTheRoot($G,s$) returns a path from $s$ to node $1$ in time $O(\log^4(n))$. 

Similarly, with probability $1-o(1)$, TraverseToTheRoot($G,t$) returns a connected path from $s$ to node $1$ in time $O(\log^4(n))$. Concatenating the two paths at node $1$ 
is a path of length $O(\log^4(n))$ from $s$ to $t$. Given $k$ terminal one can connect all of them to nodes 1.
Theorem~\ref{thm.PA} implies that for each terminal, with probability $1-o(1)$, algorithm TravesetToTheRoot($G,s$) returns a path from $s$ to node $1$ in time $O(\log^4(n))$. For a fixed $k$, the claim then follows from the union bound.

\end{proof}

\begin{algorithm}
\caption{s-t-Connect} \label{stconnect}
\begin{algorithmic}[1]
\STATE $P_1 \leftarrow $ TraverseToTheRoot($G,s$)
\STATE $P_2 \leftarrow $ TraverseToTheRoot($G,t$)
\STATE Return $P_1 \cup P_2$
\end{algorithmic}
\end{algorithm} 

One may ask whether our results can be extended to arbitrary networks. We next that Corollary \ref{thm.connectivity} does not extend to general graphs: local algorithms 
cannot achieve sublinear approximations. We provide a strong lower bound: a randomized $r$-local algorithm will have approximation factor $\Omega(n/r^2)$, even if it is allowed to fail with probability $1/3$.  Furthermore, such a bound holds even if we ask that the graph be well-connected: the lower bound is $\Omega(n/kr^2)$ for $k$-edge-connected graphs.

\begin{theorem}
\label{lemma.lbconnect}
Let $k$, $r$, $n \ge \text{max} \{ k , r \}$, be positive integers.
For any $r$-local algorithm for the $s$-$t$ connectivity problem with success probability bigger than $\frac{1}{3}$, the expected approximation over successful runs is $\Omega(\frac{n}{kr^{2}})$ for $k$-edge-connected graphs.

\end{theorem}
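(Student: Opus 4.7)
The plan is to construct, for each choice of $n, k, r$, a family $\{G_{i^\ast}\}_{i^\ast \in [N]}$ of $k$-edge-connected $n$-vertex graphs with $N := \lfloor (n-2)/(kL) \rfloor = \Theta(n/(kr))$, where $L := 2r+1$, such that the minimum $s$-$t$ connected subgraph has size $\Theta(r)$ but the location of the terminal $t$ is hidden from any $r$-local algorithm. Each $G_{i^\ast}$ has $N$ internally-disjoint ``thick paths'' issuing from a common start vertex $s$: a thick path consists of $L$ layers of $k$ vertices with the complete bipartite graph $K_{k,k}$ between consecutive layers, and its first layer is joined to $s$ by $k$ edges. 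The distinguished thick path indexed by $i^\ast$ has its last layer joined to $t$ by $k$ edges; all other thick paths end in dead layers. A direct check verifies that every vertex has degree at least $k$ and any two vertices admit $k$ edge-disjoint paths via the $K_{k,k}$ structure, so the graph is $k$-edge-connected; meanwhile the shortest $s$-$t$ path takes one vertex per layer of thick path $i^\ast$ (plus $s$ and $t$), giving $\mathrm{OPT}=L+2=\Theta(r)$.

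I would then apply Yao's principle against the uniform distribution on $i^\ast \in [N]$. Call thick path $j$ \emph{verified} if the algorithm's current $r$-open view contains a last-layer vertex of $j$: since the $r$-closed view records that vertex's degree, which is $k+1$ iff $j=i^\ast$ and $k$ otherwise, verification immediately resolves correctness. The key combinatorial observation is that each query verifies at most one new thick path, since a jump lies in a unique thick path and a crawl adds a vertex of a unique thick path. Moreover, because the layer spacing $L = 2r+1$ prevents any $r$-open view at an internal vertex from reaching the last layer without first incorporating a vertex of the final $r$ layers of $i^\ast$, the $r$-open views on $G_{i^\ast}$ coincide with those on a template graph (with the $t$-edges deleted) until the first query that verifies $i^\ast$. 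Hence under uniform $i^\ast$ and any deterministic $A$, one obtains $\mathbb{P}[A \text{ verifies } i^\ast \text{ within } q \text{ queries}] \le q/N$, because the deterministic template-verification sequence $j_1, \ldots, j_q$ contains $i^\ast$ with probability at most $q/N$. Separately, for $A$ to succeed without ever verifying $i^\ast$, its output $S$ must still contain $t$; the only way $t$ joins $S$ without a prior verification is via a direct jump onto $t$ (crawling to $t$ requires $t$ to be in the $r$-open view, which requires a vertex of $S$ in the last $r$ layers of $i^\ast$, which is itself a verification), and this jump succeeds with probability at most $q/n \le q/N$. Summing yields $\mathbb{P}[A \text{ succeeds within } q \text{ queries}] \le 2q/N$, so any algorithm with success probability at least $1/3$ must make $q = \Omega(N)$ queries.

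A standard tail argument (choose $q^\star = cN$ for a small constant $c$ with $2c < 1/3$) upgrades this to $\mathbb{E}[|S| \mid \text{success}] = \Omega(N) = \Omega(n/(kr))$; dividing by $\mathrm{OPT}=\Theta(r)$ delivers the claimed $\Omega(n/(kr^2))$ ratio. The main obstacle I anticipate is cleanly handling the algorithm's adaptivity --- a lucky jump into a thick path's final layers can be leveraged into a cheap full traversal --- which is why the ``at most one verified path per query'' charging scheme is essential: it forces both crawls and jumps to pay at least one full query per thick path they discriminate, and it prevents jumps from piggy-backing information about multiple thick paths. Verifying $k$-edge-connectivity rigorously is a routine consequence of the $K_{k,k}$ construction.
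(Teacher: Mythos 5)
Your construction has the right flavor (Yao's principle, a family of $\Theta(n/(kr))$ candidate paths from $s$, one query can test at most one candidate), and it parallels the paper's argument in many respects. But there is a genuine gap: in the $s$-$t$ connectivity problem the algorithm is \emph{given both} $s$ and $t$, and the local-information framework lets it seed its explored set at either endpoint --- indeed, the paper's own upper-bound algorithm (s-t-Connect) deliberately runs TraverseToTheRoot from $s$ \emph{and} from $t$. In your family $G_{i^\ast}$, the terminal $t$ is adjacent only to the last layer of the distinguished thick path $i^\ast$. So an algorithm that initializes $S=\{t\}$ immediately sees, in its $r$-open view, the last $r$ layers of $i^\ast$ (and nothing else, since $t$ has no other neighbors). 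It then just crawls backward along $i^\ast$ to $s$ in $L+1 = O(r)$ queries, matching your $\mathrm{OPT}=\Theta(r)$. The lower bound collapses.

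The paper sidesteps exactly this by making the hard distribution symmetric in $s$ and $t$: every candidate path runs all the way from $s$ to $t$, and the defect (a deleted middle edge) sits at distance greater than $r$ from \emph{both} endpoints, so the $r$-open views around $\{s,t\}$ are identical across the whole family. To repair your argument you would need a similar symmetrization --- e.g., have each thick tube run from $s$ to $t$ (so $t$'s neighborhood is the same in every $G_{i^\ast}$) and hide the distinguishing feature in the middle layers, out of $r$-open reach of both ends. As a secondary remark, even with that fix you should state more carefully how random jumps are handled in the template-coupling step: the sequence of ``verified'' thick paths is not deterministic once jumps are allowed, and the bound $q/N$ must be established by noting that each query (crawl or jump) reveals at most one new candidate and that a jump lands in the hidden candidate with probability $O(1/N)$, independently of $i^\ast$. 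The paper glosses this but uses the same one-candidate-per-query charging.
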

\begin{proof}
We first focus our attention on proving the claim for $r$-local algorithms on $k=1$ connected graphs. 
The proof will invoke the application of Yao's minmax principle for the performance of Monte Carlo randomized algorithms on a family of inputs~\cite{Yao77}. The lemma states that the expected cost of the optimal 
deterministic Monte Carlo algorithm with failure probability of $\epsilon \in [0,1]$ on an arbitrary distribution over the family of inputs is a lower bound on
the expected cost of the optimal randomized Monte Carlo algorithm with failure probability of $\frac{\epsilon}{2}$ over that family of inputs. 

To use the lemma we will focus on Monte Carlo deterministic algorithms that have failure probability smaller than
some small constant, say $\frac{1}{3}$,
and analyze their performance on a uniformly at random chosen input from a family of inputs constructed below.
Given $n$ we construct the family of inputs. Each input is 
a graphs constructed as follows: we have two distinct nodes $s$ and $t$.
We define a broken path as path on $2r+4$ nodes where the 'middle' edge has been removed.
The graph will be made from $\frac{n-2-(2r+4)}{2r+4}$ distinct broken paths from $s$ to $t$
together with one distinct connected path connecting $s$ to $t$. The identity of the connected path would be chosen uniformly at random from the set of $\frac{n-2}{2r+4}$ paths.
In total the family of inputs contains $\frac{n-2}{2r+4}$ members.
As the algorithm is $r$-local, being at $s$ or $t$ it cannot see the middle node on a broken path
so it cannot decide if a path is broken before traversing at least one node in it.
A compelling property therefore holds: if the algorithm has not found the connected path after $i$ queries then 
the algorithm learns nothing about the identity of the broken path except that it is not one of the paths it traversed so far.
As the connected path is chosen uniformly at random from all paths, the probability that after $\frac{1}{2}\frac{n-2}{2r+4}$ queries the connected path is found is at most $\frac{1}{2}$. Thus,
conditioned on the algorithm being successful (an event having probability at least $\frac{1}{4}$),
the expected cost of finding a path from $s$ to $t$ must be $\Omega(\frac{n}{r})$. Using Yao's principle applied to Monte Carlo algorithms the worst case expected cost of a randomized algorithm on at least one of the inputs would be at least $\Omega(\frac{n}{r})$. However, on any of the inputs, an algorithm with full knowledge of the graph can find the connected path in any graph in the family
in $2r+4$ queries. The approximation ratio of the $r$-local algorithm would therefore be worse than
$\Omega(\frac{n}{r^{2}})$.

It is not hard to generalize the construction to $k$ connected graphs by replacing parts of each path in the construction by a complete graph on that nodes. For a detailed description see below.

We create $\frac{n-2}{k(2r+4)}$ distinct paths, connecting $s$ to $t$, each on
$k(2r+4)$ nodes.
We choose all but one of them to be broken.
For a given path, the node at distance $k(2r+2)$ is chosen to be broken.
In each broken path we form a clique between every consecutive $k$ nodes, starting from $s$, up to the point the path is broken at. If we denote the path by $p_1=s,p_2,\ldots,p_{2r+4}=t$, then
the first clique contains nodes $p_1=s,p_2, \ldots, p_k$ and the last is on $p_{\frac{r}{2}-k+1},  p_{\frac{r}{2}-k+2},\ldots ,p_{\frac{r}{2}}$. 
Similarly, we form cliques on the nodes on $t$ side of the broken path on every $k$ consecutive nodes, starting at node $p_{\frac{r}{2}}+1$.
The graph then becomes $k$-edge connected, and we can repeat the argument given above.
\end{proof}

\paragraph{Finding high degree nodes.}
A natural question on graphs is to find a node with maximal degree.
As we now show, the algorithm TraverseToTheRoot obtains a polylogarithmic approximation to this problem.
  
\begin{corollary}
\label{thm.highdeg}
Let $G$ be a preferential attachment graph on $n$ nodes. 
Then, with probability $1-o(1)$, 

\begin{itemize}
\item Algorithm TraverseToTheRoot will return a node of degree at least $\frac{1}{\log^2(n)}$ of the maximum degree in the graph, in time $O(\log^4(n))$. 
\item For any fixed $k$, algorithm TraverseToTheRoot can be extended to return, with probability $1-o(1)$ and in time $O(\log^4(n))$, a set of $k$ nodes which provides an $O(1/\log^3(n))$ approximation to sum of the $k$ largest degrees in the network.
\end{itemize}

\end{corollary}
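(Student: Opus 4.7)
The approach for both bullets leverages Theorem \ref{thm.PA}: with probability $1-o(1)$, TraverseToTheRoot adds node $1$ to its output list $L$ within $O(\log^4 n)$ steps. Combined with standard concentration estimates for the PA process --- namely $d_G(1) = \Omega(\sqrt{n}/\log n)$ whp and $\Delta_G = O(\sqrt{n}\log n)$ whp (see e.g.\ \cite{BR2-01}) --- the first bullet is almost immediate: the highest-degree node of $L$ has degree at least $d_G(1) = \Omega(\Delta_G/\log^2 n)$, which is the required approximation ratio.

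For the second bullet, I would extend TraverseToTheRoot as follows. Once node $1$ enters $L$, the $1$-locality of the algorithm makes all neighbors of node $1$ and their degrees visible; the extension queries the $k-1$ highest-degree such neighbors and returns those together with node $1$. Since each of the top $k$ degrees in the PA graph is $O(\sqrt{n}\log n)$ whp, their sum is $O(k\sqrt{n}\log n)$, so it suffices to find $k$ nodes each of degree $\Omega(\sqrt{n}/\log^2 n)$ in order to obtain the claimed $\Omega(1/\log^3 n)$ approximation. Since $k$ is a fixed constant, the total number of steps remains $O(\log^4 n)$.

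The principal obstacle is proving that, with probability $1-o(1)$, node $1$ has at least $k$ neighbors of degree $\Omega(\sqrt{n}/\log^2 n)$. The plan is to establish two sub-claims: (i) each of the first $\log^2 n$ nodes has final degree $\Omega(\sqrt{n}/\log n)$ whp, via a martingale analysis of the degree evolution of each early node (using that the expected degree of node $i$ at time $n$ is $\Theta(\sqrt{n/i})$ and that deviations are controlled by e.g.\ Azuma--Hoeffding on the edge-exposure martingale); and (ii) at least $k$ of the first $\log^2 n$ nodes attach directly to node $1$ whp, obtained by summing the attachment probabilities $\Theta(1/\sqrt{i})$ over $i \leq \log^2 n$ --- which gives an expected count of $\Omega(\sqrt{\log n}) \gg k$ --- and applying a standard concentration bound to handle the dependencies between attachments. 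Taking a union bound over the $\log^2 n$ early nodes for (i) and combining with (ii) furnishes the required $k$ high-degree neighbors of node $1$ and thus completes the argument.
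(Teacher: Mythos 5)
Your first bullet is in substance the paper's proof: invoke Theorem~\ref{thm.PA} to get node $1$ in $O(\log^4 n)$ steps, lower-bound $\deg(1)$ by $\Omega(m\sqrt n/\log n)$, and upper-bound $\Delta$ by $O(m\sqrt n\log n)$. No issues there.

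For the second bullet you take a genuinely different route. The paper does not look at node $1$'s neighborhood at all; it simply lets TraverseToTheRoot keep running greedily for $k$ additional steps after node $1$ is reached, and argues (by induction and the appendix lemma that $\deg(i)=\Omega(\sqrt n/\log^2 n)$ for every fixed $i\le k$) that the $i$-th node so acquired has degree at least that of node $i$; combined with $\Delta = O(\sqrt n\log n)$ this yields the $\Omega(1/\log^3 n)$ ratio. Your alternative --- stop at node $1$ and output its top $k-1$ neighbors by degree --- is also a valid $1$-local extension, and it cleanly sidesteps the slightly hand-waved ``simple induction'' in the paper, at the cost of needing to establish that node $1$ has at least $k-1$ neighbors of degree $\Omega(\sqrt n/\log^2 n)$. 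That is a new ingredient that the paper never has to prove.

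Two remarks on the execution of your new ingredient. First, your estimate $\Prob[i\text{ attaches to }1]=\Theta(1/\sqrt i)$ is a \emph{typical-value} estimate (it corresponds to $w_1\approx 1/\sqrt n$), but $w_1=\Theta(1/\sqrt n)$ does \emph{not} hold with probability $1-o(1)$; the whp event the paper provides is only $E_3$: $w_1\ge 4/(\log n\sqrt n)$. Using $E_3$ together with $E_1$, the attachment probability is $\Omega(1/(\log n\sqrt i))$, so the expected number of nodes $i\le \log^2 n$ attaching to $1$ is $\Theta(\sum_{i\le\log^2 n} m/(\log n\sqrt i))=\Theta(m)$, a constant --- not enough to get $\ge k$ whp. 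The easy fix is to widen the window: take $i\le \log^4 n$ (say), which keeps $\deg(i)=\Omega(\sqrt n/\log^2 n)$ by the same concentration argument as sub-claim~(i), while the expected attachment count becomes $\Theta(m\log n)\to\infty$, after which a Chernoff bound gives at least $k$ such neighbors whp. (Alternatively, prove a sharper whp lower bound on $w_1$ than $E_3$, which is possible for $m\ge 2$.) Second, the ``standard concentration bound to handle the dependencies'' is simpler than you make it sound: in the Bollob\'as--Riordan reformulation used throughout the paper, conditioned on the sequence $(W_i)$ the events $\{p_j(i)=1\}$ are mutually independent across all $(i,j)$, so no martingale/Azuma machinery for dependencies is needed --- a plain multiplicative Chernoff bound suffices once the expected count is $\omega(1)$.

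In short: bullet one matches the paper; bullet two is a correct but different strategy, modulo the range-of-$i$ fix above and the observation that the BR model already hands you the independence you were planning to fight for.
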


\begin{proof}

We first prove part 1 of the corollary. TraveseToTheRoot ends when node $1$ is found.  In Appendix \ref{sec.node1degree} we prove that,
with probability $1-o(1)$, node $1$ has degree at least 
$\frac{m\sqrt{n}}{\log(n)}$.
However, from \cite{Bollo03} (Theorem 17), with probability $1-o(1)$, the maximum degree is less than $m\sqrt{n} \log(n)$. As TraveseToTheRoot runs, with high probability, in $O(\log^4(n))$ steps we conclude that 
a node of degree at least $\frac{1}{\log^2(n)}$ times the maximum degree in the graph is found in time $O(\log^4(n))$. 

To prove the second part of the corollary, we note that by letting the algorithm run until it finds node $1$, then continue for an additional $k$ steps it achieves a solution that is at least as good as returning the nodes indexed $1$ through $k$. In other words,
the $i$'th node in the solution set $S$ returned by the algorithm has degree at least as high as that node $i$.
This follows from a simple induction. As proven in Appendix \ref{sec.node1degree}, with probability $1-o(1)$, the node $i$ has degree at least $\Omega(\frac{\sqrt{n}}{\log^2n})$. As the maximum degree (and hence the degree of node $i$) is is less than $m\sqrt{n} \log(n)$, with probability $1-o(1)$, the ratio between the degree of the $i$'th node in $S$ to the degree of node $i$ is, with probability $1-o(1)$, at most $O(1/\log^3(n))$. The result now follows from the union bound.

\end{proof}

We note that one cannot hope to extend Corollary \ref{thm.highdeg} to general graphs. This is discussed at the end of the proof of Theorem~\ref{lemma.lbcoverage} below. 

\paragraph{Maximizing coverage versus cost.} 
Consider a setting where that accessing a node comes with some fixed cost $c$ one would like to find a set of nodes $S$ such that the effective ``gain per cost" is maximized, where gain per cost is the ratio between the size of $D(S)$ and the total cost of $S$, $c|S|$ (see ~\cite{KhullerMN99}
for an extended variant of the problem).  If $v$ is a node of maximum degree in the graph the solution is to choose such a node $v$.  A potential approximation strategy
would be to quickly find a node of high degree. The following corollary follows from theorem~\ref{thm.highdeg}.
\begin{corollary}
\label{thm.bangforbuck}
Let $G$ be a preferential attachment graph on $n$ nodes. Then, with probability $1-o(1)$, over the probability space generating the preferential attachment graph,
algorithm TraveseToTheRoot, a $1$-local information algorithm, returns a set of size at most $O(\log^4(n))$ 
containing a node of maximum degree in the graph . In particular, TraveseToTheRoot achieves an $O(\log^8(n))$
approximation to the gain per cost coverage problem. 
\end{corollary}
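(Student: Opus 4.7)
The plan is to derive Corollary \ref{thm.bangforbuck} as an almost immediate consequence of Theorem \ref{thm.PA} together with the degree estimates already developed in the proof of Corollary \ref{thm.highdeg}. The statement really has two ingredients: (i) the returned set has size $O(\log^4(n))$ and contains a vertex whose degree matches the maximum up to a polylogarithmic factor, and (ii) this set yields an $O(\log^8(n))$ approximation for the gain-per-cost objective. Both reduce to combining a bound on the size of the set returned with a bound on the degree of the terminal vertex (node $1$).

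First I would invoke Theorem \ref{thm.PA}: with probability $1-o(1)$ over the PA process, TraverseToTheRoot halts after $O(\log^4(n))$ queries and therefore returns a set $S$ of size $O(\log^4(n))$ with $1 \in S$. Next I would re-use the two high-probability estimates established inside the proof of Corollary \ref{thm.highdeg}, namely $d(1) \ge m\sqrt{n}/\log(n)$ (the lower bound on the degree of the root proved in the appendix) and $\Delta \le m\sqrt{n}\log(n)$ (the upper bound on the maximum degree cited from \cite{Bollo03}). A union bound makes both events, together with the success of TraverseToTheRoot, simultaneously hold with probability $1-o(1)$, so that $d(1) \ge \Delta/\log^2(n)$. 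This already proves part (i): $S$ contains a near-maximum-degree vertex.

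For part (ii) I would bound OPT and ALG separately. For OPT, the key observation is that every single vertex dominates at most $\Delta + 1$ nodes, so $|D(T)| \le (\Delta+1)|T|$ for any set $T$, giving the trivial but tight upper bound $\max_T |D(T)|/(c|T|) \le (\Delta+1)/c$. For ALG, the set $S$ contains node $1$, hence $|D(S)| \ge d(1) \ge \Delta/\log^2(n)$, while $c|S| = O(c\log^4(n))$, so the algorithm's gain per cost is at least $\Omega(\Delta/(c\log^{6}(n)))$. Dividing yields an approximation ratio of $O(\log^{6}(n))$, comfortably within the claimed $O(\log^{8}(n))$ bound.

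The proof is essentially bookkeeping on top of already-proved facts; the only mild care point is recognizing that OPT for the gain-per-cost problem is controlled by the per-vertex bound $\Delta+1$ rather than requiring knowledge of an optimal subset, and that the two high-probability events must be intersected via a union bound before reading off the approximation ratio. I do not anticipate any substantive obstacle beyond these bookkeeping steps.
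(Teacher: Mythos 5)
Your proposal is correct and follows essentially the same route as the paper: invoke Theorem \ref{thm.PA} for the $O(\log^4 n)$ step bound, lower-bound $\deg(1)$ by $m\sqrt{n}/\log n$ and upper-bound $\Delta$ by $m\sqrt{n}\log n$, then observe that any set $T$ has gain per cost at most $(\Delta+1)/c$ while the returned set achieves $\Omega(\Delta/(c\log^{6} n))$. The only difference is cosmetic: the paper plugs in a weaker degree bound $\sqrt{n}/(4\log^3 n)$ and lands exactly on $O(\log^8 n)$, whereas your use of the tighter $\deg(1)\geq m\sqrt{n}/\log n$ from the appendix gives $O(\log^6 n)$, which of course still satisfies the stated $O(\log^8 n)$.
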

\begin{proof}
With probability $1-o(1)$, after $O(\log^4(n))$ steps, a node of degree at least $\frac{\sqrt{n}}{4\log^3(n)}$ is found,
achieving a gain per cost of $\frac{\sqrt{n}}{5\log^7(n)}$.
As the highest degree is at most $\sqrt{n} \log(n)$, so is the maximum gain per cost and the result follows. 
\end{proof}

Given the positive result for preferential attachment graphs one may ponder
whether it could be extended to work on general graphs. The following theorem would show that in general, algorithms that even $r = n^{\frac{1}{4}}$-local information algorithms, cannot achieve a good approximation. 

\begin{theorem}
\label{lemma.lbcoverage}
Let $k$, $r$, $n \ge \text{max} \{k , r \}$, be positive integers.
For any $r$-local algorithm for the ``gain per cost" problem with success probability bigger than $\frac{1}{3}$, 
the expected approximation over successful runs is $\Omega(\frac{\sqrt{n}}{kr^{2}})$ on k-edge-connected graphs.

In particular, for $k=1$ and $r =o(n^{\frac{1}{4}})$ the approximation ratio grows to infinity with the number of graph nodes $n$. 
\end{theorem}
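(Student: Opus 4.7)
The plan is to mirror Theorem~\ref{lemma.lbconnect}'s proof: apply Yao's minimax principle to reduce the randomized lower bound to a deterministic one on a carefully chosen distribution of $k$-edge-connected inputs, then use a symmetry argument to lower-bound the number of queries any deterministic algorithm must make.

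I would first build the hard family by augmenting the broken-paths construction of Theorem~\ref{lemma.lbconnect}. Retain its $p=\Theta(n/(kr))$ clique-gadget paths of length $L=k(2r+4)$ emanating from a root $s$, all but one broken at the midpoint, and attach $D=\Theta(\sqrt{n}/r^2)$ extra pendant leaves to the far endpoint $v^\star$ of the unbroken path (this choice of $D$ balances the two routes to $v^\star$ described below). The unbroken path is chosen uniformly at random among the $p$ paths, which defines the distribution. Whenever $n\ge k^2r^4$ (otherwise the claimed bound $\sqrt{n}/(kr^2)$ is $O(1)$ and the statement is vacuous), $v^\star$ is the unique vertex of degree $\Omega(D)$, so the optimum gain-per-cost equals $(D+O(k))/c$, realized by the singleton $S^\star=\{v^\star\}$.

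Next I would lower-bound the query count of any deterministic $r$-local algorithm $A$ that succeeds with probability at least $1/3$ over the random instance. On any run with $v^\star\notin S$, every vertex in $S$ has degree $O(k)$, so $|D(S)|\le O(k)\,|S|$ and the approximation ratio is already at least $\Omega(D/k)=\Omega(\sqrt{n}/(kr^2))$; thus I focus on runs with $v^\star\in S$. Adapting the symmetry argument of Theorem~\ref{lemma.lbconnect}, the posterior over the identity of the unbroken path stays uniform over all paths not yet touched, and any single $r$-local query (jump or crawl) can resolve the brokenness of at most one path, so identifying the unbroken path by crawls costs $\Omega(n/(kr))$ queries on average. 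Jumps offer a parallel route: a uniform jump lies in the $r$-open neighborhood of $v^\star$ (its $D$ leaves, $v^\star$ itself, or the last $O(kr)$ path vertices) with probability $(D+O(kr))/n$, so $\Omega(n/(D+kr))$ jumps are needed for constant-probability discovery. Combining, every successful run has $|S|\ge Q_{\min}:=\Omega\bigl(\min(n/(kr),\,n/(D+kr))\bigr)$.

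Finally, since only $v^\star$ contributes high degree, $|D(S)|\le O(k)\,|S|+D$, so the gain-per-cost is at most $O(k)+D/|S|$ and the approximation ratio is $\Omega(D\,|S|/(k|S|+D))$. Plugging in $D=\Theta(\sqrt{n}/r^2)$ and $|S|\ge Q_{\min}$ and checking the two regimes $n/(kr)\le n/(D+kr)$ and $n/(kr)>n/(D+kr)$ separately, both yield the claimed $\Omega(\sqrt{n}/(kr^2))$ bound; the ``in particular'' clause then follows immediately since $\sqrt{n}/r^2\to\infty$ whenever $k=1$ and $r=o(n^{1/4})$. The main technical obstacle is verifying that a single query really eliminates only one path from the posterior for jumps as well as crawls: this is ensured by placing $v^\star$ and the midpoint breaks far enough apart along the unbroken path so that no single $r$-open neighborhood simultaneously reveals both the brokenness of one path and the location of $v^\star$, and by keeping the clique-gadget structure of all paths identical so that any one jump eliminates at most the path containing the landed vertex, exactly as in Theorem~\ref{lemma.lbconnect}.
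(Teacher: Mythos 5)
Your construction is genuinely different from the paper's (which hides a $\sqrt{n}$-degree hub at a uniformly random \emph{leaf of a complete binary tree}, so that all other vertices have degree $O(k)$ after the $k$-connectivity gadget is applied), and the overall strategy---Yao's principle, then bound the cost of locating the unique high-degree vertex via crawl and jump routes separately---is sound. However, there is a fatal flaw in your instance family: you place the $D=\Theta(\sqrt{n}/r^2)$ pendant leaves at the far end $v^\star$ of the unbroken path, but the root $s$ from which all $p=\Theta(n/(kr))$ clique-gadget paths emanate already has degree $\Theta(pk)=\Theta(n/r)$ (or $\Theta(p)$ even without the clique blocks touching $s$). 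Since $n/r \gg \sqrt{n}/r^2$ for all $n \ge 2$, the vertex $s$ strictly dominates $v^\star$ in degree for every admissible $k,r$, so your claim that ``$v^\star$ is the unique vertex of degree $\Omega(D)$'' whenever $n\ge k^2r^4$ is simply false. Consequently the optimal gain-per-cost singleton is $\{s\}$, not $\{v^\star\}$, and any $r$-local algorithm finds $s$ in $O(1)$ queries (it is visible from every path node within $O(1)$ crawls, or can be taken as the seed), yielding an $O(1)$ approximation on your instances. The entire lower bound collapses.

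The reason the analogous issue does not arise in Theorem~\ref{lemma.lbconnect} is that there the objective is connectivity, where the degree of $s$ is irrelevant; you cannot blindly port that gadget to a degree-maximization objective without first flattening $s$'s degree. The paper handles this by routing the $p$ ``directions to explore'' through a complete binary tree, so every non-hub vertex has degree at most $3(k-1)+O(1)$, which is $o(\sqrt{n})$ under the stated hypotheses, keeping the planted hub uniquely identifiable by degree. Any repair of your construction would have to similarly spread out $s$'s adjacencies (e.g.\ via a bounded-degree tree), at which point you essentially recover the paper's construction; simply increasing $D$ does not work, since once $D\gtrsim p$ a jump lands in $v^\star$'s $1$-neighborhood after only $O(kr)$ trials, which no longer yields $\Omega(\sqrt{n}/(kr^2))$.

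A secondary, smaller issue: your claim that ``this choice of $D$ balances the two routes'' is not quite right as stated, since $n/(D+kr)<n/(kr)$ for every $D>0$, so the minimum is always the jump term. What the choice of $D$ actually balances is the approximation ratio obtained in the two outcome cases ($v^\star\notin S$ vs.\ $v^\star\in S$), as your final computation implicitly shows; you should state the balancing argument that way. This point is moot given the main flaw, but would need fixing in any repaired version.
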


\begin{proof}
We first focus our attention on proven the claim for $1$-local algorithms on $k=1$ connected graphs. 
The proof will follow similar lines to that of theorem~\ref{lemma.lbconnect}. We invoke the application of Yao minmax principle for the performance of Monte Carlo randomized algorithms \cite{Yao77}.
For that we focus on analyzing the performance of deterministic Monte Carlo algorithms on a uniformly at random chosen input from a family of inputs.
Given $n$ we construct the family of inputs as follows: each input is a graph made from
a complete binary tree on $n-1-\sqrt{n}$ nodes labeled $1,2, \ldots ,n-1-\sqrt{n}$. 
In addition one leaf node $s$ would be a hub for $\sqrt{n}$ new spoke nodes. We denote the subgraph on node $s$ and its neighbors by $H$. Note that node $s$ is the only node with degree bigger than three and so any algorithm that want to achieve a good ``gain per cost" must find that node.

Each input would correspond to a specific choice of assignment for node $i$. In total there are therefore ${(n-1-\sqrt{n})/2 \choose 2}$ inputs.
Such algorithms know only the degrees of the nodes they already traversed. The input comes with a compelling property: if the algorithm has not found a node in the subgraph $H$ after $i$ queries then we learned nothing about the identity of $s$ except that it is one of the leaf nodes not queried so far\footnote{the deterministic algorithm ``knows" the distribution over inputs, i.e.\ that $s$ is a leaf connected to a star subgraph}. Since $s$ was chosen uniformly at random across all 
$\frac{n-1-\sqrt{n}}{2} $ leaves, the probability that after $\frac{\sqrt{n}}{2}$ Jump and Crawl queries a node in $H$ is found is less than $\frac{5}{6}$. To see than we note that with the many Jumps the probability of  
hitting $H$, for $n$ large enough is smaller than $\frac{1}{e}+\frac{1}{100}$. The probability of hitting the leaf $s$ between all leaf trees, given we don't hit the spokes of $H$ is at most $1/2$. By the union bound the total probability of finding $s$ is at most $\frac{1}{2}+\frac{1}{2}(\frac{1}{e}+\frac{1}{100}) < \frac{5}{6}$. Thus,
for any algorithm that is successful with probability, say $1-\frac{2}{6}$, the expected cost of finding node $s$ is
$\Omega(\sqrt{n})$. 

By Yao's principle the expected cost of a randomized algorithm on one of the inputs would be at least $\Omega(\sqrt{n})$. However, an algorithm with full knowledge of the graph can find node $s$ with at most 
$\Theta(\log(n))$ queries on any of the inputs. We conclude that the approximation of any $0$-local algorithm on one of the inputs would be $\Omega(\frac{\sqrt{n}}{\log(n)})$. 

To generalize the problem to $k$ connected graphs 
we replace each edge $(u,v)$ in the complete binary tree subgraph in the construction above by a distinct path
from $u$ to $v$ of length $rk$.  We then connect all the first $k$ nodes on each of the new paths replacing edges  in the original graph to form a clique between themselves, and do the same for any next consecutive blocks of $k$ nodes on that path.

The graph then becomes $k$ edge connected. The total number of nodes becomes $C = n+(n-1)(kr-2)$.
The same compelling property still hold: if the algorithm has not found a node in $H$ after $i$ queries then we learn nothing about the identity of $s$ except it is not one of the node queried so far. 
The expected cost for finding node $s$ would be at least $\Theta(\frac{\sqrt{C}}{rk})$. As an algorithm with full information of the graph can find node $s$ in at most $\log(C)r$ time, the result follows.

We end by noting that as each node has degree at most three except node $s$ the proof also provides a similar lower bound for finding a node 
who is at most a poly-logarithmic factor smaller than the maximum degree, a problem discussed in Section \ref{section.PAapps}.
\end{proof}

\subsection{Analysis of TraverseToTheRoot}
\label{sec.proof.PA}

Our proof will make use of an alternative specification of the preferential attachment process, which is now standard in the literature~\cite{BR-01},~\cite{DFF-11}.  We will now describe this model briefly.  Sample $mn$ pairs $(x_{i,j}, y_{i,j})$ independently and uniformly from $[0,1] \times [0,1]$ with $x_{i,j} < y_{i,j}$ for $i \in [n]$ and $j \in [m]$.  We relabel the variables such that $y_{i,j}$ is increasing in lexicographic order of indices.  We then set $W_0 = 0$ and $W_i = y_{i,m}$ for $i \in [n]$.  We define $w_i = W_i - W_{i-1}$ for all $i \in [n]$.
We then generate our random graph by connecting each node $i$ to $m$ nodes $p_1(i), \dotsc, p_m(i)$, where each $p_k(i)$ is a node chosen randomly with $\Prob[p_k(i) = j] = w_j / W_i$ for all $j \leq i$.  We refer to the nodes $p_k(i)$ as the \emph{parents} of $i$.

Bollob\'{a}s and Riordan showed that the above random graph process is equivalent\footnote{As has been observed elsewhere \cite{DFF-11}, this process differs slightly from the preferential attachment process in that it tends to generate more self-loops.  However, it is easily verified that all proofs in this section continue to hold if the probability of self-loops is reduced.} to the preferential attachment process.  They also show the following useful properties of this alternative model.  Set $s_0 = 160 \log(n)(\log\log(n))^2$ and $s_1 = \frac{n}{2^{25} \log^2 n}$.  Let $I_t = [2^t+1, 2^{t+1}]$.  Define constants $\beta = 1/4$ and $\zeta = 30$.

\begin{lemma}[adapted from \cite{BR-01}]
\label{lem.events}
Let $m \geq 2$ be fixed.  Using the definitions above, each of the following events holds with probability $1 - o(1)$:
\begin{itemize}
\item
$E_1 = \{ |W_i - \sqrt{\frac{i}{n}} | \le \frac{1}{100}\sqrt{\frac{i}{n}}~~\text{for}~s_0 \le i \le n \}$.
\item 
$E_2 = \{ I_t \text{ contains at most } \beta|I_t| \text{ nodes } i~~\text{with}~w_i < \frac{1}{\zeta \sqrt{in}} \text{ for } \log(s_0) \le t \le \log(s_1) \}$.  
\item 
$E_3 = \{ w_1 \geq \frac{4}{\log n \sqrt{n}} \}$.  
\item
$E_4 = \{ w_i \geq \frac{1}{\log^{1.9}(n) \sqrt{n}}~~\text{for all}~i \leq s_0 \}$.  
\item
$E_5 = \{ w_i  \le \frac{\log(n)}{\sqrt{in}}~~\text{for}~s_0 \le i \le n \}$
\end{itemize}
\end{lemma}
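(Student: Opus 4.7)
The plan is to reduce all five events to standard concentration inequalities for order statistics and spacings of uniform samples. The setup defines $W_i$ as the $(im)$-th value in the increasing reordering of the $mn$ i.i.d.\ coordinates $y_{i,j}$; each such coordinate has marginal density $2y$ on $[0,1]$ (the marginal of the uniform distribution on the upper triangle), so if $F(y)=y^2$ denotes its CDF, the probability integral transform shows that $W_i^2 \sim \mathrm{Beta}(im,\, mn-im+1)$. Under this encoding, $E_1$, $E_3$, $E_4$ are Beta tail bounds on the individual $W_i^2$; $E_2$ concerns how many consecutive spacings $W_i^2 - W_{i-1}^2$ are small; and $E_5$ is an upper-tail bound on the spacings. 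After proving each event with failure probability $o(1)$, a union bound finishes the proof.

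I would begin with $E_1$, which is the workhorse. Since $\mathrm{Beta}(im, mn-im+1)$ has mean $\approx i/n$ and variance $O(i/(mn^2))$, a Bernstein-type bound (or equivalently the R\'enyi representation via independent exponentials) gives $\Prob[|W_i^2 - i/n| > \epsilon \cdot (i/n)] \le \exp(-\Omega(\epsilon^2 i))$. Choosing $\epsilon$ slightly below $1/50$ translates this into a $1/100$-multiplicative window on $W_i$. Union-bounding over $i\ge s_0 = 160\log(n)(\log\log n)^2$ — in fact, it suffices to bound on a dyadic net in $i$ and interpolate using monotonicity of $W_i$ — produces failure probability $n\cdot\exp(-\Omega(s_0)) = o(1)$.

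Events $E_3$ and $E_4$ are lower-tail bounds on the smallest order statistics: $\Prob[W_i \le t] = \Prob[\mathrm{Beta}(im, mn-im+1) \le t^2] \le (e\cdot mn\cdot t^2/(im))^{im}$ by standard estimates on the Beta CDF near $0$. Plugging in $t = 4/(\log(n)\sqrt n)$ with $i = 1$ handles $E_3$, while $t = 1/(\log^{1.9}(n)\sqrt n)$ uniformly over $i\le s_0$ handles $E_4$ after a trivial union bound over $s_0$ values of $i$. Event $E_5$ follows analogously from an upper-tail bound on the uniform spacings $U_{(im)} - U_{((i-1)m)}$: the classical fact that the maximum spacing of $mn$ uniform samples is $O(\log(n)/n)$ w.h.p., combined with $w_i \approx (U_{(im)} - U_{((i-1)m)})/(2\sqrt{U_{((i-1)m)}})$ and the lower bound on $W_{i-1}$ supplied by $E_1$, yields $w_i \le \log(n)/\sqrt{in}$ for $i\ge s_0$.

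The main obstacle is $E_2$. Heuristically, if more than a $\beta$-fraction of indices in $I_t$ had $w_i$ much smaller than the typical value $1/(2\sqrt{in})$, then by $E_1$ the remaining indices in $I_t$ would have to compensate with abnormally large weights. Converting this heuristic into a concentration bound tight enough to survive a union bound over the $O(\log n)$ dyadic scales requires care, because the spacings within $I_t$ are not independent: conditional on $W_{2^t}$ and $W_{2^{t+1}}$ they form a Dirichlet vector. The cleanest route is to write the number of small spacings inside $I_t$ as a sum of indicators that are negatively associated (a general property of order statistics and their spacings), apply a Chernoff bound valid under negative association, and then absorb the dependence on the endpoints using $E_1$. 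Tracking the resulting constants is precisely what fixes the specific choices $\beta = 1/4$ and $\zeta = 30$ in the statement.
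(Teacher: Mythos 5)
Your treatment of $E_4$ has a genuine error. The event $E_4$ is a lower bound on the \emph{spacings} $w_i = W_i - W_{i-1}$, not on the order statistics $W_i$ themselves. You group $E_3$ and $E_4$ together as ``lower-tail bounds on the smallest order statistics'' and propose to bound $\Prob[W_i \le t]$ via the Beta CDF with $t = 1/(\log^{1.9}(n)\sqrt n)$, union-bounding over $i \le s_0$. That correctly handles $E_3$, because $w_1 = W_1 - W_0 = W_1$ when $W_0 = 0$. But for $i \ge 2$, a lower bound on $W_i$ says nothing about $W_i - W_{i-1}$: the gap can be tiny even when both endpoints are large. Consequently, the bound $\Prob[W_i \le t] \le (e\, mn\, t^2/(im))^{im}$, applied uniformly over $i \le s_0$, does not prove $E_4$. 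What is actually needed is a small-probability bound on the event that \emph{some consecutive pair of} $W$'s among the first $s_0$ are within $\delta = 1/(\log^{1.9}(n)\sqrt n)$ of each other. The paper does this with a collision argument: on $E_1$, $W_{s_0} = O(\sqrt{\log n}\,\log\log n/\sqrt n)$, so a small gap forces two of the $mn$ samples $y_{i,j}$ into a window of width $\delta$ inside $[0,W_{s_0}]$; discretizing into $O(\log^{2.5} n)$ overlapping intervals of width $2\delta$ and bounding the collision probability per interval (using the density $2y$) gives total failure probability $o(1)$ for $m \ge 2$. Within your Beta/Dirichlet framework, the right object would be the spacing $W_i^2 - W_{i-1}^2$ (a Dirichlet marginal), converted to $w_i$ by dividing through by roughly $2W_{i-1}$ — analogous to what you do for $E_5$ — but the Beta CDF bound on $W_i$ as written cannot produce $E_4$.

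For context, the paper's own proof simply defers $E_1$, $E_2$, $E_3$ to Bollob\'as--Riordan (``follows entirely without change, except for the modification of certain constants'') and only proves $E_4$, $E_5$ from scratch, both by the pigeonhole/discretization argument above. Your Beta-concentration route for $E_1$ and your negative-association route for $E_2$ are therefore independent re-derivations rather than reproductions; they are reasonable in outline, though for $E_2$ note that $\mathbb{1}[w_i < \text{threshold}]$ is \emph{not} a function of the single Dirichlet coordinate $W_i^2 - W_{i-1}^2$ once you condition only on the interval endpoints — it also depends on the running prefix sum through $W_{i-1}$ — so the negative-association step needs more care than the sketch admits, even if it is ultimately salvageable via $E_1$.
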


Note that we modified these events slightly (from \cite{BR-01}) for our purposes:
event $E_2$ uses different constants $\beta$ and $\zeta$, and in event $E_4$ we provide a bound on $w_i$ for all $i \leq s_0$ rather than $i \leq n^{1/5}$.  Finally, event $E_5$ is a minor variation on the corresponding event from \cite{BR-01}.  The proof of the modified Lemma \ref{lem.events} follows that of Bollob\'{a}s and Riordan \cite{BR-01} quite closely and can be found at the end of the chapter.

Given Lemma \ref{lem.events}, we can think of the $W_i$'s as arbitrary fixed values that satisfy events $E_1, \dotsc, E_5$, rather than as random variables.  Lemma \ref{lem.events} implies that, if we can prove Theorem \ref{thm.PA} for random graphs corresponding to all such sequences of $W_i$'s, then it will also hold for preferential attachment graphs. 
We now turn to the proof of Theorem \ref{thm.PA}.  Let us provide some intuition.  We would like to show that TraverseToTheRoot queries nodes of progressively higher degrees over time. However, if we query a node $i$ of degree $d$, there is no guarantee that subsequent nodes will have degree greater than $d$; the algorithm may encounter local maxima.  
Suppose, however, that there were a path from $i$ to the root consisting entirely of nodes with degree at least $d$.  In this case, the algorithm will only ever traverse nodes of degree at least $d$ from that point onward.  One might therefore hope that the algorithm finds nodes that lie on such ``good'' paths for ever higher values of $d$, representing progress toward the root.

Motivated by this intuition, we will study the probability that any given node $i$ lies on a path to the root consisting of only high-degree nodes (i.e.\ not much less than the degree of $i$).  We will argue that many nodes in the network lie on such paths.  We prove this in two steps.  First, we show that for any given node $i$ and parent $p_k(i)$, $p_k(i)$ will have high degree relative to $i$ with probability greater than $1/2$ (Lemma \ref{lem.lem1}).  Second, since each node $i$ has at least two parents, we use the theory of supercritical branching processes to argue that, with constant probability for each node $i$, there exists a path to a node close to the root following links to such ``good'' parents (Lemma \ref{lem.lem2}).

This approach is complicated by the fact that existence of such good paths is highly correlated between nodes; this makes it difficult to argue that such paths occur ``often'' in the network.  To address this issue, we show that good paths are likely to exist even after a large set of nodes ($\Gamma$ in our argument below) is adversarially removed from the network.  We can then argue that each node is likely to have a good path independently of many others nodes, as we can remove all nodes from one good path before testing the presence of another.

We will now proceed with the details of the proof.  The proofs of all technical lemmas appear at the end of this section. 
Set $s_0 = 160 \log(n)(\log\log(n))^2$ and $s_1 = \frac{n}{2^{25} \log^2 n}$.  We think of vertices in $[1,s_0]$ as close to the root, and vertices in $[s_1, n]$ as very far from the root.  Let $I_t = [2^t+1, 2^{t+1}]$ be a partition of $[n]$ into intervals.  Define constants $\beta = 1/4$ and $\zeta = 30$.
We now define what we mean by a typical node.
\begin{definition}[Typical node]
A node $i$ is \emph{typical} if either $w_i \geq \frac{1}{\zeta \sqrt{in}}$ or $i \le s_0$.
\end{definition}

Note that event $E_2$ implies that each interval $I_t, ~\log(s_0) \le t \le \log(s_1)$ contains a large number of typical nodes.  

The lemma below encapsulates concentration bounds on the degrees of nodes in the network as well as other useful  properties of PA networks.

\begin{lemma}\label{lemma.degree}
The following events hold with probability $1-o(1)$:
\begin{itemize}
\item
$E_6 = \{ \forall i \ge s_0:~  \deg(i) \le 6m\log(n) \sqrt{\frac{n}{i}} \}$.
\item
$E_7 = \{ \forall s_0 \le i \le s_1 \text{ that is typical}:~  \deg(i) \ge \frac{m}{2\zeta}\sqrt{\frac{n}{i}} \}$.
\item
$E_8 = \{ \forall i \le s_0:~  \deg(i) \ge \frac{m\sqrt{n}}{5\log^{1.9}(n)} \}$.
\item
$\forall i \ge s_0:~  \Prob[i \text{ is connected to } 1] \ge \frac{3.9}{\log(n)\sqrt{i}}$.
\item
$\forall j \ge i \ge s_0, 1 \leq k \leq m:~  \Prob[p_k(j) \leq i] \ge \frac{0.9\sqrt{i}}{\sqrt{j}}$.
\end{itemize}
\end{lemma}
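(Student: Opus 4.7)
I would work in the alternative PA specification from~\cite{BR-01}, conditioning on the high-probability events $E_1,\dotsc,E_5$ of Lemma~\ref{lem.events} and treating the weights $w_i,W_i$ as deterministic values satisfying the stated bounds. The point of this specification is that, given the weights, the $m$ parent choices of each node $j$ are independent samples from $\Prob[p_k(j)=\ell]=w_\ell/W_j$, so every degree count becomes a sum of independent Bernoulli random variables to which Chernoff bounds apply.

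Parts (iii) and (iv) are then immediate from the definitions. For (iv), $\Prob[p_k(j)\le i]=W_i/W_j\ge (0.99/1.01)\sqrt{i/j}\ge 0.9\sqrt{i/j}$ by the two sides of $E_1$. For (iii), any single backward edge of $i$ reaches node~$1$ with probability $w_1/W_i\ge \frac{4/((\log n)\sqrt n)}{1.01\sqrt{i/n}}\ge \frac{3.9}{\log(n)\sqrt i}$ by $E_3$ and $E_1$, and $i$ has at least one backward edge.

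For (i) and (ii) I would write the in-degree of $i$ as $X_i=\sum_{j>i}\sum_{k=1}^m \ONE[p_k(j)=i]$, a sum of independent Bernoullis with mean $\mu_i=m w_i\sum_{j>i}1/W_j$. In (i), $E_5$ together with the lower half of $E_1$ and $\sum_{j>i}1/\sqrt j\le 2\sqrt n$ give $\mu_i\le (2/0.99)\,m\log(n)\sqrt{n/i}\le 2.1\,m\log(n)\sqrt{n/i}$, so after reserving the deterministic out-degree $m$ the target leaves room $B:=5m\log(n)\sqrt{n/i}\ge 2.38\,\mu_i$ for $X_i$. In (ii), $E_4$ together with the upper half of $E_1$ yields $\mu_i\ge \frac{2m(\sqrt n-\sqrt{s_0+1})}{1.01\log^{1.9}n}\ge \frac{m\sqrt n}{\log^{1.9}n}$ for $n$ large, which is $5\log^{0.1}n$ times the target $\frac{m\sqrt n}{5\log^2 n}$.

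Concentration and a union bound close both parts. For (ii), $\mu_i$ is polynomial in $\sqrt n$, so the multiplicative Chernoff bound $\Prob[X_i<\mu_i/2]\le e^{-\mu_i/8}$ is super-polynomially small and a union bound over the $s_0=O(\log n(\log\log n)^2)$ indices $i\le s_0$ is trivial. The delicate case is (i): when $i$ is near $n$, $\mu_i$ can be as small as $\Theta(m\log n)$, where an ordinary relative Chernoff bound gives only $n^{-O(m)}$ per vertex and may not comfortably survive a union bound over $n$ vertices. I would instead use the absolute form $\Prob[X_i\ge B]\le \exp(B-\mu_i-B\ln(B/\mu_i))$; since $B/\mu_i\ge 2.38$ this simplifies to $\exp(-\Omega(B))\le n^{-\Omega(m)}$, and for $m\ge 2$ the exponent is enough to survive the union bound over all $i\in[n]$.
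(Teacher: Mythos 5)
Your proposal is correct and follows essentially the same route as the paper: decompose $\deg(i)$ as a sum of independent Bernoullis in the Bollob\'{a}s--Riordan formulation, read off (iii) and (iv) directly from $E_1$, $E_3$, and then apply Chernoff plus a union bound for (i) and (ii) using $E_5$, $E_4$ to control the mean. The one place you go beyond the paper is in (i), where the paper simply cites the multiplicative Chernoff bound to claim a $1-1/n^2$ per-vertex bound without dwelling on the fact that $\mu_i$ can be as small as $O(m\log n)$ (or smaller, if $w_i$ is atypically small); your use of the absolute form $\Pr[X_i\ge B]\le \exp(B-\mu_i-B\ln(B/\mu_i))$, together with the observation that only the upper bound $B/\mu_i\ge 2.38$ on the mean matters, cleanly handles this edge of the union bound -- a modest but genuine tightening of the paper's exposition.
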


Our next lemma states that, for any set $\Gamma$ that contains sufficiently few nodes from each interval $I_t$, and any given parent of a node $i$, with probability greater than $1/2$ the parent will be typical, not in $\Gamma$, and not in the same interval as $i$.  

\begin{definition}[Sparse set] 
A subset of nodes $\Gamma \subseteq [n]$ is \emph{sparse} if $|\Gamma \cap I_t| \leq |I_t|/\log\log(n)$ 
for all $\log s_0 \leq t \leq \log s_1$.  That is, $\Gamma$ does not contain more than a $1/\log\log n$ fraction of the nodes in any interval $I_t$ contained in $[s_0, s_1]$.
\end{definition}

\begin{lemma}
\label{lem.lem1}
Fix sparse set $\Gamma$.  Then for each $i \in [s_0, s_1]$ and $k \in [m]$, the following are true with probability $\geq 8/15$ : $p_k(i) \not\in \Gamma$, $p_k(i) \leq i / 2$, and $p_k(i)$ is typical.
\end{lemma}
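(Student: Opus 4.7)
My plan is to union-bound the three ways in which the desired event can fail: (A) $p_k(i)>i/2$, (B) $p_k(i)\in\Gamma$, and (C) $p_k(i)>s_0$ and $p_k(i)$ is atypical. (Nodes in $[1,s_0]$ are automatically typical, so (C) only rules out atypical parents above $s_0$.) Since $\Pr[p_k(i)=j]=w_j/W_i$, each bad event's probability equals the $w$-mass of the corresponding index set divided by $W_i$, so the goal is to show these three masses together account for at most $(7/15)\,W_i$.

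For (A), event $E_1$ yields $W_{i/2}/W_i \ge (0.99/1.01)(1/\sqrt{2}) > 0.69$, so $\Pr[A] < 0.31$. Boundary cases in which $i/2 < s_0$ are handled separately by bounding $W_{i/2}$ directly using $E_3$ and $E_4$.

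For (C), I partition $[s_0,i/2]$ into the dyadic intervals $I_t = [2^t+1, 2^{t+1}]$. Event $E_2$ guarantees each $I_t$ contains at most $\beta|I_t| = 2^t/4$ atypical nodes, each with $w_j < 1/(\zeta\sqrt{jn}) \le 1/(\zeta\sqrt{2^t n})$. Hence the atypical $w$-mass in $I_t$ is at most $(\beta/\zeta)\sqrt{2^t/n}$, and summing the resulting geometric series in $\sqrt{2^t}$ up through $t=\log(i/2)$ bounds the total atypical mass in $[s_0,i/2]$ by $O((\beta/\zeta)\sqrt{i/n})$. Dividing by $W_i \ge 0.99\sqrt{i/n}$ and plugging in $\beta=1/4$, $\zeta=30$ makes $\Pr[C]$ a small absolute constant (well under $0.1$).

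For (B), I decompose into the same dyadic intervals; sparsity gives $|\Gamma\cap I_t| \le 2^t/\log\log n$. Split $\Gamma\cap I_t$ into its atypical and typical parts: the atypical portion is already absorbed into the bound for (C), while the typical portion is controlled by combining the sparsity count with the per-node cap $w_j \le \log(n)/\sqrt{jn}$ from $E_5$. A geometric sum over $t$ and division by $W_i$ then yields the desired bound on $\Pr[B]$. The contribution from $\Gamma\cap[1,s_0]$ is handled separately via $W_{s_0}/W_i = O(\sqrt{s_0/i})$, which is $o(1)$ once $i$ is a suitable constant factor above $s_0$ (and a mild sparsity stipulation on $\Gamma\cap[1,s_0]$ covers the leftover boundary regime). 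Putting the three bounds together by union bound gives $\Pr[\bar A\cap \bar B\cap \bar C] \ge 1 - 0.31 - o(1) - (\text{small constant}) \ge 8/15$.

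The main obstacle is step (B): the per-node bound $w_j \le \log(n)/\sqrt{jn}$ from $E_5$ costs a $\log n$ factor, which is not obviously absorbed by the sparsity factor $1/\log\log n$. The resolution is to exploit the combined strength of $E_2$ and $E_5$ together with the total interval weight $W_{2^{t+1}} - W_{2^t} = \Theta(\sqrt{2^t/n})$ implied by $E_1$: the interval mass is too small for a $1/\log\log n$ fraction of $I_t$'s nodes to carry more than a matching fraction of that mass, forcing $\sum_{j\in\Gamma\cap I_t} w_j = O(\sqrt{2^t/n}/\log\log n)$. This turns the contribution of (B) into a geometric series summing to $o(1)$ after division by $W_i$, completing the union bound.
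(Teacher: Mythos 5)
Your high-level plan is the paper's plan: union-bound the three failure modes, handle $p_k(i)>i/2$ via $E_1$, handle atypicality via the dyadic intervals and $E_2$, and handle $\Gamma$-membership via the intervals, sparsity, and $E_5$. Your calculations for (A) and (C) are essentially identical to the paper's (the paper conditions on the interval containing $p_k(i)$ and bounds the conditional probability uniformly in $t$, whereas you sum the bad mass over all intervals and divide by $W_i$; these are equivalent here), and the numerics check out.

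The problem is exactly where you flagged it, in step (B), and your proposed resolution does not close the gap. The claim that ``the interval mass is too small for a $1/\log\log n$ fraction of $I_t$'s nodes to carry more than a matching fraction of that mass, forcing $\sum_{j\in\Gamma\cap I_t} w_j = O(\sqrt{2^t/n}/\log\log n)$'' is simply not a consequence of $E_1$, $E_2$, $E_5$, and the cardinality bound $|\Gamma\cap I_t|\le |I_t|/\log\log n$. Nothing in those events rules out that the $|I_t|/\log\log n$ heaviest nodes of $I_t$ carry essentially \emph{all} of the interval's weight: $E_5$ caps each node at $\log n/\sqrt{2^t n}$, so $|I_t|/\log\log n$ nodes at that cap could carry up to $\bigl(\log n/\log\log n\bigr)\sqrt{2^t/n}$, which exceeds the total interval mass $\Theta\bigl(\sqrt{2^t/n}\bigr)$ for every admissible $t$. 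Thus the binding constraint is the trivial ``$\le$ total interval mass,'' which gives no $1/\log\log n$ savings. $E_2$ does not help either; it is a lower bound on how many nodes have weight \emph{at least} $1/(\zeta\sqrt{jn})$, not an upper bound on the weight carried by any small set. The net effect is that your bound on $\Pr[B]$ is unsupported, and with it the final $\ge 8/15$ conclusion. (You also introduce an extra ``mild sparsity stipulation on $\Gamma\cap[1,s_0]$'' that is not part of the lemma's hypotheses; the stated definition of a sparse set constrains only the intervals in $[s_0,s_1]$.) To make (B) rigorous you would need a per-node weight bound without the $\log n$ factor, or a sparsity hypothesis on $\Gamma$ that is a $\log n$ factor stronger, or a fine-grained tail bound on how many nodes per interval can have weight much above $1/\sqrt{jn}$ --- none of which are present among $E_1$--$E_5$ as stated.
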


We now claim that, for any given node $i$ and sparse set $\Gamma$, there is likely a short path from $i$ to vertex $1$ consisting entirely of typical nodes that do not lie in $\Gamma$.  Our argument is via a coupling with a supercritical branching process.  Consider growing a subtree, starting at node $i$, by adding to the subtree any parent of $i$ that satisfies the conditions of Lemma \ref{lem.lem1}, and then recursively growing the tree in the same way from any parents that were added.  Since each node has $m \geq 2$ parents, and each satisfies the conditions of Lemma \ref{lem.lem1} with probability $> 1/2$, this growth process is supercritical and should survive with constant probability (within the range of nodes $[s_0, s_1]$).  We should therefore expect that, with constant probability, such a subtree would contain some node $j < s_0$.  

To make this intuition precise we must define the subtree structure formally.  Fix sparse set $\Gamma$ and a node $i \in [s_0, s_1]$. 
Define $H_\Gamma(i)$ to be the union of a sequence of sets $H_0, H_1, \dotsc$, as follows.  First, $H_0 = \{i\}$.  Then, for each $\ell \geq 1$, $H_\ell$ will be a subset of all the parents of the nodes in $H_{\ell-1}$.  For each $j \in H_{\ell-1}$ and $k \in [m]$, we will add $p_k(j)$ to $H_\ell$ if and only if the following conditions hold:
\begin{enumerate}
\item $p_k(j)$ is typical, $p_k(j) \not\in \Gamma$, and $p_k(j) \leq j / 2$,
\item $p_k(j) \not\in H_r$ for all $r \leq \ell$, and
\item For the interval $I_t$ containing $p_k(j)$, $|I_t \cap (H_0 \cup \dotsc \cup H_\ell)| < 10\log\log n$.  
\end{enumerate}

Item 1 contains the conditions of Lemma \ref{lem.lem1}.  Item 2 is that $p_k(j)$ has not already been added to the subtree; we add this condition so that the set of parents of any two nodes in the subtree are independent.  Item 3 is that the subtree contains at most $10\log\log n$ nodes from each $I_t$.  We will use this condition to argue that $\Gamma$ remains sparse if we add all the elements of $H_\Gamma(i)$ to $\Gamma$.

Our next lemma states that any given node $i \in [s_0,s_1]$ has a short path to the root consisting of only typical nodes with probability at least $3/4$.

\begin{lemma}
\label{lem.lem2}
Fix any sparse set $\Gamma$.  Then for each node $i \in [s_0, s_1]$, the probability that $H_{\Gamma}(i)$ contains a node $j \leq s_0$ is at least $1/5$.
\end{lemma}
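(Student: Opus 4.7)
The plan is to couple the growth of $H_\Gamma(i)$ with a supercritical Galton--Watson branching process. By Lemma~\ref{lem.lem1}, so long as conditions 2 and 3 do not bind, each of the $m \ge 2$ parents of a tree node independently satisfies condition 1 with probability at least $8/15$. Hence the offspring distribution stochastically dominates $\mathrm{Bin}(m, 8/15)$, which has mean $\mu \ge 16/15 > 1$. In the worst case $m = 2$, solving $q = (7/15 + (8/15)q)^2$ yields extinction probability $q = 49/64$, so the ideal survival probability is at least $15/64$; for $m > 2$ it is strictly larger. Since every tree edge satisfies $p_k(j) \le j/2$, indices at least halve per level, and after $D := \lceil \log_2(s_1/s_0) \rceil = O(\log n)$ levels every surviving node lies in $[1, s_0]$. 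It therefore suffices to show the coupled process reaches depth $D$ with probability at least $1/5$.

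The remaining work is to absorb the loss induced by conditions 2 and 3 into the slack $\tfrac{15}{64} - \tfrac{1}{5} = \tfrac{11}{320}$. I would proceed in two stages. First, conditional on survival, the ideal branching process reaches width $W := 10 \log\log n$ within $O(\log\log\log n)$ levels with probability $1 - o(1)$; throughout this short initial phase no interval has yet accumulated $W$ tree nodes, so condition 3 is inactive, and the expected number of condition-2 violations is small by the upper bounds on $w_{j'}/W_j = O(\log n / \sqrt{j' j})$ coming from $E_1$ and $E_5$ (since the tree contains at most $O(\log n \log\log n)$ nodes throughout). Second, once the process is at width $\ge W$, the probability that any single subsequent level fails to produce even one good child is at most $(7/15)^{mW} = (\log n)^{-\Omega(1)}$; union-bounding over the remaining $O(\log n)$ levels contributes only $o(1)$ extinction probability. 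Combining the two stages, the probability that $H_\Gamma(i)$ reaches depth $D$ is at least $15/64 - o(1)$, which exceeds $1/5$ for $n$ sufficiently large.

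The main obstacle is the second stage: once the process plateaus at width $W$, condition 3 starts discarding parents that land in already-saturated intervals, and one must verify that this thinning does not drive the instantaneous width below the level at which the bound $(7/15)^{mW}$ applies. Formalizing this requires understanding how offspring of nodes in a given interval $I_t$ are distributed among the lower intervals, using the estimate $\Prob[p_k(j) \in I_{t'}] \asymp 2^{(t'-t)/2}$ (valid under $E_1$ and $E_5$) to argue that whenever the tree is wide in one interval it generates enough offspring in nearby unsaturated intervals to sustain width $\Omega(W)$ from one level to the next.
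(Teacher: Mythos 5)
Your setup matches the paper's: coupling $H_\Gamma(i)$ with a supercritical branching process via Lemma~\ref{lem.lem1}, computing the extinction probability $q = 49/64$ for the two-offspring-at-probability-$8/15$ process, and observing that the halving condition $p_k(j) \le j/2$ forces any surviving lineage into $[1,s_0]$ after $O(\log n)$ levels. Where you diverge, and where your argument has a genuine gap, is in how you handle conditions 2 and 3 in the definition of $H_\Gamma(i)$. You treat these conditions purely as \emph{losses} to be absorbed into the slack $15/64 - 1/5$, and the unresolved obstacle you flag at the end --- that once the tree plateaus at width $W = 10\log\log n$, condition~3 thins offspring landing in saturated intervals, and one must argue the width is sustained against this thinning --- is exactly the crux. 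Resolving it the way you sketch (analyzing how offspring spread across intervals via $\Prob[p_k(j)\in I_{t'}]\asymp 2^{(t'-t)/2}$) would require a nontrivial amount of additional bookkeeping, and a few of your intermediate assertions (that the conditioned GW process reaches width $W$ in $O(\log\log\log n)$ levels with condition~3 remaining inactive throughout; that condition-2 collisions contribute only $o(1)$ loss) are plausible but would each need their own careful argument.

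The paper's proof sidesteps this entirely with a simple but decisive observation: saturation is not an obstacle but a \emph{favorable} event. It splits into two cases. If $H$ never saturates any interval, then condition~3 never binds, $H\cup\Gamma$ stays sparse, Lemma~\ref{lem.lem1} applies at every node, and the clean coupling gives extinction probability at most $49/64$. If $H$ does saturate some interval, let $I_t$ be the first one saturated; at that moment $H$ contains $10\log\log n$ nodes in $I_t$, and one can restart $10\log\log n$ independent copies of the branching process from them. The probability that \emph{all} of these go extinct before reaching $[1,s_0]$ is at most $(49/64)^{10\log\log n} = o(1/\log^2 n)$, and a union bound over the at most $\log n$ choices of $t$ contributes only $o(1)$. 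Combining the two cases gives $\Prob[H\cap[s_0]=\emptyset] \le 49/64 + o(1) < 4/5$. In other words, the paper never needs to show the width is sustained --- if the tree grows wide enough for condition~3 to bind, it has already produced enough independent restarts to make extinction negligible. You should restructure your argument along these lines: a case split on whether saturation occurs replaces the two-stage width-maintenance argument and removes the gap you identified.
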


Lemma \ref{lem.lem2} implies the following result, which we will use in our analysis of the algorithm TraverseToTheRoot.  First a definition.  
\begin{definition}[Good path]
A path $(j_0, j_1, ..., j_k)$ is \emph{good} if $j_k \leq s_0$, each $j_\ell$ is typical and, for each $\ell > 0$, $ j_\ell \leq j_{\ell-1}/2$.  We say vertex $i \in [s_0,s_1]$ \emph{has a good path} if there is a good path with $j_0 = i$.

\end{definition}

\begin{lemma}
\label{lem.good.paths}
Choose any set $T$ of at most $16 \log n$ nodes from $[s_0, s_1]$.  Then each $i \in T$ has a good path with probability at least $1/5$, independently for each $i$.
\end{lemma}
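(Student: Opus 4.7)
The plan is to process the nodes of $T$ sequentially, invoking Lemma~\ref{lem.lem2} at each step with a growing forbidden set $\Gamma$ that records every vertex touched by previous subtree explorations. Enumerate $T = \{i_1,\dotsc,i_{|T|}\}$ arbitrarily, set $\Gamma_0 = \emptyset$, and inductively define $\Gamma_j = \Gamma_{j-1}\cup H_{\Gamma_{j-1}}(i_j)$. If the random subtree $H_{\Gamma_{j-1}}(i_j)$ contains a vertex with label at most $s_0$, then the path from $i_j$ up through the subtree to such a vertex witnesses that $i_j$ has a good path: by Item~1 in the definition of $H_\Gamma$, every added vertex is typical and sits at index at most half of its child, and the terminal vertex is at most $s_0$. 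So it suffices to lower-bound the probability of this subtree-reaches-root event by $1/5$, uniformly in the history generated by earlier explorations.

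The first, routine, step is to confirm that $\Gamma_j$ stays sparse throughout the process. Item~3 in the definition of $H_\Gamma$ enforces $|H_{\Gamma_{j-1}}(i_j) \cap I_t| \le 10\log\log n$ for every interval $I_t$ with $\log s_0 \le t \le \log s_1$. Summing over all $|T| \le 16\log n$ explorations yields $|\Gamma_{|T|}\cap I_t| \le 160\,(\log n)(\log\log n)$. Since $|I_t| \ge s_0 = 160\log(n)(\log\log n)^2$ whenever $t \ge \log s_0$, this is at most $|I_t|/\log\log n$, so sparsity is preserved at every stage, which is exactly what Lemma~\ref{lem.lem2} demands for repeated applications.

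The main obstacle is upgrading the per-$i$ guarantee from Lemma~\ref{lem.lem2} into the ``independently across $i$'' form asserted here. The key observation is that constructing $H_{\Gamma_{j-1}}(i_j)$ queries parent pointers only of vertices that are actually added to the subtree: by Item~1 such a vertex lies outside $\Gamma_{j-1}$, and by Item~2 it is never revisited. Therefore the parent-choice randomness exposed by the first $j-1$ explorations is supported entirely on vertices in $\Gamma_{j-1}$, whereas the $j$th exploration only examines parents of vertices outside $\Gamma_{j-1}$. Let $\mathcal{F}_{j-1}$ denote the history after the first $j-1$ explorations; then $\Gamma_{j-1}$ is $\mathcal{F}_{j-1}$-measurable and (by the previous paragraph) almost surely sparse, and conditionally on $\mathcal{F}_{j-1}$ the parent choices of the remaining vertices are still distributed as in the original PA process. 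Lemma~\ref{lem.lem2} then applies with this frozen, deterministic $\Gamma_{j-1}$ to yield $\Prob[\,i_j\text{ has a good path}\mid \mathcal{F}_{j-1}\,] \ge 1/5$, uniformly over $\mathcal{F}_{j-1}$.

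A uniform conditional lower bound of this form is precisely the condition for the sequence of ``$i_j$ has a good path'' indicators to stochastically dominate an i.i.d.\ Bernoulli$(1/5)$ sequence, which is the ``independently for each $i$'' assertion that subsequent union-bound and Chernoff applications in the analysis of TraverseToTheRoot actually rely on. The only subtle point is ensuring each exploration reveals no randomness outside its own vertex set, and this is exactly why Items~2 and~3 in the definition of $H_\Gamma$ were engineered to quarantine the explored vertices and cap the per-interval contribution of each subtree.
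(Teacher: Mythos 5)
Your argument is correct and follows the same route as the paper's proof: apply Lemma~\ref{lem.lem2} sequentially with $\Gamma_j = \Gamma_{j-1} \cup H_{\Gamma_{j-1}}(i_j)$, verify sparsity is preserved since each subtree contributes at most $10\log\log n$ nodes per interval and $s_0 = 160\log(n)(\log\log n)^2$, and argue independence via the fact that each exploration only exposes parent-pointer randomness of vertices outside $\Gamma_{j-1}$. The one refinement you add---making the filtration explicit and phrasing ``independently for each $i$'' as a uniform conditional lower bound giving stochastic domination by i.i.d.\ Bernoulli$(1/5)$---is a cleaner formalization of what the paper asserts somewhat informally, but it is not a different proof.
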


We will apply Lemma \ref{lem.good.paths} to the set of nodes queried by TraverseToTheRoot to argue that progress toward the root is made after every sequence of polylogarithmically many steps.

We can now complete the proof of Theorem \ref{thm.PA}, which we give below.

\begin{proof}[Proof of Theorem \ref{thm.PA}]
Our analysis will consist of three steps, in which we consider three phases of the algorithm.  The first phase consists of all steps up until the first time TraverseToTheRoot traverses a node $i < s_1$ with a good path.
The second phase then lasts until the first time the algorithm queries a node $i < s_0$.  Finally, the third phase ends when the algorithm traverses node $1$. We will show that each of these phases lasts at most $O(\log^4(n))$ steps.  

We note that we will make use of Lemma \ref{lem.good.paths} in our analysis by way of considering whether certain nodes have good paths. We will check at most $16\log n$ nodes in this manner, and hence the conditions of Lemma \ref{lem.good.paths} will be satisfied.

\paragraph{Analysis of phase 1}
Phase $1$ begins with the initial node $u$, and ends when the algorithm traverses a node $i < s_1$ with a good path.  
We divide phase 1 into a number of iterations.  Iteration zero starts at node $u$.  Define iteration $t$ as the first time, after iteration $t-1$, that the algorithm queries a node $i \le s_1$.

Each new node $i$ considered in iteration $t$ will have $i < s_1$ with probability at least $W_{s_1}/1 \geq \frac{1}{2^{13} \log n}$, regardless of the previous nodes traversed. By the multiplicative Chernoff bound (\ref{lemma.multchernoff}), with probability of at 
least $1-1/n^{2}$, after at most $5\log^2(n)$ steps such a node $i <s_1$ would be found. 
By Lemma \ref{lem.good.paths} we know that node $i$ has a good path with probability at least $1/5$ independent of all nodes traversed so far. 

By the multiplicative Chernoff bound (\ref{lemma.multchernoff}), we conclude that after
at most $10\log(n)$ iterations, and total time of $O(\log^3(n))$, the algorithm traverses a node that has both 
$i < s_1$ and a good path, with probability at least $1-\frac{\log(n)}{n^2}-\frac{1}{n}$. 

We note that the number of invocations of Lemma \ref{lem.good.paths} made during the analysis of this phase is at most $2 \log n$ with high probability, and hence the cardinality restriction of Lemma \ref{lem.good.paths} is satisfied.

\paragraph{Analysis of phase 2}
Phase 2 begins once the algorithm has traversed some node $i < s_1$ with a good path, and ends when the algorithm traverses a node $j < s_0$.  
We split phase $2$ into a number of epochs.  For each $\log s_0 < t \leq \log s_1$, we define epoch $t$ to consist of all steps of the algorithm during which some node $i \in I_t$ with a good path has been traversed, but no node in any $I_\ell$ for $\ell < t$ with a good path has been traversed.  Define random variable $Y_t$ to be the length of epoch $t$.  Note phase 2 ends precisely when epoch $\log s_0$ ends.  Further, the total number of steps in phase 2 is $\sum_{t = \log s_0}^{\log s_1} Y_t$.

Fix some $\log s_0 \leq t \leq \log s_1$ and consider $Y_t$.  Suppose the algorithm is in epoch $t$, and let $i \in I_t$ be the node with a good path that has been traversed by the algorithm.  Then, from the definition of a good path and event $E_7$, $i$ has a parent $j \in I_\ell$ for some $\ell < t$ with $deg(j) \geq \frac{m}{2\zeta}\sqrt{\frac{n}{i}}$.  This node $j$ is a valid choice to be traversed by the algorithm, so any node queried before $j$ must have degree at least $\frac{m}{2\zeta}\sqrt{\frac{n}{i}}$.  Moreover, traversing node $j$ would end epoch $t$, so every step in epoch $t$ traverses a node with degree at least $\frac{m}{2\zeta}\sqrt{\frac{n}{i}}$.  By event $E_6$, any such node $\ell$ satisfies $\ell < z i \log^2(n)$ for constant $z = (4\zeta)^2$.  But we now note that, for any node $\ell < z i \log^2(n)$ traversed by the algorithm, the probability that $\ell$ has a parent\footnote{Note that even if the algorithm queried node $\ell$ via its connection to one of its parents, it will still have at least one other parent that is independent of prior nodes queried by the algorithm since $m \geq 2$.} 
$r < i/2$ is at least $\frac{W_{i/\log^2(n)}}{W_\ell} \geq \frac{1}{4\zeta \log^2 n}$.  Any such node $r$ has degree greater than any node in $I_t$, again by Lemma \ref{lemma.degree}, so if a queried node had such a parent then the subsequent step must query a node of index at most $2^t$.  Moreover, Lemma \ref{lem.good.paths} implies that this node of index at most $2^t$ has a good path with probability at least $1/5$.  Thus each step of the algorithm results in the end of epoch $t$ with probability at least $\frac{1}{20\zeta \log^2 n}$.  We conclude that $Y_t$ is stochastically dominated by a geometric random variable with mean $20\zeta \log^2 n$.  Also, the number of invocations of Lemma \ref{lem.good.paths} made during epoch $t$ is dominated by a geometric random variable with mean $5$.

We conclude that $\sum_{t = \log s_0}^{\log s_1} Y_t$ is dominated by the sum of at most $\log n$ geometric random variables, each with mean $20\zeta \log^2 n = 600 \log^2 n$.  Concentration bounds for geometric random variables (Lemma \ref{lemma.geomchernoff}) now imply that, with high probability, this sum is at most $2^{10} \log^3 n$.  We conclude that phase $2$ ends after at most $2^{10} \log^3 n$ steps with high probability.  Similarly, the total number of invocations of Lemma \ref{lem.good.paths} made during the analysis of this phase is at most $6\log n$ with high probability, again by Lemma \ref{lemma.geomchernoff}.

\paragraph{Analysis of phase 3}
We turn to analyze the time it takes from the first time the algorithm encountered a node of $i \le s_0$ until node $1$ is found.
We start by noting that the induced graph on the first $s_0$ nodes is connected with probability $1-o(1)$ (see for example corollary $5.15$ in \cite{DFF-11}, used with $n := log(n)$). We note that by Lemma~\ref{lemma.degree} every node $j \le s_0$ has degree at least 
$d = \frac{m\sqrt{n}}{5\log^{1.9}(n)}$.
As there is a path from $i$ to node $1$ where all nodes have degree at least $d$,
the algorithm, as it follows the highest neighbor of its current set $S$, will reach node $1$ before it had traversed any node of degree less than $d$. We can therefore assume that the algorithm only traverses nodes of degree greater than $d$.

By Lemma ~\ref{lemma.degree}, each node $j > s_0$ has $deg(j) \leq 6m \log(j)\sqrt{\frac{n}{j}}$ with high probability, and therefore any node $j$ with degree $> d$ must satisfy $j < (60\zeta)^2 \log^{5.8}(n)$.  For any such node, $E_1$ implies that $W_j \leq \frac{11}{10}\frac{(60\zeta) \log^{2.9}(n)}{\sqrt{n}}$.  Thus, for each such $j$, the probability that $j$ is connected to the root is $w_1 / W_j \geq \frac{1}{2^{11} \log^{3.9}(n)}$, by event $E_3$.  
Chernoff bounds (Lemma \ref{lemma.multchernoff}) then imply that such an event will occur with high probability after at most $O(\log^4(n))$ steps, so, with high probability, phase $3$ will end after at most $s_0 + O(\log^4(n)) = O(\log^4(n))$ steps. 

\end{proof}

\section{Omitted Proofs from Section \ref{sec.proof.PA}}
\label{sec.PA.lemmas}

\subsection{Proof of Lemma \ref{lem.events}}
We provide details for the proof of Lemma \ref{lem.events}.  This result follows that of Bollob\'{a}s and Riordan \cite{BR-01} quite closely; we present the differences only briefly for completeness.  

The proof that events $E_1$, $E_2$, and $E_3$ hold with high probability follows entirely without change, except for the modification of certain constants.  We therefore omit the details here.

We next show that event $E_4$ holds with high probability, by showing that $\Pr[E_4^c \cap E_1] = o(1)$.  Suppose that $E_4^c \cap E_1$ holds and let $\delta = \frac{1}{\log^{1.9}(n)\sqrt{n}}$.  As $E_1$ holds we have $W_{s_0} \leq \frac{11 \log\log(n)\sqrt{\log(n)}}{10 \sqrt{n}}$.  As $E_4$ does not hold there exists some interval $[x,x+\delta]$ with $0 \leq x \leq \frac{11 \log\log(n)\sqrt{\log(n)}}{10 \sqrt{n}}$ that contains two of the $W_i$ and hence two of the $y_{i,j}$.  Each such interval is contained in some interval $J_t = [t\delta, (t+2)\delta]$ for $0 \leq t \leq \delta^{-1}\frac{11 \log\log(n)\sqrt{\log(n)}}{10 \sqrt{n}} < 2 \log^{2.5}(n)$.  The probability that some $y_{i,j}$ lands in such an interval is $(4t+4)\delta^2$, so the probability that at least two lie in $J_t$ is at most $m^2n^2(4t+4)^2\delta^4/2 < 32m^2/\log^{2.6}(n)$.  Thus
\[ \Prob(E_4^c \cap E_1) \leq \sum_{t=0}^{2 \log^{2.5}(n)}32m^2/\log^{2.6}(n) = o(1) \]
as required.

We will next show that the event $E_5$ holds with high probability.  Recall that event $E_5$ is $\{ w_i  \le \frac{\log(n)}{\sqrt{in}} \text{ for } s_0 \le i \le n \}$.  We will show that $\Prob(E_5^c \cap E_1) = o(1)$, which will imply that $\Prob(E_5) = 1 - o(1)$ as required.

Suppose that $E_5^c \cap E_1$ holds.  Then there is some $i \geq s_0$ is such that $w_i > \frac{\log(n)}{\sqrt{in}}$.  Define $\delta = \frac{\log(n)}{\sqrt{in}}$; it must therefore be that the interval $(W_{i-1}, W_{i-1}+\delta]$ does not contain $W_i$, and hence contains at most $m-1$ of the $y_{i,j}$.  Since $E_1$ holds, we must have $W_{s_0} \geq \frac{9}{10}\sqrt{\frac{s_0}{n}}$.  We now define a partition of $[\frac{9}{10}\sqrt{\frac{s_0}{n}}, 1]$ into intervals $J_t = [x_t, x_{t+1})$ for $t \geq 0$, where we define $x_{0} = \frac{9}{10}\sqrt{\frac{s_0}{n}}$ and $x_t = x_{t-1} + \frac{\log(n)}{x_{t-1}nm}$ for all $t \geq 1$, until $x_t \geq 1$.  We note that there are at no more than $mn$ intervals $J_t$ in total.  We also note that, since $E_1$ holds, each interval $(W_{i-1}, W_{i-1}+\delta]$ contains at least $m-1$ intervals $J_t$, each satisfying $x_t \geq W_{i-1}$, one of which must contain no $y_{i,j}$ since $E_5$ does not hold.

For a given $t$ satisfying $x_t \geq W_{i-1}$, the number of $y_{i,j}$ in $J_t$ has a $Bi(mn,p_t)$ distribution with
\[ p_t = x_{t+1}^2 - x_t^2 \leq 2x_t \frac{\log(n)}{x_{t}nm} = \frac{2\log(n)}{nm}. \]
The probability that no $y_{i,j}$ lies in this interval is thus
\[ (1-p_t)^{mn} \leq e^{-mnp_t} < e^{-2\log(n)} = o(n^{-1}). \]
Summing over the $O(n)$ values of $t$ shows that $\Pr(E_5^c \cap E_1) = o(1)$, as required.

\subsection{Proof of Lemma \ref{lemma.degree}}

We will first prove that the following events hold with probability $1 - o(1)$:

\begin{itemize}
\item
$E_6 = \{ \forall i \ge s_0:~  \deg(i) \le 6m\log(n) \sqrt{\frac{n}{i}} \}$.
\item
$E_7 = \{ \forall s_0 \le i \le s_1 \text{ that is typical}:~  \deg(i) \ge \frac{m}{2\zeta}\sqrt{\frac{n}{i}} \}$.
\item
$E_8 = \{ \forall i \le s_0:~  \deg(i) \ge \frac{m\sqrt{n}}{5\log^{1.9}(n)} \}$.
\end{itemize}

Note that event $E_7$ states that typical nodes have typical degree, motivating our choice of terminology.

We start by noting that $\deg(i)= \sum_{j=i+1}^{n}\sum_{k=1}^{m}{Y_{k,j}}$
where each of the $Y_{i,j}$s is an i.i.d Bernoulli random variable that gets the value of one with success probability of $\frac{w_i}{W_j}$. This follows from the fact the each new node $j$ sends m edges backwards and the probability of each hitting node $i$ is exactly $\frac{w_i}{W_j}$. 
From $E_1$ and $E_5$, $$\E(\deg(i)) \le \sum_{j=i+1}^{n} \left( m\frac{\log(n)\frac{1}{\sqrt{in}}}{\frac{9}{10}\sqrt{\frac{j}{n}}} \right) = 
\sum_{j=i+1}^{n} \left( m\log(n)\frac{10}{9} \frac{1}{\sqrt{ij}} \right).$$ By estimating the sum with an integral we get
$$\E(\deg(i)) \le \frac{10m}{9}\log(n) \frac{\sqrt{n}}{\sqrt{i}} .$$ From the multiplicative Chernoff bound (\ref{lemma.multchernoff}) 
we conclude that with probability bigger than $1-1/n^2$, $\deg(i) \le 3m\log(n) \frac{\sqrt{n}}{\sqrt{i}}$ for a given node $i$. By using the union bound, event $E_6$ then holds with probability $1-1/n$.

To prove $E_7$ holds with probability $1-1/n$, 
we first recall that, for a typical node $i$, 
$w_i \ge \frac{m}{\zeta\sqrt{in}}$. 
This implies, similarly to the first part of the proof, 
that $$\E(\deg(i)) \ge \frac{10m}{11} \frac{\sqrt{n}}{\zeta \sqrt{i}} .$$ As $\text{Exp}(\deg(i))  \ge 16m \log(n)$
for any $ s_0 \le i \le s_1$ (since $s_1 = \frac{n}{2^{25}\log^2 n}$), we can invoke the Chernoff bound (\ref{lemma.multchernoff}) to get that $E_7$ holds with probability bigger than $1-1/n^2$ for a given node $i$. This follows by thinking of $\deg(i)$ as a sum of Bernoulli random variables $Y_{i,j}$,
where $Y_{i,j}$ succeeds with probability $\frac{\frac{1}{\zeta\sqrt{in}}}{W_j}$.
By using the union bound, event $E_7$ then holds with probability $1-1/n$.

The proof that $E_8$ holds with probability $1-1/n$ follows similarly to the proof for such a claim for $E_7$,
 by using the property that $w_i \ge \frac{1}{\log^{1.9}(n) \sqrt{n}}$.

To complete the proof of Lemma \ref{lemma.degree}, we must show that
\begin{itemize}
\item
$\forall i \ge s_0:~  \Prob[i \text{ is connected to } 1] \ge \frac{3.9}{\log(n)\sqrt{i}}$, and
\item
$\forall j \ge i \ge s_0, 1 \leq k \leq m:~  \Prob[p_k(j) \leq i] \ge \frac{0.9\sqrt{i}}{\sqrt{j}}$.
\end{itemize}
The first item follows from events $E_1$ and $E_3$ of Lemma \ref{lem.events}, plus the fact that $\Prob[p_k(i) = 1] = \frac{w_1}{W_i}$ for every $i$ and $k$.  The second item follows from event $E_1$ of Lemma \ref{lem.events}, plus the fact that $\Prob[p_k(j) \leq i] = \frac{W_i}{W_j}$.

\subsection{Proof of Lemma \ref{lem.lem1}}

We first recall the statement of the lemma.  Fix any sparse set $\Gamma$.  Then for each $i$, $s_0 \leq i \leq s_1$, and each $k \in [m]$, the following statements are all true with probability at least $8/15$ : $p_k(i) \not\in \Gamma$, $p_k(i) \leq i / 2$, and $p_k(i)$ is typical.

Fix $i$ and $k$.  For each of the three statements in the lemma, we will bound the probability of that statement being false.

First, we will show that $\Prob[p_k(i) \text{ not typical }] < 1/15$.  Note that, given that $p_k(i)$ falls within an interval $I_t$, this probability is bounded by the total weight of atypical nodes in $I_t$ divided by the total weight of $I_t$.  Since each atypical node $j$ has weight at most $\frac{1}{10 \sqrt{j n}}$ and $j > 2^t$ for all $j \in I_t$, $E_4$ implies that the total weight of atypical nodes in $I_t$ is at most 
\[ \beta|I_t|\frac{1}{10 \sqrt{2^t n}} = \beta\frac{\sqrt{2^t}}{10\sqrt{n}}. \]
Also, $E_1$ implies that the total weight of $I_t$ is 
\[ W_{2^{t+1}} - W_{2^t} \leq \sqrt{\frac{2^t}{n}}\left(\frac{99}{100}\sqrt{2} - \frac{101}{100}\right). \]
Since these bounds hold for all $t$, we conclude that 
\[ \Prob[p_k(i) \text{ not typical }] < \frac{10\beta}{99\sqrt{2} - 101} \]
which will be at most $1/15$ for $\beta = 1/4$.

Next, we will show that $\Prob[p_k(i) > i/2] < \frac{1}{3}$.  Event $E_1$ implies that
\[ \Prob[p_k(i) > i/2] = 1 - W_{i/2} / W_i \leq 1 - \frac{99}{101\sqrt{2}} < \frac{1}{3}. \] 

Finally, we will show that $\Prob[p_k(i) \in \Gamma] < 1/15$.  Given that $p_k(i)$ falls within an interval $I_t$, this probability is bounded by the total weight of $I_t \cap \Gamma$ divided by the total weight of $I_t$.  In this case, due to the assumed sparsity of $\Gamma$ and $E_5$, the former quantity is at most 
$|I_t|\frac{1}{(\log\log n)\sqrt{2^t n}} \leq \sqrt{\frac{2^t}{n}}.$
Also, as above, the total weight of $I_t$ is at most $\sqrt{\frac{2^t}{n}}(\frac{99}{10}\sqrt{2} - \frac{101}{10})$. 
Since these bounds hold for all $t$, we conclude that $\Prob[p_k(i) \in \Gamma] < \frac{1}{99\sqrt{2} - 101}$
which is at most $1/15$.

Taking the union bound over these three events, we have that the probability none of them occur is at least $8/15$ as required.

\subsection{Proof of Lemma \ref{lem.lem2}}

Let us first recall the statement of the lemma.  Fix any sparse set $\Gamma$.  Then for each node $i \in [s_0, s_1]$, the probability that $H_{\Gamma}(i)$ contains a node $j \leq s_0$ is at least $1/5$.

Fix $\Gamma$ and $i$, and write $H = H_{\Gamma}(i)$.  Let $C = [s_0]$, the set of all nodes with index $s_0$ or less.  We will show that the probability that $H \cap C = \emptyset$ is at most $4/5$.

Let $\ell$ be such that $i \in I_\ell$.  We will say that $H$ \emph{saturates} a given interval $I_t$ if $|H \cap I_t| = 10\log\log n$.  (Note that we must have $|H \cap I_t| \leq 10\log\log n$, from the definition of $H$).  Let us first consider the probability that $H \cap C = \emptyset$ and $H$ does not saturate any intervals.  Since $H$ does not saturate any intervals, and since the set $H \cup \Gamma$ is itself a sparse set, then for each node $j \in H$ and $k \in [m]$ the parent $p_k(j)$ will be added to $H$ precisely if the conditions of Lemma \ref{lem.lem1} hold, which occurs with probability at least $8/15$.  We can therefore couple the growth of the subtree $H$ within the range $[s_0, i]$ with the growth of a branching process in which each node spawns up to two children, each with probability at least $8/15$.  In this coupling, the event $H \cap C = \emptyset$ implies the event that this branching process generates only finitely many nodes.  Write $p$ for the probability that the branching process generates infinitely many nodes.  Then $p = \frac{8}{15}p + (1-\frac{8}{15}p)\frac{8}{15}p$, from which we obtain $p = \frac{15}{64}$.  We therefore have $\Prob[H \cap C = \emptyset] \leq 1-p = \frac{49}{64}$ conditional on $H$ not saturating any intervals.
Next consider the probability that $H \cap C = \emptyset$ given that $H$ does saturate some interval.  In this case, there is some smallest $t$ such that $I_t$ is saturated by $H$.  Then, given that $H$ saturates $I_t$ but no interval $I_{t'}$ for $t' < t$, then we can again couple the growth of subtree $H$ from interval $I_t$ onward with $10\log\log n$ instances of the branching process described above, each one starting at a different node in $H \cap I_t$.  In this case,  the probability that $H \cap C = \emptyset$ is bounded by the probability that each of these $10\log\log n$ copies of the branching process all generate only finitely many children.  This probability is at most $(49/64)^{10\log\log n} = o(\frac{1}{\log^2(n)})$.  Thus, taking the union bound over all possibilities for the value of $t$ (of which there are at most $\log n$), the probability that $H \cap C = \emptyset$ given that $H$ saturates some interval is at most $o(\log n / \log^2(n) ) = o(1)$.

Combining these two cases, we see that $\Prob[H \cap C = \emptyset] \leq 49/64 + o(1) < 4/5$.

\subsection{Proof of Lemma \ref{lem.good.paths}}

Write $T = \{t_1, \dotsc, t_k\}$.  We will apply Lemma \ref{lem.lem2} to each node $t_i$ in sequence.  First, for node $t_1$, define $\Gamma_1 = \emptyset$.  Lemma \ref{lem.lem2} with $\Gamma = \Gamma_1$ implies that $H_{\Gamma_1}(t_1)$ contains a node $j \leq s_0$ with probability at least 1/5.

For each subsequent node $t_i$, define $\Gamma_i = \Gamma_{i-1} \cup H_{\Gamma_{i-1}}(i-1)$.  We claim that this $\Gamma_i$ is sparse.  To see this, recall that each $H_{\Gamma}(t_{i-1})$ contains at most $10\log\log n$ nodes in each interval $I_t$, and $\Gamma_i$ is the union of at most $16 \log n$ such sets, so $|\Gamma_i \cap I_t| \leq 160 \log(n)\log\log(n)$ for each $t$.  Since $|I_t| \geq s_0 \geq 160 \log(n)(\log\log(n))^2$, we have that $|\Gamma_i \cap I_t| \leq |I_t|/\log\log(n)$ and hence $\Gamma_i$ is sparse.  Lemma \ref{lem.lem2} with $\Gamma = \Gamma_i$ then implies that $H_{\Gamma_i}(t_i)$ contains a node $j \leq s_0$ with probability at least 1/5.  Moreover, this probability is independent of the events for nodes $t_1, \dotsc, t_{i-1}$, since $H_{\Gamma_i}(t_i)$ is constrained to not depend on nodes in $\Gamma_i$, which contains all nodes that influenced the outcome for $t_1, \dotsc, t_{i-1}$.

We conclude that, for each $i$, $H_{\Gamma_i}(t_i)$ contains a node $j \leq s_0$ with probability at least 1/5, independently for each $t_i$.  For any given $i$, in the case that this event occurs and by the definition of $H_{\Gamma_i}(t_i)$, $H_{\Gamma_i}(t_i)$ contains a path $P$ from $t_i$ to $j$ consisting entirely of typical nodes, all of which are at most $t_i$, and each node on the path $P$ has creation time (index) at most half of that of its immediate predecessor.

\section{Minimum Dominating Set on Arbitrary Networks}
\label{sec.mindom}

We now consider the problem of finding a dominating set $S$ of minimal size for an arbitrary graph $G$.  Even with full (non-local) access to the network structure, it is known to be hard to approximate the Minimum Dominating Set Problem to within a factor of $H(\Delta)$ in polynomial time, via a reduction from the set cover problem, 
where $H(n) \approx \ln(n)$ is the $n$th harmonic number.  In this section we explore how much local network structure must be made visible in order for it to be possible to match this lower bound.  

Guha and Khuller \cite{GK-98} design an $O(H(\Delta))$-approximate algorithm for the minimum dominating set problem, which can be interpreted in our framework as a $2^+$-local algorithm.  
Their algorithm repeatedly selects a node that greedily maximizes the number of dominated nodes, considering only nodes within distance $2$ of a previously selected node.  
As we show, the ability to observe network structure up to distance 2 is unnecessary if we allow the use of randomness: we will construct a randomized $O(H(\Delta))$ approximation algorithm that is $1^+$-local. We then show that this level of local information is crucial: no algorithm with less local information can return a non-trivial approximation.

\subsection{A $1^+$-local Algorithm}

We now present a $1^+$-local randomized $O(H(\Delta))$-approximation algorithm for the min dominating set problem.  Our algorithm obtains this approximation factor both in expectation and with high probability in the optimal solution size\footnote{Our algorithm actually generates a connected dominating set, so it can also be seen as an $O(H(\Delta))$ approximation to the connected dominating set problem.}.  

Roughly speaking, our approach is to greedily grow a subtree of the network, repeatedly adding vertices that maximize the number of dominated nodes.  Such a greedy algorithm is $1^+$-local, as this is the amount of visibility required to determine how much a given node will add to the number of dominated vertices.  Unfortunately, this greedy approach does not yield a good approximation; it is possible for the algorithm to waste significant effort covering a large set of nodes that are all connected to a single vertex just beyond the algorithm's visibility.  To address this issue, we introduce randomness into the algorithm: after each greedy addition of a node $x$, we will also query a random neighbor of $x$.  The algorithm is listed above as Algorithm \ref{alg.mindomset} (AlternateRandom).

\begin{algorithm}
\caption{AlternateRandom} \label{alg.mindomset}
\begin{algorithmic}[1]
\STATE Select an arbitrary node $u$ from the graph and initialize $S = \{u\}$.
\WHILE{$D(S) \neq V$}
         \STATE Choose $x \in \arg\max_{v \in N(S)}\{ |N(v) \backslash D(S)| \}$ and add $x$ to $S$.
         \IF{$N(x) \backslash S \neq \emptyset$}
                   \STATE Choose $y \in N(x) \backslash S$ uniformly at random and add $y$ to $S$.
         \ENDIF
\ENDWHILE
\STATE return $S$.
\end{algorithmic}
\end{algorithm} 

We now show that
AlternateRandom obtains an $O(H(\Delta))$ approximation, both in expectation and with high probability.
In what follows, $\OPT$ will denote the size of the optimal dominating set in an inplicit input graph.

\begin{theorem}
\label{thm.mindom}
AlternateRandom is $1^+$-local and returns a dominating set $S$ where $\EE[|S|] \leq 2(1+H(\Delta)) \OPT + 1$ and $\Prob[|S| > 2(2+H(\Delta)) \OPT] < e^{-\OPT}$.
\end{theorem}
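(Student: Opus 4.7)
That AlternateRandom is $1^+$-local is immediate from inspection. At each step, computing $|N(v) \setminus D(S)|$ for a candidate $v \in N(S)$ reduces to computing $\deg(v) - |N(v) \cap D(S)|$; both quantities are available in the $1^+$-closed neighborhood, which exposes the degree of every node in $N(S)$ together with the induced subgraph on $S \cup N(S)$. Uniformly sampling $y \in N(x) \setminus S$ after $x$ joins $S$ likewise only inspects $N(x)$ and $S$, both visible once $x \in S$.

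\textbf{Expected size bound.} My plan is to adapt the classical greedy set-cover charging analysis to our local randomized setting. In the standard analysis, each newly dominated node $u$ is assessed a cost of $1/k$ in the iteration that first dominates it, where $k$ is the number of nodes dominated for the first time in that iteration; the total cost equals $|S|$. Here I would use a charge of $2/k$, since each iteration adds at most two nodes. Fix $v^* \in \OPT$ and enumerate $N[v^*] = \{u_1, \dots, u_\ell\}$ in the order the algorithm dominates them; note $\ell \leq \Delta + 1$. When $u_j$ is dominated in iteration $t$, at least $\ell - j + 1$ elements of $N[v^*]$ are uncovered at the start of iteration $t$. If $v^* \in S_t \cup N(S_t)$, then the greedy pick $x$ is at least as good a candidate as $v^*$, so the new coverage $k$ satisfies $k \geq |N(v^*) \cap U_t| \geq \ell - j$, yielding a charge on $u_j$ of at most $2/(\ell - j)$. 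Summing over $j$ contributes at most $2 H(\ell) \leq 2 H(\Delta + 1)$ to the charge accumulated by $N[v^*]$.

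The main obstacle is the opposite case, in which $v^* \not\in S_t \cup N(S_t)$ at the moment $u_j$ is covered, since then greedy makes no direct guarantee about $k$. This is precisely where the random step matters: whenever $v^*$ has any neighbor $w \in N(S_t)$, picking $w$ greedily followed by $v^*$ via the random step (which occurs with probability at least $1/|N(w) \setminus S| \geq 1/(\Delta+1)$) brings $v^*$ into $S$ and then the entire remainder of $N[v^*]$ is covered under the earlier ``good'' case. The heart of the argument is to amortize the iterations spent ``approaching'' $v^*$ against the nodes they do cover, showing that the expected additional charge incurred on $N[v^*]$ due to such pre-transition iterations is at most a constant per $v^* \in \OPT$. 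Summing over all $v^* \in \OPT$ and using $\bigcup_{v^* \in \OPT} N[v^*] = V$ so that every dominated node is counted at least once, one obtains $\EE[|S|] \leq 2(1 + H(\Delta))\OPT + 1$, where the trailing ``$+1$'' accounts for the initial seed $u$.

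\textbf{High-probability bound.} The same per-iteration analysis provides a multiplicative drift on the uncovered count: each iteration shrinks $|U_t|$ in expectation by a factor $1 - \Omega(1/\OPT)$, uniformly in the history. Introducing the exponential supermartingale $\exp(\lambda |U_t|)$ for a suitable $\lambda > 0$ (equivalently, running a multiplicative-Chernoff argument on the number of iterations required to exhaust $U$) gives concentration of the stopping time around its mean. Tuning the constants yields $\Prob[|S| > 2(2 + H(\Delta))\OPT] < e^{-\OPT}$; the additive gap between $2(1+H(\Delta))$ and $2(2+H(\Delta))$ in the bound, together with the $\OPT$ in the exponent, is exactly what the drift rate of $\Omega(1/\OPT)$ per iteration delivers.
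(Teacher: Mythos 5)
Your decomposition is structurally the same as the paper's: partition $V$ into cells $N[v^*]$ for $v^* \in \OPT$ (the paper uses sets $S_i \subseteq D(\{i\})$ with $i \in S_i$, partitioning $V$), charge $1/k$ per iteration to newly-dominated nodes, split each cell's life into a phase before $v^*$ becomes greedily eligible and one after, and handle the ``after'' phase by the classical harmonic argument and the ``before'' phase via the random neighbor step. However, the part you explicitly flag as ``the heart of the argument'' — bounding the expected charge accumulated by $N[v^*]$ while $v^*$ is not yet eligible — is left unproved, and the hint you give is the wrong bound. You argue that once some $w \in N(v^*) \cap N(S_t)$ is visible, the random step lands on $v^*$ with probability at least $1/(\Delta+1)$; that gives an expected $\Theta(\Delta)$ waiting iterations, not a constant, and would yield an $O(\Delta)$ approximation rather than $O(H(\Delta))$. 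The correct observation, which your proposal is missing, is finer: on an iteration that places total charge $p \in [0,1]$ on (as-yet uncharged) nodes of $S_i$, the random neighbor lands in $S_i$ — ending the bad phase — with probability \emph{at least} $p$. The charge placed and the escape probability are coupled, so the bad phase behaves like the optional-stopping process in the paper's Lemma \ref{lemma.mindom.sequence}: $\E\!\left[\sum_{i \le T} p_i\right] \le 1$ when $T$ is the first success in a sequence of Bernoullis with means $p_i$. Without that coupling and that lemma, the constant-per-cell bound does not follow.

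For the high-probability claim, the paper does not use a supermartingale on $|U_t|$ or a multiplicative drift of $1 - \Omega(1/\OPT)$. Instead it reuses the same coupling: the cumulative phase-1 charge across all cells is dominated by the waiting time (in total accumulated mean) until $\OPT$ successes occur among Bernoullis whose means sum to the accumulated charge; if the total charge exceeded $2\OPT$ but fewer than $\OPT$ successes occurred, a Chernoff bound gives probability $< e^{-\OPT}$. Your supermartingale sketch is not obviously wrong, but it is not developed enough to check, and the claim that greedy-plus-random gives a per-iteration contraction factor $1 - \Omega(1/\OPT)$ on $|U_t|$ uniformly in history is itself nontrivial (the random neighbor can be wasted). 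You should replace both vague steps with the explicit coupling to a Bernoulli stopping-time process.
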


\begin{proof}
Correctness follows from line 2 of the algorithm.  To show that it is $1^+$-local, it is enough to show that line 3 can be implemented by a $1^+$-local algorithm.  This follows because, for any $v \in N(S)$, $|N(v)\backslash D(S)|$ is precisely equal to the degree of $v$ minus the number of edges between $v$ and other nodes in $D(S)$.

We will bound the expected size of $S$ via the following charging scheme.  Whenever a node $x$ is added to $S$ on line 4, we place a charge of $1/|N(x) \backslash D(S)|$ on each node in $N(x) \backslash D(S)$.  
These charges sum to $1$, so sum of all charges increases by $1$ on each invocation of line 4.  
We will show that the total charge placed during the execution of the algorithm is at most $(1 + H(\Delta))\OPT$ in expectation.  This will imply that $\E[(|S|-1)/2] \leq (1+H(\Delta))\OPT$ as required.

Let $T$ be an optimal dominating set.  Partition the nodes of $G$ as follows: for each $i \in T$, choose a set $S_i \subseteq D(\{i\})$ containing $i$ such that the sets $S_i$ form a partition of $G$.  Choose some $i \in T$ and consider the set $S_i$.  We denote by a ``step'' any execution of line $4$ in which charge is placed on a node in $S_i$.  
We divide these steps into two phases: phase $1$ consists of steps that occur while $S_i \cap S = \emptyset$, and phase $2$ is all other steps.  Note that since we never remove nodes from $S$, phase 1 occurs completely before phase 2.

We first bound the total charge placed on nodes in $S_i$ in phase $1$.  In each step, some number $k$ of nodes from $S_i$ are each given some charge $1/z$.  This occurs when $|N(x) \backslash D(S)| = z$ and $(N(x) \backslash D(S)) \cap S_i = k$.  In this case, if phase $1$ has not ended as a result of this step, there is a $k/z$ probability that a node in $S_i$ is selected on the subsequent line 6 of the algorithm, which would end phase $1$.  We conclude that if the total charge added to nodes in $S_i$ on some step is $p \in [0,1]$, phase $1$ ends for set $S_i$ with probability at least $p$.  The following probabilistic lemma now implies that the expected sum of charges in phase $1$ is at most $1$.

\begin{lemma}
\label{lemma.mindom.sequence}
For $1 \leq i \leq n$, let $X_i$ be a Bernoulli random variable with expected value $p_i \in [0,1]$.  Let $T$ be the random variable denoting the smallest $i$ such that $X_i = 1$ (or $n$ if $X_i = 0$ for all $i$).  Then $\E_T\left[\sum_{i=1}^T p_i \right] \leq 1$.
\end{lemma}
\begin{proof}
We proceed by induction on $n$.  The case $n = 1$ is trivial.  For $n > 1$, we note that 
\[ \E_T\left[\sum_{i=1}^T p_i \right] = p_1 + (1-p_1) \E_T\left[\sum_{i=2}^T p_i \ |\ X_1 = 0 \right] \leq p_1 + (1-p_1) \cdot 1 = 1 \]
where the inequality follows from the inductive hypothesis applied to $X_2, \dotsc, X_n$.

\end{proof}


Consider the charges added to nodes in $S_i$ in phase $2$.  During phase $2$, vertex $i$ is eligible to be added to $S$ in step $4$.  Write $u_j = |S_i \backslash D(S)|$ for the number of nodes of $S_i$ not dominated on step $j$ of phase $2$.  Then, on each step $j$, $u_j - u_{j+1}$ nodes in $S_i$ are added to $D(S)$, and at least $u_j$ nodes in $G$ are added to $D(S)$ (since this many would be added if $i$ were chosen, and each choice is made greedily).  Thus the total charge added on step $j$ is at most $\frac{u_j - u_{j+1}}{u_j}$.  Since $u_\Delta = 0$, the total charge over all of phase $2$ is at most
$ \sum_{j = 1}^{\Delta-1} \frac{u_j - u_{j+1}}{u_j} \leq \sum_{j = 1}^{\Delta-1} \frac{1}{j} \leq H(\Delta). $
So the expected sum of charges over both phases is at most $1 + H(\Delta)$.

We now turn to show that $\Prob[|S| > 2(2+H(\Delta)) \OPT] < e^{-\OPT}$. \\
We will use the same charging scheme we defined in the main text; it suffices to show that the total charge placed, over all nodes in $G$, is at most $(2 + H(\Delta))\OPT$ with probability at least $1 - e^{-\OPT}$.  Note that our bound on the charges from phase 2 in the analysis of the expected size of $|S|$ holds with probability $1$. it is therefore sufficient to bound the probability that the sum, over all $i$, of the charges placed in phase $1$ of $S_i$ is at most $2\OPT$.

For each node $x$ added to $S$ on line $4$, consider the total number of nodes in $N(x)\backslash D(S)$ that lie in sets $S_i$ that are in phase $1$.  Suppose there are $k$ such nodes, and that $|N(x)\backslash D(S)| = z$.  Then the sum of charges attributed to phase $1$ increases by $k/z$ on this invocation of line $4$.  Also, the probability that any of these $k$ nodes is added to $S$ on the next execution of line $6$ is at least $k/z$, and this would end phase $1$ for at least one set $S_i$.  

We conclude that, if the sum of charges for phase $1$ increases by some $p \in [0,1]$, then with probability $p$ at least one set $S_i$ leaves phase $1$.  Also, no more charges can be attributed to phase $1$ once all sets $S_i$ leave phase $1$, and there are $\OPT$ such sets.  The event that the sum of charges attributed to phase $1$ is greater than $2\OPT$ is therefore dominated by the event that a sequence of Bernoulli random variables $X_1, \dotsc, X_n$, each $X_i$ having mean $p_i$ with $\sum p_i > 2\OPT$, has sum less than $\OPT$.  However, by the multiplicative Chernoff bound (lemma ~\ref{lemma.multchernoff}), this probability is at most
\[ \Prob\left[\sum_{i=1}^n X_i < \OPT \right] = \Prob\left[\sum_{i=1}^n X_i < \frac{1}{2}\E[\sum_{i=1}^n X_i] \right]< e^{-\OPT} \]
as required.

\end{proof}


We end this section by showing that $1^+$-locality is necessary for constructing good local approximation
algorithms.

\begin{theorem}
\label{thm:lowerbndmindom}
For any randomized $1$-local algorithm $A$ for the min dominating set problem, there exists an input instance $G$ for which $\E[|S|] = \Omega(n) \OPT$, where $S$ denotes the output generated by $A$ on input $G$.
\end{theorem}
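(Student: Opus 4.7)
The plan is to exhibit a single graph $G$ on $n$ vertices with $\OPT(G) = 2$ such that every randomized $1$-local algorithm $A$ is forced to return a set $S$ with $\E[|S|] = \Omega(n)$. I would take $G$ to be the disjoint union of a clique on $n-1$ vertices and a single isolated vertex $v^*$. A straightforward check shows $\OPT(G)=2$: no single vertex dominates $V(G)$ (clique vertices have no edge to $v^*$, and $v^*$ has no neighbors), while $\{v^*, c\}$ is a dominating set for any clique vertex $c$.

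The key observation driving the lower bound is that every dominating set of $G$ must contain $v^*$ itself, since $v^*$ has no neighbors and thus only dominates itself. Because $v^*$ is isolated, $v^*$ never lies in the $1$-open neighborhood of any $S\subseteq V(G)\setminus\{v^*\}$, so no Crawl step can ever add $v^*$ to $S$. Consequently the only way $A$ can add $v^*$ is via a Jump that happens to select $v^*$, which occurs with probability exactly $1/n$ on each Jump, independently of the algorithm's history.

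To finish, let $T$ denote the number of Jumps performed by $A$ before $v^*$ is first selected; by the preceding observation $T$ stochastically dominates a Geometric$(1/n)$ random variable, so $\E[T]\geq n$. A short calculation shows that with constant probability $A$ performs at least $n/2$ Jumps before sampling $v^*$, during which the expected number of \emph{distinct} vertices sampled is $(n-1)\bigl(1-(1-1/(n-1))^{n/2}\bigr) = \Omega(n)$. Since every distinct vertex queried by $A$ (via Jump or Crawl) lies in $S$, this yields $\E[|S|]=\Omega(n)$, and therefore $\E[|S|]/\OPT(G)=\Omega(n)$, as required. The only step requiring a little care is the coupon-collector style conversion from a lower bound on the number of Jumps to a lower bound on the number of distinct vertices in $S$; the rest of the argument is essentially the pigeonhole observation that a $1$-local algorithm has no way to discover an isolated vertex other than by uniform random sampling.
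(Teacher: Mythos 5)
Your proof is correct as a piece of mathematics for the graph you construct, and it does take a genuinely different (and simpler) route than the paper: you use a single fixed input and argue directly that the one vertex which must appear in any dominating set (the isolated vertex $v^*$) can never enter the $1$-open neighborhood of any queried set, so it is reachable only by uniform Jumps; the paper instead invokes Yao's minmax principle over a distribution of \emph{connected} near-cliques, removes one edge $(u,v)$ from a clique on $n-2$ nodes and attaches pendant vertices $u',v'$ to $u$ and $v$ so that $u,v$ have the same degree as every other clique vertex, and argues indistinguishability: until one of $\{u,v,u',v'\}$ is queried, the algorithm has no signal distinguishing $u,v$ from the rest, so $\Omega(n)$ queries are needed in expectation.

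The one thing you should be careful about is that your construction is a \emph{disconnected} graph with an isolated vertex, whereas the rest of this section tacitly assumes connected inputs. The companion upper bound (AlternateRandom, Theorem \ref{thm.mindom}) grows a connected set starting from a seed and never Jumps, so on your instance its while-loop never terminates; indeed the paper explicitly remarks that AlternateRandom produces a \emph{connected} dominating set, which only exists for connected graphs. So your lower-bound instance sits outside the class of graphs for which the matching positive result is stated. This does not make your argument wrong, but it does mean it proves a weaker statement than the paper does: it separates $1$-local from $1^+$-local only if disconnected inputs are permitted, and if they are, the upper bound no longer applies to the same class. To make the lower bound robust against connected instances you need something like the paper's device of hiding the critical vertices inside the clique at the same degree as all the others, which is exactly why they delete the edge $(u,v)$ before attaching $u',v'$. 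I'd recommend either (a) stating explicitly that you are allowing disconnected inputs and observing that this changes the comparison, or (b) switching to a connected construction in the spirit of the paper's. Also, a small nit: $T$ as you define it (Jumps \emph{before} $v^*$ is selected) is Geometric$(1/n)$ minus one, so $\E[T]=n-1$, not $\geq n$; this does not affect the conclusion.
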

\begin{proof}
We consider a distribution over input graphs $G = (V,E)$ of size $n$, described by the following construction process.  Choose $n-2$ nodes uniformly at random from $V$ and form a clique on these nodes.  Choose an edge at random from this clique, say $(u,v)$, and remove that edge from the graph.  Finally, let the remaining two nodes be $u'$ and $v'$, and add edges $(u,u')$ and $(v,v')$ to $E$.  By the Yao's minmax principle \cite{Yao77}, it suffices to consider the expected performance of a deterministic $1$-local algorithm on inputs drawn from this distribution.

Note that each such graph has a dominating set of size $2$, namely $\{u,v\}$.
Moreover, any dominating set of $G$ must contain at least one node in $C = \{u,v,u',v'\}$, and hence a $1$-local algorithm must query a node in $C$.  However, if no nodes in $C$ have been queried, then nodes $u$ and $v$ are indistinguishable from other visible unqueried nodes (as they all have degree $n-1$).  Thus, until the algorithm queries a node in $C$, any operation is equivalent to querying an arbitrary unqueried node from $V \backslash \{u',v'\}$.  With high probability, $\Omega(n)$ such queries will be executed before a node in $C$ is selected.

\end{proof}

\subsection{Partial Coverage Problems}
We next study problems in which the goal is not necessarily to cover all nodes in the network, but rather dominate only sections of the network that can be covered efficiently.  We consider two central problems in this domain: the partial dominating set problem and the neighbor collecting problem.

\paragraph{Partial Dominating Set}
In the partial dominating set problem we are given a parameter $\rho \in (0,1]$.  The goal is to find the smallest set $S$ such that $|D(S)| \geq \rho n$. 

We begin with a negative result: for any constant $k$ and any $k$-local algorithm, there are graphs for which the optimal solution has constant size, but with high probability $\Omega(\sqrt{n})$ queries are required to find any $\rho$-partial dominating set. Our example will apply to $\rho = 1/2$, but can be extended to any constant $\rho \in (0,1)$.

\begin{theorem}
\label{thm.partial.lb}
For any randomized $k$-local algorithm $A$ for the partial dominating set problem with $\rho = 1/2$, there exists an input $G$ with optimal partial dominating set $\OPT$ for which 
the expected size of the output returned is $\E[|S|] = \Omega(\sqrt{n}) \cdot |\OPT|$, 
where $S$ denotes the output generated by $A$ on input $G$.
\end{theorem}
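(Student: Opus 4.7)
The plan is to apply Yao's minimax principle as in Theorems~\ref{lemma.lbconnect} and~\ref{lemma.lbcoverage}. I would exhibit a distribution $\mathcal{D}$ over input graphs on $n$ vertices such that every graph in the support admits a $\rho = 1/2$-partial dominating set of small size, yet every deterministic $k$-local algorithm returns a valid partial dominating set of expected size $\Omega(\sqrt{n}) \cdot |\OPT|$ when the input is drawn from $\mathcal{D}$. Converting back via Yao's principle gives the claimed approximation factor against randomized algorithms.

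The proposed distribution is a ``hidden gadget'' construction in the spirit of Theorem~\ref{lemma.lbcoverage}. Take a low-degree skeleton graph on $\Theta(n)$ vertices built from many structurally identical, locally symmetric subgraphs (so that any vertex outside the planted gadget has an indistinguishable $k$-open neighborhood), and plant at a uniformly random position inside the skeleton a gadget $\Gamma^\ast$ containing a distinguished hub $h$ that dominates $n/2$ additional vertices. The optimal solution is then of small size (just $\{h\}$ plus a constant number of auxiliary vertices), while every vertex outside the gadget has degree $O(1)$, so any dominating set avoiding $\Gamma^\ast$ must already contain $\Omega(n)$ vertices.

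The analysis consists of two claims. First, a $k$-local algorithm cannot locate $\Gamma^\ast$ quickly: by the local symmetry of the skeleton, Crawl queries provide no positional bias, and Jump queries land inside $\Gamma^\ast$ only with probability $O(1/\sqrt{n})$. A straightforward union bound over the first $O(\sqrt{n})$ queries then implies that with probability $1 - o(1)$ the algorithm has not yet added any vertex of $\Gamma^\ast$ to $S$. Second, until this happens, every vertex added to $S$ contributes only $O(1)$ newly dominated vertices; hence any valid $\rho$-partial dominating set output before entering $\Gamma^\ast$ must already have size $\Omega(\sqrt n) \cdot |\OPT|$. Combining the two bounds gives the desired expected output size.

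The main obstacle is the design of the gadget $\Gamma^\ast$ itself. A naive gadget that attaches $n/2$ pendants directly to $h$ does not work, because a Jump to a pendant---an event of probability $1/2$---immediately reveals $h$ under any $k \geq 1$ local view, allowing the algorithm to finish in $O(1)$ queries. The correct gadget must therefore embed the $n/2$ vertices dominated by $h$ in a way that their $k$-local neighborhoods are indistinguishable from those of generic skeleton vertices, thereby deferring the revelation of $h$ until the algorithm has queried $\Omega(\sqrt{n})$ vertices deep inside $\Gamma^\ast$. This is analogous in spirit to the ``broken path'' gadget of Theorem~\ref{lemma.lbconnect}: the local view of every vertex is the same as in a decoy subgraph, but the crucial structural element (there, a missing edge; here, the hub $h$) can only be discovered by paying a $\Omega(\sqrt n)$ traversal cost. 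Realizing such a gadget concretely---so that it simultaneously dominates $n/2$ vertices, preserves local indistinguishability across the skeleton, and fits inside a graph of exactly $n$ vertices---is the central combinatorial challenge of the proof.
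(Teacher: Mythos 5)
Your high-level framing (Yao's minimax, a planted gadget inside a locally symmetric skeleton so that Crawls carry no positional information and Jumps rarely help) matches the paper's intent, and you have correctly identified the central obstruction: a hub $h$ that dominates $n/2$ vertices has $\Omega(n)$ edges, so a single Jump lands on a neighbor of $h$ with constant probability and immediately exposes $h$ under any $k\geq 1$ visibility. But you then leave resolving this obstruction as ``the central combinatorial challenge,'' which means the proposal is incomplete exactly where it matters. Worse, the framework you commit to (``a single distinguished hub $h$ that dominates $n/2$ additional vertices'' that must remain hidden) is internally inconsistent: if $\Gamma^\ast$ must contain $h$ together with the $n/2$ vertices it dominates, then $|\Gamma^\ast|\geq n/2$ and a Jump lands inside $\Gamma^\ast$ with probability $\Omega(1)$, not $O(1/\sqrt n)$ as you assert. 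No amount of cleverness in the gadget's internal wiring fixes this; the constraint of hiding a $\Theta(n)$-degree node from $1$-local Jumps is unsatisfiable.

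The paper evades this impossibility by \emph{not} trying to hide the big hub. Its instance is two stars: a large star rooted at $v$ with $n/2-\sqrt n-1$ leaves (which the algorithm \emph{can} find with a single Jump and is welcome to take), and a small star rooted at $u$ with $\sqrt n-1$ leaves, together with $r=\Theta(n/k)$ identical paths of length $k+1$ hanging off leaves of the big star, exactly one of which secretly connects to $u$, plus $\sqrt n$ isolated filler vertices. Taking $v$ covers only $n/2-\sqrt n$ vertices — just shy of the required $\rho n=n/2$ — so any valid solution must also reach the small star. That small star has only $\sqrt n$ vertices (hence $\Omega(\sqrt n)$ expected Jumps to hit it), and the connecting path is one of $\Theta(n/k)$ locally indistinguishable paths (hence $\Omega(n/k)$ expected Crawls to find it). Meanwhile $|\OPT|=2$ ($\{u,v\}$). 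The lesson you should extract: the thing to hide is not the $\Omega(n)$-degree dominator but a small $\Theta(\sqrt n)$-size \emph{supplement} that is needed only because the easily-found dominator falls just short of the coverage threshold. This sidesteps the very contradiction you diagnosed, and is the missing idea in your proposal.
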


\begin{proof}

Fix $n$ and write $r = \frac{n/2-\sqrt{n} -1}{k}$.  We define a distribution over input graphs on $n$ nodes corresponding to the following construction process.  Build two stars, one with $n/2 - \sqrt{n} -1$ leaves and one with $\sqrt{n} -1$ leaves, where the nodes in these stars are chosen uniformly at random.  Let $v$ and $u$ be the roots of these stars, respectively.  Construct $r$ paths, each of length $k+1$, again with the nodes being chosen uniformly at random. Connect one endpoint of each path to a separate leaf of the star rooted at $v$.  Choose one of these $r$ paths and connect its other endpoint to node $u$.  Last, add $\sqrt{n}$ isolated nodes to get the number of nodes equal $n$ in the construction. By the Yao minmax principle \cite{Yao77}, it suffices to consider the expected performance of a deterministic algorithm on a graph chosen from this distribution.

For any such graph, the optimal solution contains two nodes: the root of each star.  We claim that any $k$-local algorithm performs at least $\sqrt{n}$ queries in expectation.  First, if the algorithm does not return the root of the smaller star as part of its solution, then it must return at least $O(\sqrt{n})$ nodes and hence it must use $\Omega(\sqrt{n})$ queries.  On the other hand, suppose that the algorithm does return the root of the smaller star.  Then it must have either traversed the root some node along the path connecting the centers of the stars, or else found a node in the smaller star via a random jump query.  The latter takes $\Omega(\sqrt{n})$ Jump queries, in expectation.  For the former, note that an algorithm cannot distinguish the path connecting the two stars from any other path connected to node $v$, until after a vertex on one of the two paths has been queried.  It would therefore take $\Omega(r) = \Omega(n/k)$ queries in expectation to traverse one of the nodes on the path between the two stars.  We therefore conclude that any algorithm must perform at least $\Omega(\sqrt{n})$ queries in expectation in order to construct an admissible solution.
\end{proof}

Motivated by this lower bound, we consider a bicriteria result: given $\epsilon > 0$, we compare the performance of an algorithm that covers $\rho n$ nodes with the optimal solution that covers $\rho (1+\epsilon) n$ nodes (assuming $\rho(1+\epsilon) \leq 1$).  We show that a modification to Algorithm \ref{alg.mindomset}, in which jumps to uniformly random nodes are interspersed with greedy selections, yields an $O((\rho\epsilon)^{-1}H(\Delta))$ approximation.

\begin{algorithm}
\caption{AlternateRandomAndJump} \label{alg.mindom.partialcoverage}
\begin{algorithmic}[1]
\STATE Initialize $S = \emptyset$.
\WHILE{$D(S) \neq V$}
         \STATE Choose a node $u$ uniformly at random from the graph and add $u$ to $S$.
         \STATE Choose $x \in \arg\max_{v \in N(S)}\{ |N(v) \backslash D(S)| \}$ and add $x$ to $S$.
         \IF{$N(x) \backslash S \neq \emptyset$}
                   \STATE Choose $y \in N(x) \backslash S$ uniformly at random and add $y$ to $S$.
         \ENDIF
\ENDWHILE
\STATE return $S$.
\end{algorithmic}
\end{algorithm} 

\begin{theorem}
\label{thm.partial.alg}
Given any $\rho \in (0,1)$, $\epsilon \in (0, \rho^{-1}-1)$, and set of nodes $\OPT$ with $|D(\OPT)| \geq \rho (1+\epsilon) n$, Algorithm \ref{alg.mindom.partialcoverage} (AlternateRandomAndJump)
returns a set $S$ of nodes with $|D(S)| \geq \rho n$ and $\E[|S|] \leq 3 |\OPT| (\rho\epsilon)^{-1} H(\Delta)$.  
\end{theorem}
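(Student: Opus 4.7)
I adapt the charging argument from the proof of Theorem \ref{thm.mindom} (AlternateRandom). As there, whenever line 4 adds $x$, place charge $1/|N(x)\setminus D(S)|$ on each node in $N(x)\setminus D(S)$, so each round deposits total charge exactly $1$. Let $R$ be the number of rounds; since each round adds at most $3$ nodes to $S$, we have $|S|\leq 3R$, and it suffices to show $\E[R] = O(|\OPT|(\rho\epsilon)^{-1}H(\Delta))$. Partition $D(\OPT)$ into $\{S_i\}_{i\in\OPT}$ with $i\in S_i\subseteq D(\{i\})$ and adopt the phase~1/phase~2 decomposition for each $S_i$: phase~1 while $S_i\cap S=\emptyset$, phase~2 thereafter.

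The phase analysis of Theorem \ref{thm.mindom} transfers essentially verbatim to bound the expected total charge placed on $D(\OPT)=\bigcup_i S_i$. The phase~2 bound of $H(\Delta)$ per $S_i$ holds deterministically via the same greedy/set-cover argument (node $i$ remains a valid greedy candidate). The phase~1 bound of $1$ per $S_i$ in expectation holds via Lemma \ref{lemma.mindom.sequence} applied to the line~6 random neighbor; the new line~3 jump only provides additional ways for phase~1 to end early, which can only decrease the expected phase~1 charge. Hence $\E[\text{charge on }D(\OPT)]\leq |\OPT|(1+H(\Delta))$.

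The new ingredient is to translate this charge bound into a bound on $R$ by exploiting the partial-coverage slack. Call a round \emph{useful} if the line~3 jump lands in $D(\OPT)\setminus D(S)$. Before termination, $|D(S)|<\rho n$ and $|D(\OPT)|\geq(1+\epsilon)\rho n$ force $|D(\OPT)\setminus D(S)|\geq \epsilon\rho n$, so each round is useful with conditional probability at least $\rho\epsilon$, giving $\E[\#\text{useful rounds}]\geq \rho\epsilon\cdot\E[R]$. On the other hand, each useful round's target $u\in S_i$ satisfies $i\notin S$ (otherwise $u\in D(\{i\})\subseteq D(S)$), so $i$ joins $N(S)\cup S$ and becomes a candidate for the immediately following greedy step, committing nontrivial charge to $S_i$. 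Restricting the phase~1/phase~2 accounting to useful rounds yields $\E[\#\text{useful rounds}]\leq O(|\OPT|H(\Delta))$, and combining these inequalities gives $\E[R]\leq O(|\OPT|H(\Delta)/(\rho\epsilon))$, whence $\E[|S|]\leq 3|\OPT|(\rho\epsilon)^{-1}H(\Delta)$ after absorbing constants.

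The main obstacle is the upper bound on the expected number of useful rounds. Useful rounds can occur during both phase~1 and phase~2 of the targeted $S_i$: a naive count of phase-transitions gives only $|\OPT|$ useful rounds from phase~1, but a naive bound of $|S_i\setminus D(S)|$ useful rounds in phase~2 of each $S_i$ gives $|\OPT|\Delta$, losing a factor $\Delta/H(\Delta)$ relative to the target bound. The careful argument must align each phase-2 useful round with a share of the $H(\Delta)$ phase-2 greedy charge budget of $S_i$, using that after any useful jump into $S_i$ the greedy step effectively makes AlternateRandom-style progress on $S_i$ (since $i$ is now a candidate); I expect this interleaving, not the charge analysis itself, to be the technically delicate step.
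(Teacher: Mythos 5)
Your bicriteria slack argument is correct and matches the paper's: each round's jump lands in $D(\OPT)\setminus D(S)$ with probability at least $\rho\epsilon$ before termination, and your observation that a useful jump into $S_i$ makes $i$ a candidate for line~4 is the crucial lever. But the step you flag as ``technically delicate''---the upper bound $\E[\#\text{useful rounds}] \leq O(|\OPT|H(\Delta))$---is not a delicate refinement of the Theorem~\ref{thm.mindom} accounting; it genuinely fails with the charging scheme you propose, and ``restricting phase~1/phase~2 accounting to useful rounds'' has no mechanism behind it. In that scheme, line~4 deposits its entire unit of charge on $N(x)\setminus D(S)$. On a useful round that targets $S_i$, the greedy vertex $x$ may be chosen far from $i$, and $N(x)\setminus D(S)$ can be disjoint from $D(\OPT)$; such a useful round then deposits zero charge on $D(\OPT)$. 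So bounding the expected charge on $D(\OPT)$ by $|\OPT|(1+H(\Delta))$ gives no handle on the number of useful rounds, and the useful-round-to-charge alignment you hope for does not exist.

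The paper closes this gap by replacing the charging scheme rather than reusing it. It assigns each node of $D(\OPT)$ a status Inactive/Active/Charged, where nodes of $S_i\setminus D(S)$ become Active once $i\in D(S)$, and on each useful round it deposits exactly one unit of charge entirely inside $D(\OPT)$: at rate $1/w$ to $w$ vertices, where $w=\min\{|N(x)\setminus D(S)|,\ \#\text{Active}\}$, first to $D(\OPT)\cap(N(x)\setminus D(S))$ and then to arbitrary further Active nodes until $w$ vertices are charged (and marked Charged). This guarantees two things the original scheme does not: (a) every useful round deposits charge~$1$ in $D(\OPT)$, giving $\E[\text{total charge}]\geq\rho\epsilon\E[|S|]/3$; and (b) when $S_i$ has $k$ Active nodes, both $|N(x)\setminus D(S)|\geq k$ (by your observation that $i$ is a candidate) and the Active count is at least $k$, so $w\geq k$, and each node of $S_i$ is charged at most once at rate at most $1/k$, yielding $H(\Delta)$ per $S_i$ in phase~2. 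The $\min$ normalization and the spill-over of charge onto Active nodes are precisely the ingredients your outline lacks; they are what make ``each useful round commits a unit of charge to $D(\OPT)$'' literally true, which is false under the charging rule of Theorem~\ref{thm.mindom}.
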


\begin{proof}

We apply a modification of the charging argument used in Theorem \ref{thm.mindom}.  Let $\OPT$ be a set of nodes as in the statement of the theorem.  We will partition the nodes of $D(\OPT)$ as follows: for each $i \in \OPT$, choose a set $S_i \subseteq D(\{i\})$ containing $i$, such that the sets $S_i$ form a partition of $D(\OPT)$.  

During the execution of algorithm \ref{alg.mindom.partialcoverage}, we will think of each node in $D(OPT)$ as being marked either as \textbf{Inactive}, \textbf{Active}, or \textbf{Charged}.  At first all nodes in $D(\OPT)$ are marked \textbf{Inactive}.  During the execution of the algorithm, some nodes may have their status changed to \textbf{Active} or \textbf{Charged}.  Once a node becomes \textbf{Active} it never subsequently becomes \textbf{Inactive}, and once a node is marked \textbf{Charged} it remains so for the remainder of the execution.  Specifically, all nodes in $D(\OPT) \cap D(S)$ are always marked \textbf{Charged}, in addition to any nodes that have been assigned a charge by our charging scheme (described below).  Furthermore, for each $i \in \OPT$, the nodes in $S_i$ that are not \textbf{Charged} are said to be \textbf{Active} if $i \in D(S)$; otherwise they are \textbf{Inactive}.

Our charging scheme is as follows.  On each iteration of the loop on lines 2-7, we will either generate a total charge of 0 or of 1.  Consider one such iteration.  Let $u$ be the node that is queried on line $2$ of this iteration.  If $u \not\in D(OPT) \backslash D(S)$ then we will not generate any charge on this iteration.  Suppose instead that $u \in D(OPT) \backslash D(S)$.  If no nodes are \textbf{Active} after $u$ has been queried\footnote{This situation can occur only when $u$ is the only node in $S_i \backslash D(S)$ for some $i$.} then we place a unit of charge on $u$.  Otherwise, let $x$ be the node selected on line 4.  Let $z = |N(x) \backslash D(S)|$ be the number of new nodes dominated by $x$, and let $z'$ be the number of \textbf{Active} nodes.  Let $w = \min\{z,z'\}$.  We will then charge $1/w$ to $w$ different vertices, as follows.  First, we charge $1/w$ to each vertex in $D(OPT) \cap (N(x) \backslash D(S))$ (note that there are at most $w$ such nodes).  If fewer than $w$ nodes have been charged in this way, then charge $1/w$ to (arbitrary) additional \textbf{Active} nodes until a total of $w$ nodes have been charged.  We mark all charged nodes as \textbf{Charged}.

We claim that the total expected charge placed over the course of the algorithm will be $\rho\epsilon|S|/3$.  To see this note that, on each iteration of the algorithm, there are at least $\rho \epsilon n$ nodes in $D(OPT) \backslash D(S)$ (since the algorithm has not yet completed).  Thus, with probability at least $\rho \epsilon$, a node from $D(OPT) \backslash D(S)$ will be chosen on line $2$.  Thus, in expectation, at least a $\rho \epsilon$ fraction of iterations will generate a charge.  Thus, on algorithm termination, the sum of the charges on all vertices is expected to be at least $\rho \epsilon |S|/3$.

Choose some $i \in OPT$ and consider set $S_i$.  We will show that the total expected charge placed on the nodes of $S_i$ during the execution of algorithm $A_2$ is at most $(1+H(\Delta))$.  Since there are $|\OPT|$ such sets, and since only nodes in sets $S_i$ ever receive charge, this will imply that the total expected charge over all nodes is at most $(1+H(\Delta))\OPT$.  We then conclude that $\rho \epsilon |S|/3 \leq (1+H(\Delta))|\OPT|$, completing the proof.

The analysis of the total charge placed on nodes of $S_i$ is similar to the analysis in Theorem \ref{thm.mindom}.  In expectation, a total charge of $1$ will be placed on the nodes of $S_i$ before $i \in D(S)$ (this is phase 1 in the proof of Theorem \ref{thm.mindom}).  After $i \in D(S)$, all nodes in $S_i \backslash D(S)$ are marked \textbf{Active}.  When a node is crawled on line 4, if $k > 0$ nodes in $S_i \backslash D(S)$ are \textbf{Active}, then it must be that $i \in D(S)$ but $i \not\in S$.  Thus, $i$ is a valid choice for the node selected on line $4$.  So, on any such iteration, it must be that the node selected on line 4 dominates at least $k$ new nodes.  We conclude that each node that is charged on this iteration receives a charge of at most $1/k$.

To summarize, if $k$ nodes of $S_i$ are \textbf{Active} on a given iteration, then any nodes in $S_i$ can be charged at most $1/k$ on that iteration.  Since $|S_i| \leq \Delta$, we conclude in the same manner as in Theorem \ref{thm.mindom} that the total charge allocated to nodes in $S_i$, after the first node in $S_i$ becomes \textbf{Active}, is at most $\sum_{k = 1}^{\Delta}\frac{1}{k} = H(\Delta)$.  We conclude that the total expected charge placed on all nodes in $S_i$ is at most $1 + H(\Delta)$, as required.

\end{proof}

\paragraph{The Neighbor Collecting Problem}
We next consider the objective of minimizing the total cost of the selected nodes plus the number of nodes left uncovered: choose a set $S$ of $G$ that minimizes $f(S) = c|S| + |V \backslash D(S)|$ for a given parameter $c > 0$.  This problem is motivated by the Prize-Collecting Steiner Tree problem.

Note that when $c < 1$ the problem reduces to the minimum dominating set problem: it is always worthwhile to cover all nodes.  Assuming $c \geq 1$, the $1^+$-local algorithm for the minimum dominating set problem achieves an $O(c H(\Delta))$ approximation.

\begin{theorem}
\label{thm.nbrcollecting}
For any $c \geq 1$ and set $\OPT$ minimizing $f(\OPT)$, algorithm AlternateRandom returns a set $S$ for which $\E[f(S)] \leq 2c(1+H(\Delta))f(\OPT)$.
\end{theorem}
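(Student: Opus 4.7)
The plan is to reduce the neighbor-collecting objective back to the minimum dominating set guarantee of Theorem \ref{thm.mindom}. Two observations drive the argument. First, AlternateRandom always terminates with a dominating set (this is the loop condition in line 2 of Algorithm \ref{alg.mindomset}), so the output $S$ leaves no node uncovered and $f(S) = c|S|$. Second, given any solution $\OPT$ to the neighbor-collecting problem, we can convert it into a feasible dominating set $T = \OPT \cup (V \setminus D(\OPT))$, simply by throwing in all of the uncovered vertices.

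The key estimate is that $|T| \leq f(\OPT)$. Indeed,
\[
|T| \leq |\OPT| + |V \setminus D(\OPT)| \leq c|\OPT| + |V \setminus D(\OPT)| = f(\OPT),
\]
using $c \geq 1$ in the second inequality. Thus the size of a minimum dominating set in $G$ is at most $f(\OPT)$.

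Now I would apply Theorem \ref{thm.mindom} to the graph $G$: since AlternateRandom is the very same algorithm analyzed there, its output $S$ satisfies $\E[|S|] \leq 2(1+H(\Delta))\cdot \OPT_{\mathrm{MDS}} + 1$, where $\OPT_{\mathrm{MDS}}$ denotes the size of a minimum dominating set. Combining with $\OPT_{\mathrm{MDS}} \leq |T| \leq f(\OPT)$ and multiplying through by $c$ gives
\[
\E[f(S)] = c\,\E[|S|] \leq 2c(1+H(\Delta))\,f(\OPT) + c.
\]
Since $f(\OPT) \geq 1$ (every nonempty graph has positive objective value under $f$, because either $\OPT$ is nonempty, costing at least $c \geq 1$, or $\OPT = \emptyset$ and some nodes remain uncovered), the additive $c$ can be folded into the leading term, yielding the claimed bound $\E[f(S)] \leq 2c(1+H(\Delta))\,f(\OPT)$.

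There is no real obstacle here; the whole proof is a short reduction. The only thing to be careful about is verifying that $c \geq 1$ is actually used and is exactly what makes the conversion $\OPT \mapsto T$ lossless with respect to $f(\OPT)$, and to note that the $1^+$-locality of AlternateRandom is inherited directly from Theorem \ref{thm.mindom}, requiring no additional argument for this problem.
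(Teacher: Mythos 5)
Your proof takes essentially the same route as the paper's: observe that $S$ is a dominating set so $f(S) = c|S|$, note that $\OPT \cup (V \setminus D(\OPT))$ is itself a dominating set of size at most $f(\OPT)$ (using $c \geq 1$), and then invoke Theorem \ref{thm.mindom} to bound $\E[|S|]$ in terms of the minimum dominating set size. One small remark: your attempt to "fold in" the additive $+c$ (coming from the $+1$ in Theorem \ref{thm.mindom}) does not actually recover the stated constant $2c(1+H(\Delta))$—you would get $c(3 + 2H(\Delta))$ at best—but the paper's own proof silently drops the $+1$ as well, so you and the paper are equally informal on this point.
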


We also give a lower bound of $\Omega(c,H(\Delta))$ for this problem, illustrating that the
show that the dependency on $c$ is unavoidable. We also show that Theorem \ref{thm.nbrcollecting} cannot be extended to $1$-local algorithms without significant loss.

\begin{proofof}{Theorem \ref{thm.nbrcollecting}}
We have $f(\OPT) = |V-D(\OPT)|+c|\OPT|$ and $f(\OPT \cup \{V - D(\OPT) \}) = c|\OPT \cup  \{V-D(\OPT) \}|  = c|\OPT|+c|V-D(\OPT)| \ge c |T^{*}|$ where $T^{*}$ is a minimum dominating set of the graph.  Next, we know from Theorem \ref{thm.mindom} that 
$|T^{*}| \ge (2(1+H(\Delta)))^{-1} \E[|S|] = (2(1+H(\Delta)))^{-1} c^{-1} f(S)$.

Finally, $f(\OPT) = |V-D(\OPT)| + c |\OPT|$ so 
$c|\OPT|+c|V-D(\OPT)| \le c f(\OPT)$. We conclude that
$f(S) \le 2(1+H(\Delta)) c f(\OPT)$.
\end{proofof}

Since the neighbor-collecting problem contains the minimum dominating set problem as a special case (i.e.\ when $c = 1$), we cannot hope to avoid the dependency on $H(\Delta)$ in the approximation factor in Theorem \ref{thm.nbrcollecting}.  As we next show, the dependence on $c$ in the approximation factor we obtain in Theorem \ref{thm.nbrcollecting} is also unavoidable.

\begin{theorem}
\label{thm.neighbor.c}
For any randomized $k$-local algorithm $A$ for the neighbor-collecting problem where $k =o(n)$, there exists an input instance $G$ for which $\E[f(S)] = \Omega(\max\{c, \log \Delta\}) \cdot f(\OPT)$, where $S$ denotes the output generated by $A$ on input $G$.

\end{theorem}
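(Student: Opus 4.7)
The plan is to split $\max\{c,\log\Delta\}$ into its two regimes and construct a separate adversarial instance for each, then to combine them as disjoint components in a single graph so that the harder of the two bounds applies. The main payoff is the $\Omega(c)$ part, which is the novel piece; the $\Omega(\log\Delta)$ part will follow from classical results.

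For the $\Omega(\log\Delta)$ piece, I will use that neighbor-collecting with $c=1$ is exactly the minimum dominating set problem, so any $k$-local algorithm (being polynomial time) inherits the standard $\Omega(\log\Delta)$ polynomial-time inapproximability of set cover. For general $c\ge 1$, the same set-cover instance continues to force the ratio: because $\OPT$ here dominates the entire graph, its cost scales as $c\,|\OPT_{\text{dom}}|$ and any algorithm beating an $O(\log\Delta)$ ratio for $f$ would beat an $O(\log\Delta)$ ratio for minimum dominating set, a contradiction.

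For the $\Omega(c)$ piece, I will adapt the ``hide the target behind long paths'' construction of Theorems~\ref{lemma.lbconnect} and~\ref{thm.partial.lb} and apply Yao's minimax principle against deterministic $k$-local algorithms. The instance will consist of a root $u$ from which $c$ locally indistinguishable branches emanate, each being a path of length $L=\omega(k)$ followed by a small terminal gadget. In exactly one (uniformly random) branch the gadget contains a cheap dominator $v^*$ with $\Delta$ private leaves, while in the other $c-1$ branches the gadgets are ``dead'' (e.g., a single degree-1 leaf). The first $L-k$ nodes of every branch look identical to the algorithm's $k$-open view, so identifying which branch contains $v^*$ requires either jumping into the $k$-neighborhood of a gadget or crawling to within $k$ of a branch's endpoint. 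Each such probe inserts at least one new node into $S$ at cost $c$, and, by the symmetry of the adversary's distribution, the posterior over the location of $v^*$ remains nearly uniform on untouched branches, so in expectation $\Omega(c)$ probes are required. Calibrating parameters so that $f(\OPT)=\Theta(c)$ is achieved by $\{v^*\}$ alone, I will then conclude $\E[f(S)]\ge c\cdot\E[|S|]=\Omega(c^2)=\Omega(c)\cdot f(\OPT)$.

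The hard part will be ensuring that random-jump queries cannot shortcut the search. To keep $f(\OPT)=\Theta(c)$ I need $v^*$ to dominate a large part of the graph, but then $v^*$'s neighborhood forms a nontrivial fraction of $n$ that a single jump might hit. I will resolve this tension by padding each branch with sufficiently long uninformative paths (total length $\Omega(ck)$) and adding isolated decoy nodes so that the fraction of ``informative'' nodes near $v^*$ is $O(1/c)$, making both random-jump and crawl strategies require $\Omega(c)$ expected queries before locating $v^*$ with constant probability; this step parallels the conditioning argument of Theorem~\ref{thm.partial.lb}. Placing this gadget and the set-cover gadget as disjoint components in a single instance then yields the claimed $\Omega(\max\{c,\log\Delta\})$ bound for every $k=o(n)$.
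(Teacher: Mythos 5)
Your high-level plan---hide a hard-to-reach cheap dominator behind $k$-locally indistinguishable branches and invoke Yao's principle---is in the same spirit as the paper's argument, but the paper realizes it with a different structure (two stars joined by a short connector, with $\Theta(c)$ decoy degree-one attachments on spokes of the big star) rather than with $c$ long paths. That structural difference is not cosmetic; it is exactly what your proposal is missing, and the gap shows up precisely at the place you flag as the ``hard part.''

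The claim that ``$f(\OPT)=\Theta(c)$ is achieved by $\{v^*\}$ alone'' cannot hold in your construction. You have $c$ branches, each a path of length $L=\omega(k)$, so at least $cL=\omega(ck)\ge\omega(c)$ path-interior nodes that $v^*$ does not dominate; thus $f(\{v^*\})\ge c+\Omega(cL)$, which already dwarfs $c$. Your own proposed fix makes this worse: ``padding each branch with sufficiently long uninformative paths (total length $\Omega(ck)$)'' forces $f(\OPT)\ge\Omega(ck)$, and the isolated decoy nodes contribute identically to $f(\OPT)$ and $f(S)$ (they are undominated in both, or cost $c$ each to select in both), so they drive the ratio toward $1$ rather than protecting it. The tension you identify between ``$v^*$ dominates most of $G$'' and ``$v^*$ is hard to jump near'' is real, but neither long paths nor decoys resolve it. The paper resolves it by splitting the cheap dominator into \emph{two} hubs: a big star hub that dominates the bulk of the graph and is easy for both OPT and the algorithm to find, plus a small ($\sqrt n$-node) star whose hub is hard to reach---so OPT pays $2c$ plus $\Theta(c)$ undominated decoys, while the bulk of $G$ never enters the accounting. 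Your single-$v^*$ construction has no analogue of the big star, which is why $f(\OPT)$ blows up.

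Two secondary issues. First, placing the $\Omega(c)$ gadget and a set-cover gadget as disjoint components yields a lower bound governed by the \emph{weaker} of the two ratios (the algorithm can spend its budget on the component it handles well, and the costs in $f$ are additive), not by $\max\{c,\log\Delta\}$; you should simply present the two instances separately and take whichever is relevant for the given $c$, which is also what the paper implicitly does. Second, your bound $\E[f(S)]\ge c\,\E[|S|]$ presumes the algorithm is forced to \emph{query} $\Omega(c)$ nodes, but the algorithm may instead leave $v^*$'s leaves undominated and pay $\Delta$; you therefore need $\Delta$ calibrated against both the crawl cost $\Theta(cL\cdot c)$ and the jump cost $\Theta(c\cdot n/\Delta)$, a three-way balance your sketch does not pin down. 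The paper sidesteps all three calibrations by confining the search to the spokes of a single large, cheaply-dominated star.
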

\begin{proof}
Give $n$ we construct a connected graph on $n$ nodes in the following way. Create two star subgraphs one on $n-\sqrt{n}-2k$ nodes and one on $\sqrt{n}$ nodes.
We connect one arbitrary leaf of the big star subgraph to one arbitrary leaf of the smaller star subgraph.
To complete the construction we choose $k$ spoke nodes from the bigger star subgraph and connect each of them
to one new node of degree one. This gives us a connected graph on $n$ vertices.  Note that OPT is at most $2c +k$ as it can always choose the hubs of the two stars. The worse cost of the min dominating set algorithm 
is at most $(1+2k+2)c$. This happens when the algorithms starts from a spoke in the bigger star component and
need to traverse all $k$ spokes that were assigned one new neighbor. Only after traversing all such nodes 
we move into the spoke of the small star subgraph and then the ub of the smaller star subgraph. Thus the approximation ratio is at least
$\frac{(1+2k+2)c}{2c+k}$. This expression is the biggest (as a function of $k$) for $k=\Theta(c)$. In that case the expression is $\Theta(c)$.
 \end{proof}

Finally, one cannot move from $1^+$-local algorithms to $1$-local algorithms without significant loss: 
every $1$-local algorithm has a polynomial approximation factor.

\begin{theorem}
\label{thm.neighbor.1local}
For any randomized $1$-local algorithm $A$ for the neighbor-collecting problem, there exists an input instance $G$ for which $\E[f(S)] = \Omega(\sqrt{n}/c) \cdot f(\OPT)$, where $S$ denotes the output generated by $A$ on input $G$.
\end{theorem}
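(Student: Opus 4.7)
The plan is to apply Yao's minmax principle to a distribution over input graphs in which a small but indispensable component is statistically invisible to every $1$-local observer. I focus on the regime $c \ge 1$; when $c < 1$ the problem reduces to the minimum dominating set problem, and the stronger $\Omega(n)$ bound of Theorem~\ref{thm:lowerbndmindom} already implies the claim.

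Let $\mathcal{D}_n$ be the following distribution on $n$-vertex graphs. Sample a uniformly random partition of $[n]$ into a ``big star'' part $B$ of size $n-\sqrt{n}$ and a ``small star'' part $H$ of size $\sqrt{n}$; within each, choose a uniformly random hub ($v^*$ for $B$, $u^*$ for $H$) and connect all remaining vertices of that part to the hub as degree-$1$ leaves. The two components share no vertices or edges. Since $\{u^*, v^*\}$ dominates every vertex, $f(\OPT) \le 2c$ for every graph in the support of $\mathcal{D}_n$.

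Now fix any deterministic $1$-local algorithm $A$ and let $T = |S|$ at termination. Because $B$ and $H$ lie in disjoint components, no Crawl from inside $B$ can ever reach $H$, so by induction the first time an $H$-vertex enters $S$ it must arrive via a Jump. A short calculation using the symmetry of $\mathcal{D}_n$ shows that for each Jump the conditional probability (given the entire prior history and conditioned on no $H$-vertex yet being in $S$) of landing in $H$ equals exactly $|H|/n = 1/\sqrt{n}$. A union bound over the at most $T$ Jumps gives $\Pr[S \cap H \neq \emptyset] \le T/\sqrt{n}$.

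I then split on $T$. If $T \ge \sqrt{n}/2$, the query cost alone gives $f(A) \ge cT \ge c\sqrt{n}/2$. Otherwise $T < \sqrt{n}/2$, so with probability at least $1/2$ we have $S \cap H = \emptyset$, leaving all $\sqrt{n}$ vertices of $H$ undominated and thus $f(A) \ge \sqrt{n}$. In either case $\E[f(A)] = \Omega(\min\{c\sqrt{n}, \sqrt{n}\}) = \Omega(\sqrt{n})$ using $c \ge 1$, and dividing by $f(\OPT) \le 2c$ yields the desired ratio $\Omega(\sqrt{n}/c)$. Applying Yao's principle then produces a fixed graph $G$ witnessing the bound against every randomized $1$-local algorithm. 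The main obstacle is justifying the symmetry claim used to bound the per-Jump hit probability: one must verify that while $S \cap H = \emptyset$, every possible $1$-local observation from inside $B$ (either a degree-$1$ leaf, whose only neighbor has degree $n - \sqrt{n} - 1$, or the hub $v^*$ itself) is consistent with every partition of the unqueried vertices that places $\sqrt{n}$ of them into $H$, so no information about $H$'s labels is leaked before the first successful Jump into $H$.
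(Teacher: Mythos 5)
Your proof is correct, but it takes a genuinely different (and arguably cleaner) route than the paper. The paper's construction is a \emph{connected} graph: a clique on $n-\sqrt{n}$ vertices with one edge $(u,v)$ removed, a star on $\sqrt{n}$ vertices rooted at $r$ with a distinguished leaf $v'$, and a single bridge edge $(v,v')$. The hard part in that proof is showing that neither $r$ nor $v$ can be found cheaply: $v$ is indistinguishable from the other clique vertices (requiring $\Omega(n)$ queries inside the clique), and the star can only be entered by a Jump hitting one of its $\sqrt{n}$ vertices. Your construction is a \emph{disconnected} pair of stars, which entirely removes the indistinguishability argument inside the clique: since no Crawl can cross components, the small star is reachable only via Jump, and the union bound over at most $T$ Jumps immediately gives the $T/\sqrt{n}$ bound. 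This is a simplification, and it is permissible here because the model allows $S$ to start empty and permits Jump queries on disconnected graphs, and the theorem only requires existence of \emph{some} input instance $G$.

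One small imprecision worth fixing in your writeup: the closing ``symmetry'' justification is stated too strongly. Once the algorithm queries the big star's hub $v^*$, the $1$-open neighborhood reveals all of $B$'s labels (the hub and its $n - \sqrt{n} - 1$ degree-$1$ neighbors), so the unqueried labels \emph{are} known to lie in $H$, contradicting the claim that ``no information about $H$'s labels is leaked.'' But this does not damage the argument at all, because Jump is defined to choose a vertex uniformly at random over all $n$ nodes regardless of what the algorithm knows; it is not a targeted query. The per-Jump hit probability $|H|/n = 1/\sqrt{n}$ therefore holds unconditionally and without any symmetry argument, so you can simply delete that paragraph and appeal directly to the non-adaptive nature of the Jump operation.

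Also, the case split on the random variable $T=|S|$ should be phrased slightly more carefully (you are conditioning on a stopping-time event). A clean fix: let $F$ be the event that none of the first $\lfloor \sqrt{n}/2 \rfloor$ queries is a Jump landing in $H$; by the union bound $\Pr[F] \ge 1/2$. On $F$, either the algorithm makes at least $\sqrt{n}/2$ queries (so $f(S) \ge c\sqrt{n}/2$) or it terminates earlier with $S \cap H = \emptyset$, leaving $\sqrt{n}$ nodes of $H$ undominated (so $f(S) \ge \sqrt{n}$). Either way $f(S) \ge \min\{c\sqrt{n}/2, \sqrt{n}\}$, so $\E[f(S)] \ge \tfrac{1}{2}\min\{c\sqrt{n}/2, \sqrt{n}\} = \Omega(\sqrt{n})$ when $c \ge 1$, and the ratio to $f(\OPT) \le 2c$ is $\Omega(\sqrt{n}/c)$ as claimed.
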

\begin{proof}
We construct our graph $G$ as follows.  Build a clique on $n-\sqrt{n}$ vertices and remove one edge $(u,v)$.  Next build a star with $\sqrt{n}-1$ leaves, say with root $r$, and label one of the leaves $v'$.  Finally, add edge $(v,v')$.

For this graph, the set $\{r, v\}$ has cost $2c$.  Consider the set $S$ returned by a $1$-local algorithm; we will show that $S$ will have cost at least $\sqrt{n}$ with high probability.  If $S$ does not include $r$ or $v$ then it must leave $\sqrt{n}$ nodes uncovered (or else contain at least $\sqrt{n}$ vertices), in which case it has cost at least $\sqrt{n}$. So $S$ must contain some node in the star centered at $r$.  A node in the star can be found either via a random query or by querying node $v$.  Since the star contains $\sqrt{n}$ nodes, it would take $\Omega(\sqrt{n})$ random queries to find a node in the star with high probability.  On the other hand, node $v$ is indistinguishable from the other nodes in the $(n-\sqrt{n})$-clique until after it has been queried; it would therefore take $\Omega(n)$ queries to the nodes in the clique to find $v$, again with high probability.  We conclude that the cost of $S$ is at least $\sqrt{n}$ with high probability, as required.
\end{proof}

\section{Conclusions}

We presented a model of computation in which algorithms are constrained in the information they have about the input structure, which is revealed over time as expensive exploration decisions are made.  
Our motivation lies in determining whether and how an external user in a network, who cannot make arbitrary queries of the graph structure, can efficiently solve optimization problems in a local manner.
Our results suggest that inherent structural properties of social
networks may be crucial in obtaining strong performance bounds. 

Another implication is that the designer of a network interface, such as an online social network platform, may gain from considering the power and limitations that come with the design choice of how much network topology to reveal to individual users.  On one hand, revealing too little information may restrict natural social processes that users expect to be able to perform, such as searching for potential new connections. 
On the other hand, revealing too much information may raise privacy concerns, or enable unwanted behavior such as automated advertising systems
searching to 
target certain individuals.
Our results suggest that
even minor changes to the structural information made available to a user may have a large impact 
on the class of optimization problems that can be reasonably solved by the user.

\section*{Acknowledgments}
The author Sanjeev Khanna was supported in part by NSF awards CCF-1116961 and IIS-0904314, and by ONR MURI grant N00014-08-1-0747.

\bibliographystyle{plain}
\bibliography{LocalInfo}

\appendix
\section*{APPENDIX}

\section{Omitted proofs from Section \ref{section.PAapps}}
\label{appendix.PAapps}

\subsection{Degree of the First Fixed $k$ Nodes in Preferential Attachment Networks}
\label{sec.node1degree}

In this section we prove that for any fixed $k$, with high probability, the root node in a preferential attachment network has degree at least $m\sqrt{n}/\log(n)$ and the degree of the $i$'th node, $i \leq k$, is at least $\frac{m\sqrt{n}}{4\log^2(n)}$.  

\begin{lemma}
Let $k$ be fixed. Consider a preferential attachment network on $n$ nodes in which each node generates $m$ links.  Then, with probability at least $1 - o(n^{-1})$,
$\deg(1) \ge \frac{m\sqrt{n}}{\log n}$ and for all $i \leq k:~\deg(i) \ge \frac{m\sqrt{n}}{4\log^2 n}$.
\end{lemma}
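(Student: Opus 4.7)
The plan is to work in the alternative specification of the preferential attachment process (with weights $w_i$ and partial sums $W_i$) that was introduced for the proof of Theorem~3.1, and to condition on the events $E_1$, $E_3$ and $E_4$ of Lemma~3.3, all of which hold with probability $1-o(1)$. Once we have conditioned on these events, the weights $w_1,\dotsc,w_n$ may be treated as fixed numbers satisfying $w_1\ge 4/(\log(n)\sqrt{n})$, $w_i\ge 1/(\log^{1.9}(n)\sqrt{n})$ for every $i\le s_0$, and $|W_j-\sqrt{j/n}|\le (1/100)\sqrt{j/n}$ for every $j\ge s_0$; in particular the bound on $w_i$ covers every $i\le k$ since $k$ is a fixed constant and $s_0=160\log(n)(\log\log n)^2\to\infty$.

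Next, I would write $\deg(i)\ge Z_i:=\sum_{j=i+1}^{n}\sum_{\ell=1}^{m}Y_{i,j,\ell}$, where, conditional on the weights, the $Y_{i,j,\ell}$ are independent Bernoulli variables with $\Pr[Y_{i,j,\ell}=1]=w_i/W_j$. Using $W_j\le (101/100)\sqrt{j/n}$ for $j\ge s_0$ and comparing the tail sum $\sum_{j=s_0+1}^{n}1/\sqrt{j}$ to an integral, one obtains
\[
\E[Z_i]\;\ge\; m\,w_i\sum_{j=s_0+1}^{n}\frac{100}{101\sqrt{j/n}}\;\ge\;\tfrac{3}{2}\,m\,w_i\,n
\]
for all sufficiently large $n$. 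Plugging in the two weight bounds gives $\E[Z_1]\ge 6m\sqrt{n}/\log n$ and, for every $i\le k$, $\E[Z_i]\ge (3/2)\,m\sqrt{n}/\log^{1.9}(n)$; in both cases the mean exceeds the desired lower bound by a constant factor of at least $6$.

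The final step is a routine concentration argument. Since $Z_i$ is a sum of independent Bernoulli variables with $\E[Z_i]=\Omega(\sqrt{n}/\log^{1.9}(n))\gg \log n$, the multiplicative Chernoff bound (invoked, say, with deviation parameter $5/6$) yields $\Pr[Z_i<\E[Z_i]/6]\le \exp(-\Omega(\sqrt{n}/\log^{1.9}(n)))=o(n^{-2})$. A union bound over the finite set $\{1,2,\dotsc,k\}$, together with the $o(1)$ probability that any of $E_1$, $E_3$ or $E_4$ fails, gives the claim with failure probability $o(n^{-1})$.

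I do not expect serious obstacles here: the conditional independence provided by the Bollob\'as--Riordan construction turns the problem into a straightforward Chernoff calculation, and the only real work is verifying the lower bound on $\E[Z_i]$ using event $E_1$. The minor nuisance is making sure that the bounds from Lemma~3.3 extend to all $i\le k$, which is automatic because $s_0\to\infty$ makes the range of validity of $E_4$ contain $\{1,\dotsc,k\}$ for large $n$.
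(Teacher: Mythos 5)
Your proposal is correct and follows essentially the same route as the paper: both work in the Bollob\'as--Riordan weight formulation, condition on events $E_1$, $E_3$, $E_4$ of Lemma~3.3, write the degree as a sum of independent Bernoullis with success probabilities $w_i/W_j$, lower-bound the mean by an integral comparison, and finish with a multiplicative Chernoff bound plus a union bound over the $k$ fixed indices. One small caveat (which applies equally to the paper's own proof): since the conditioning events from Lemma~3.3 are only guaranteed with probability $1-o(1)$, the overall failure probability you can actually conclude is $o(1)$, not $o(n^{-1})$; the stronger Chernoff tail you obtain is dominated by the $o(1)$ cost of the conditioning, so the stated $1-o(n^{-1})$ is not really reached by this argument.
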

\begin{proof}
We will use the notation from Appendix \ref{sec.PA.lemmas}.  In the PA formulation of Appendix \ref{sec.PA.lemmas} we have that $\deg(1)= \sum_{j=2}^{n}\sum_{k=1}^{m}{Y_{k,j}}$ where each of the $Y_{k,j}$s is an i.i.d Bernoulli random variable that gets the value of one with success probability $\frac{w_1}{W_j}$.  From $E_1$ and $E_3$ in Lemma \ref{lem.events}, we have
$$\E(\deg(1)) > \sum_{j=s_0}^{n} \left( m\frac{4/\log(n)\sqrt{n}}{\frac{9}{10}\sqrt{\frac{j}{n}}} \right) = 
\sum_{j=s_0}^{n} \left( m\frac{40}{9} \frac{1}{\log(n)\sqrt{j}} \right).$$
By estimating the sum with an integral we get
$$\E(\deg(1)) > \frac{39m}{9}\frac{\sqrt{n}}{\log(n)} .$$ From the multiplicative Chernoff bound (\ref{lemma.multchernoff}) 
we conclude that with probability bigger than $1-1/n$, $\deg(1) \geq m\frac{\sqrt{n}}{\log(n)}$, as required.
By repeating the same proof and using the lower bound on the wight of a node with small index (event $E_4$ instead of event $E_3$), we get that for any $i \leq k$, with probability bigger than $1-1/n$, $$\deg(i) \geq \frac{m \sqrt{n}}{4\log^2(n)} .$$ The complete claim follows from the union bound.
\end{proof}


\section{Concentration Bounds}
\begin{lemma}(\text{Multiplicative Chernoff Bound})
\label{lemma.multchernoff}
Let $X_i$ be $n$ i.i.d. Bernoulli random variables with expectation $\mu$ each. Define $X = \sum_{i=1}^{n}{X_i}$.
Then, \\
For $0 < \lambda <1,~Pr[X < (1-\lambda)\mu n] < \exp(-\mu n\lambda^2/2)$. \\
For $0 < \lambda <1,~Pr[X > (1+\lambda)\mu n] < \exp(-\mu n\lambda^2/4)$. \\
For $\lambda >1,~Pr[X > (1+\lambda)\mu n] < \exp(- \mu n\lambda/2) $.      

\end{lemma}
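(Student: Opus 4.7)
The plan is to prove the Multiplicative Chernoff Bound via the standard moment-generating-function (exponential Markov) approach, since it is the final statement in the excerpt and does not reference any paper-specific structure.

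First, I would handle both tails uniformly by applying Markov's inequality to $e^{tX}$ for a free parameter $t$. For the upper tail with $t>0$, this gives
\[
\Pr[X > (1+\lambda)\mu n] \;=\; \Pr[e^{tX} > e^{t(1+\lambda)\mu n}] \;\le\; \frac{\EE[e^{tX}]}{e^{t(1+\lambda)\mu n}},
\]
and by i.i.d.\ independence $\EE[e^{tX}] = \prod_i \EE[e^{tX_i}]$. For a Bernoulli$(\mu)$ variable, $\EE[e^{tX_i}] = 1 + \mu(e^t-1) \le e^{\mu(e^t-1)}$, so $\EE[e^{tX}] \le e^{n\mu(e^t-1)}$. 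Combining yields the master inequality $\Pr[X > (1+\lambda)\mu n] \le \exp\bigl(n\mu(e^t - 1 - t(1+\lambda))\bigr)$, and optimizing in $t$ (setting $t=\ln(1+\lambda)$) gives the sharp form
\[
\Pr[X > (1+\lambda)\mu n] \;\le\; \Bigl(\tfrac{e^{\lambda}}{(1+\lambda)^{1+\lambda}}\Bigr)^{\mu n}.
\]
The analogous lower-tail argument with $t<0$ and optimizer $t=\ln(1-\lambda)$ yields
\[
\Pr[X < (1-\lambda)\mu n] \;\le\; \Bigl(\tfrac{e^{-\lambda}}{(1-\lambda)^{1-\lambda}}\Bigr)^{\mu n}.
\]

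Next, I would reduce each sharp bound to the three stated forms using standard scalar inequalities. For the lower tail with $0<\lambda<1$, I would verify by two derivatives that $(1-\lambda)\ln(1-\lambda) + \lambda \ge \lambda^2/2$, which gives $\Pr[X < (1-\lambda)\mu n] \le e^{-\mu n \lambda^2/2}$. For the upper tail with $0<\lambda<1$, I would use the slightly looser $(1+\lambda)\ln(1+\lambda) - \lambda \ge \lambda^2/4$ (checked by comparing the Taylor series at $0$ and monotonicity on $(0,1]$), yielding the $\exp(-\mu n\lambda^2/4)$ bound. For $\lambda>1$, I would instead bound $(1+\lambda)\ln(1+\lambda) - \lambda \ge \lambda/2$ by noting that both sides vanish at $\lambda=0$ and the derivative $\ln(1+\lambda)$ exceeds $1/2$ for $\lambda \ge \sqrt{e}-1$, handling small $\lambda>1$ directly.

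The conceptually easy part is the exponential Markov step and the independence factorization, which are textbook. The only point that demands care is the third step, where the elementary analytic inequalities must be matched exactly to the constants $\lambda^2/2$, $\lambda^2/4$, and $\lambda/2$ advertised in the lemma; a minor subtlety is that the upper-tail constant $1/4$ for $0<\lambda<1$ is weaker than the sharper $1/3$ one sees in some textbooks, so I would not waste effort trying to tighten it but rather pick the cleanest inequality that gives the stated constant. Once these scalar bounds are in hand, combining them with the sharp Chernoff form completes all three cases.
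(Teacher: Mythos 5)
The paper states this lemma as a standard concentration fact and supplies no proof of its own, so there is nothing in the paper to compare your approach against directly; the MGF/exponential-Markov route you chose is indeed the canonical one, and your reductions for the lower tail ($(1-\lambda)\ln(1-\lambda)+\lambda \ge \lambda^2/2$) and the upper tail on $(0,1)$ ($(1+\lambda)\ln(1+\lambda)-\lambda \ge \lambda^2/4$) are both correct and follow from exactly the second-derivative arguments you sketch.

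However, your third scalar inequality has a genuine gap. You claim $(1+\lambda)\ln(1+\lambda)-\lambda \ge \lambda/2$ for all $\lambda>1$, proposing to observe that the derivative of the difference, $\ln(1+\lambda)-\tfrac12$, is positive once $\lambda \ge \sqrt e -1$ and to ``handle small $\lambda>1$ directly.'' But direct evaluation at $\lambda=1$ gives $2\ln 2 - 1 \approx 0.386$, which is \emph{less} than $\lambda/2 = 0.5$; at $\lambda = 1.2$ one gets $2.2\ln 2.2 - 1.2 \approx 0.535 < 0.6$. The difference $h(\lambda) = (1+\lambda)\ln(1+\lambda) - \tfrac{3\lambda}{2}$ starts at $h(0)=0$, dips to a negative minimum near $\lambda = \sqrt e - 1$, and does not return to zero until $\lambda \approx 1.4$. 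So the inequality you need is \emph{false} on the whole interval $(1, 1.4)$, and no amount of ``handling small $\lambda>1$ directly'' can rescue it. Moreover, since the Poisson-approximation Chernoff form $\bigl(e^{\lambda}/(1+\lambda)^{1+\lambda}\bigr)^{\mu n}$ is essentially tight for $\mu \to 0$, the exponent $\mu n\lambda/2$ is not just unreachable by this proof technique but genuinely too aggressive near $\lambda = 1$: the bound that actually holds for all $\lambda \ge 1$ has constant $1/3$ rather than $1/2$ (i.e.\ $\Pr[X > (1+\lambda)\mu n] \le \exp(-\mu n\lambda/3)$, which follows from $(1+\lambda)\ln(1+\lambda)-\lambda \ge \lambda/3$, easily checked at $\lambda=1$ and by monotonicity of $\ln(1+\lambda)-1/3$). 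If you want the stated $\lambda/2$ exponent, you must restrict to $\lambda \gtrsim 1.4$ (the paper happens to apply the lemma only with $\lambda$ well above that threshold, which is presumably why the sloppiness went unnoticed); otherwise, prove and cite the correct $\lambda/3$ form.
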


\begin{lemma}(\text{Additive Chernoff Bound}) 
\label{lemma.addchernoff}
Let $X_i$ be $n$ i.i.d. Bernoulli random variables with expectation $\mu$ each. Define $X = \sum_{i=1}^{n}{X_i}$.
Then, for $ \lambda > 0$, \\
$Pr[X < \mu n - \lambda ] < \exp(-2 {\lambda^2}/n)$. \\
$Pr[X > \mu n + \lambda] < \exp(- 2 {\lambda^2}/n)$.
\end{lemma}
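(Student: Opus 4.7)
The plan is to prove this via the standard Chernoff--Hoeffding method: exponentiate, apply Markov, bound the moment generating function, and optimize the free parameter. I focus on the upper tail $\Pr[X > \mu n + \lambda]$, since the lower tail follows by applying the same argument to the random variables $1 - X_i$ (or equivalently to $-X_i$ after re-centering).

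First, for any $t > 0$, Markov's inequality gives
\[
\Pr[X > \mu n + \lambda] = \Pr\bigl[e^{t(X - \mu n)} > e^{t\lambda}\bigr] \leq e^{-t\lambda}\, \mathbb{E}\bigl[e^{t(X - \mu n)}\bigr] = e^{-t\lambda}\prod_{i=1}^{n}\mathbb{E}\bigl[e^{t(X_i - \mu)}\bigr],
\]
where the last equality uses independence. The key step is then Hoeffding's lemma for a random variable $Y = X_i - \mu$ taking values in the bounded interval $[-\mu, 1-\mu]$ with $\mathbb{E}[Y] = 0$: one shows $\mathbb{E}[e^{tY}] \leq e^{t^2/8}$. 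The standard way to establish this is to note that $e^{ty}$ is convex in $y$, so for $y \in [a,b]$ with $a = -\mu$, $b = 1-\mu$ one has $e^{ty} \leq \tfrac{b-y}{b-a}e^{ta} + \tfrac{y-a}{b-a}e^{tb}$; taking expectations (using $\mathbb{E}[Y]=0$) yields $\mathbb{E}[e^{tY}] \leq e^{\varphi(t)}$ where $\varphi(t) = -t\mu + \log(1 - \mu + \mu e^t)$ (writing $p := 1-\mu$ in the usual derivation). A direct computation shows $\varphi(0) = \varphi'(0) = 0$ and $\varphi''(t) \leq 1/4$ for all $t$, so Taylor's theorem with remainder gives $\varphi(t) \leq t^2/8$, which is the required bound.

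Combining the pieces yields
\[
\Pr[X > \mu n + \lambda] \leq e^{-t\lambda}\, e^{nt^2/8}.
\]
Minimizing the right-hand side over $t > 0$ gives the optimal choice $t = 4\lambda/n$, which produces the bound $\exp(-2\lambda^2/n)$ as claimed. The lower-tail statement follows by applying the same argument to $Y_i = 1 - X_i$, which is Bernoulli with mean $1-\mu$, and noting that $\{X < \mu n - \lambda\} = \{\sum Y_i > (1-\mu)n + \lambda\}$.

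The main (and only real) obstacle is Hoeffding's lemma, i.e., the MGF bound $\mathbb{E}[e^{tY}] \leq e^{t^2/8}$ for a mean-zero random variable supported in an interval of length $1$. Everything else is mechanical: Markov, independence, and a one-variable optimization. An alternative, slightly looser route would be to bound $\mathbb{E}[e^{tY}] \leq 1 + t^2\mathrm{Var}(Y)/2 \cdot (\text{correction})$ directly from the two-point structure of $Y$, but the convexity/Taylor argument above is cleanest and gives the sharp constant $1/8$ needed for the $2\lambda^2/n$ exponent.
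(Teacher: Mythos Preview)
Your proof is correct and is the standard Chernoff--Hoeffding argument. The paper itself does not prove this lemma: it is listed in an appendix of concentration bounds as a known result, with no proof supplied. So there is nothing to compare against; your exponential-Markov plus Hoeffding's-lemma derivation is exactly what one would expect, and the optimization yielding $t = 4\lambda/n$ and exponent $-2\lambda^2/n$ is carried out correctly.
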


\begin{lemma}(Concentration of Geometric Random Variables) 
\label{lemma.geomchernoff}
Let $Y_i$ be $n$ i.i.d. Geometric random variables with expectation $\mu$ each. Define $Y = \sum_{i=1}^{n}{Y_i}$.
Then,  for $\lambda > 0$, 
\[ Pr[Y > (1+\lambda)\mu n ] \leq \exp(-2\lambda^2 n). \]
\end{lemma}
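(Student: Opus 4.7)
The plan is to reduce the geometric tail bound to the already-established Chernoff bound for sums of i.i.d.\ Bernoulli variables (Lemma \ref{lemma.addchernoff} / \ref{lemma.multchernoff}), via the classical duality between a sum of $n$ i.i.d.\ geometrics and the waiting time until the $n$-th success in a Bernoulli process. This is cleaner than a direct moment-generating-function calculation because all the analytic work has already been done for the Bernoulli case.

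First, I would realize each $Y_i$ as the number of i.i.d.\ Bernoulli$(p)$ trials required for the first success, where $p = 1/\mu$. Coupling these realizations across $i$, the sum $Y = \sum_{i=1}^n Y_i$ is exactly the number of Bernoulli trials needed to observe $n$ successes. The key duality, which is where all the content sits, is the tautology
\[ \{ Y > k \} \;=\; \{ X_{1} + \cdots + X_{k} < n \}, \]
where the $X_j$ are i.i.d.\ Bernoulli$(p)$. Intuitively, $Y$ exceeds $k$ precisely when the first $k$ trials have not yet produced $n$ successes.

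Second, I would instantiate the duality with $k = (1+\lambda)\mu n$ and let $X = \sum_{j=1}^{k} X_j$, so that $\E[X] = kp = (1+\lambda)n$. Then
\[ \Pr[\,Y > (1+\lambda)\mu n\,] \;=\; \Pr[\,X < n\,] \;=\; \Pr\!\left[\,X < \E[X] - \lambda n\,\right]. \]
Finally I would apply the additive Chernoff bound (Lemma \ref{lemma.addchernoff}) to the Bernoulli sum $X$ with deviation $\lambda n$, producing a bound of the form $\exp(-c \lambda^{2} n / \mu)$ for an absolute constant $c$; equivalently one can apply the multiplicative form with relative deviation $\lambda / (1+\lambda)$.

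The main obstacle is matching the constants: a vanilla application of Lemma \ref{lemma.addchernoff} gives an exponent $-2\lambda^{2} n / ((1+\lambda)\mu)$ rather than $-2n\lambda^{2}$ as stated, so the clean constant ``2'' in the lemma is a slight simplification valid in the regimes (bounded $\mu$ and $\lambda$) where the lemma is actually invoked in Section \ref{sec.proof.PA}. If one insists on the stated form, a direct computation via the moment generating function $\E[e^{tY_i}] = pe^{t}/(1-(1-p)e^{t})$, followed by optimization in $t$, would give the sharpest constant; but the Bernoulli duality route is both shorter and sufficient for every application of this lemma in the paper.
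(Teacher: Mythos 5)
The paper does not prove this lemma; it simply cites Chapter 2 of \cite{DP-09} and moves on, so there is no in-paper argument to compare against. Your Bernoulli-duality reduction is the natural route and your arithmetic is correct: with $k = (1+\lambda)\mu n$ trials and a downward deviation of $\lambda n$, the additive Chernoff bound (Lemma~\ref{lemma.addchernoff}) yields exponent $-2\lambda^2 n/((1+\lambda)\mu)$, not $-2n\lambda^2$.

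Where the proposal goes wrong is in the final hedge. The missing factor of $(1+\lambda)\mu$ in the exponent is not a benign simplification that a sharper MGF computation would recover, nor an artifact of the Bernoulli detour: for any fixed $\mu > 4/3$ and small $\lambda$, the stated bound $\exp(-2n\lambda^2)$ is actually false. Optimizing the moment generating function $\E[e^{tY_i}] = pe^t/(1-(1-p)e^t)$ with $p = 1/\mu$ gives, for small $\lambda$, the best-possible Chernoff exponent $-\frac{n\lambda^2}{2(1-p)} + O(n\lambda^3) = -\frac{n\lambda^2\mu}{2(\mu-1)} + O(n\lambda^3)$, and large-deviation theory (or a CLT sanity check, since $Y$ has variance $n\mu(\mu-1)$) shows this rate is tight to leading order. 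For $\mu = 2$ the true rate is $n\lambda^2$, and as $\mu\to\infty$ it decreases to $n\lambda^2/2$, both strictly smaller than the claimed $2n\lambda^2$. So the ``sharpest constant'' obtainable by the MGF route is $1/(2(1-p))$, not $2$, and a correct form of the lemma must carry a $\mu$-dependent denominator in the exponent, or a $\lambda$-dependent one such as the $\exp(-n\lambda^2/(2(1+\lambda)))$ that falls out of your reduction via the multiplicative Chernoff lower tail. Also note that restricting to ``bounded $\mu$'' does not save the stated form; one needs $\mu \le 4/3$, which excludes every use case of interest.

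None of this harms the paper: in its only application (phase 2 of the proof of Theorem~\ref{thm.PA}), the lemma is invoked with $n$ replaced by $\log n$ and $\lambda$ a fixed constant, so any of the correct variants above --- including the one your reduction actually establishes --- still delivers the required $o(1)$ failure probability.
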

\begin{proof}
Define $W(n,p)$ to be the a random variable for the number of independent Bernoulli experiments, with bias $p =\frac{1}{\mu}$ each, to get $n$ successes. Denote by $B(t,p)$ a Binomial random variable
on a sequence of $t$ trials and success probability $p$ on each trial. First, by definition, $Y$ has is identically distributed to $W(n,p)$. Next, it easily follows that 
\[ Pr[W(n,p) \geq t ]  =  Pr[B(t,p) \leq n ] \label{eqn:identityWB} \tag{1}~,\] 
\text{ see for example exercise $2.4$ in \cite{DP-09}}. Now set $t = \lceil (1+\lambda)\mu n \rceil$. Then using equation \eqref{eqn:identityWB} and the integrality of $W(n,p)$ we get,
\[ Pr[Y > (1+\lambda)\mu n ] =  Pr[W(n,p) >  (1+\lambda)\mu n] = Pr[W(n,p) \geq  t] = \]
\[ Pr[B(t,p) \leq  n] = Pr[B(t,p) \leq (1+\lambda)n - \lambda n].\] As $E[B(t,p)] = tp \geq (1+\lambda)n$, we get,
 \[ Pr[B(t,p) \leq  n] \leq Pr[B(t,p) \leq E[B(t,p)] - \lambda n].\] Last, by lemma~\ref{lemma.addchernoff} we get,
 \[ Pr[B(t,p) \leq E[B(t,p)] - \lambda n ] \leq \exp( - 2 ((\lambda n)^2 / n) = \exp( - 2\lambda^2 n) .\] 

\end{proof}

\end{document}